\documentclass{article}

\usepackage{amsmath, amsthm, amssymb, dsfont, mathtools}
\usepackage{natbib}
\usepackage{enumitem}
\usepackage{hyperref}
\usepackage[margin=1.3in]{geometry}
\hypersetup{colorlinks,citecolor=blue,urlcolor=blue,linkcolor=blue}

\theoremstyle{plain}
\newtheorem{theorem}{Theorem}[section]
\newtheorem{proposition}[theorem]{Proposition}

\newtheorem{lemma}[theorem]{Lemma}

\theoremstyle{definition}
\newtheorem{example}[theorem]{Example}
\newtheorem{remark}[theorem]{Remark}
\newtheorem{definition}[theorem]{Definition}

\newcommand{\E}{\mathbb{E}}    
\newcommand{\R}{\mathbb{R}}    
\newcommand{\N}{\mathbb{N}} 
\newcommand{\dd}{\mathrm{d}}     

\renewcommand{\epsilon}{\varepsilon}  

\newcommand{\cD}{\mathcal{D}}

\DeclareMathOperator*{\argmin}{argmin}   
\DeclareMathOperator*{\argmax}{argmax} 

\newcommand{\simindep}{\,{\buildrel \text{ind} \over \sim\,}}
\newcommand{\simiid}{\,{\buildrel \text{iid} \over \sim\,}}
\newcommand{\boldmu}{\boldsymbol{\mu}}
\newcommand{\boldpsi}{\boldsymbol{\psi}}
\newcommand{\boldsigma}{\boldsymbol{\sigma}}

\newcommand{\DKL}[2]{\operatorname{KL}( #1 \,||\, #2) }
\newcommand{\Norm}[1]{\left\lVert#1\right\rVert}
\newcommand{\Dhel}{\mathfrak{H}}
\newcommand{\TV}{\operatorname{TV}}

\usepackage{accents}
\newcommand{\ubar}[1]{\underaccent{\bar}{#1}}

\usepackage{nameref}

\newcommand{\id}{\mathds{1}}

\renewcommand{\d}{\mathrm{d}}
\newcommand{\p}{\mathbb{P}}
\newcommand{\q}{\mathbb{Q}}
\newcommand{\truep}{\mathbb{P}^*}

\newcommand{\cd}{\stackrel{d}{\rightsquigarrow}} 
\newcommand{\cP}{\mathcal{P}}

\newcommand{\cK}{\mathcal{K}}

\newcommand{\var}{\mathrm{var}}

\date{July, 2025}

\title{Asymptotic and compound e-values: multiple testing and empirical Bayes}

\author{
Nikolaos Ignatiadis\thanks%
 {Department of Statistics and Data Science Institute, University of Chicago.
 E-mail: \href{mailto:ignat@uchicago.edu}{ignat@uchicago.edu}.}
 \and
 Ruodu Wang\thanks%
 {Department of Statistics and Actuarial Science,
 University of Waterloo.
 E-mail: \href{mailto:wang@uwaterloo.ca}{wang@uwaterloo.ca}.}
\and  
Aaditya Ramdas\thanks%
    {Departments of Statistics \& Machine Learning,
    Carnegie Mellon University.
    E-mail: \href{mailto:aramdas@cmu.edu}{aramdas@cmu.edu}.}
}
\begin{document}

\maketitle

\begin{abstract}
We explicitly define the notions of (bona fide, approximate or asymptotic) compound p-values and e-values, which have been implicitly presented and used in the recent multiple testing literature. While it is known that the e-BH procedure with compound e-values controls the FDR,   we show the converse: every FDR controlling procedure can be recovered by instantiating the e-BH procedure with certain compound e-values. Since compound e-values are closed under averaging, this allows for combination and derandomization of arbitrary FDR procedures. 
We then connect compound e-values to empirical Bayes. In particular, we use the fundamental theorem of compound decision theory to derive the log-optimal simple separable compound e-value for testing a set of point nulls against point alternatives: it is a ratio of mixture likelihoods. 
As one example, we construct asymptotic compound e-values for multiple t-tests, where the (nuisance) variances may be different across hypotheses. Our construction may be interpreted as a data-driven instantiation of the optimal discovery procedure, and our results provide the first type-I error guarantees for the same, along with significant power gains.
\\

\noindent \textbf{Keywords:} Benjamini-Hochberg, false discovery rate, compound decision theory, optimal discovery procedure, nonparametric maximum likelihood
\end{abstract}

\section{Introduction} 
\label{sec:introduction}

E-values are an alternative inferential device to p-values. E-values encompass likelihood ratios and have provided some of the first non-trivial tests for composite null hypotheses, e.g., for testing if the data distribution is log-concave~\citep{dunn2025universal}; see \cite{ramdas2024hypothesis} for a comprehensive introduction to e-values. 
In a multiple testing setting with hypotheses indexed by $\mathcal{K}=\{1,\ldots,K\}$ and null indices $\mathcal{N} \subseteq \mathcal{K}$, e-values $E_1,\ldots,E_K$  can be used alongside the eBH (e-Benjamini-Hochberg) procedure of~\citet{wang2022false} to control the false discovery rate (FDR) of~\citet{benjamini1995controlling}. The defining property of the e-values $E_1,\ldots,E_K$ is that they are $[0,\infty]$-valued random variables satisfying $\mathbb E[E_k] \leq 1$ for all $k \in \mathcal{N}$. In this paper, we study $[0,\infty]$-valued random variables $E_1,\ldots,E_K$ satisfying the compound constraint that
\begin{equation}
\sum_{k \in \mathcal{N}} \mathbb E[E_k] \leq K
\label{eq:exact}
\end{equation}
holds exactly in finite samples, or satisfying suitable approximate or asymptotic versions of this constraint. We call $E_1,\ldots,E_K$ (bona fide, approximate, or asymptotic) \emph{compound} e-values.  

Compound e-values (without being given this name) have played an important role in the recent multiple testing literature, after being introduced (almost in passing) in~\cite{wang2022false}. As one notable example,~\citet{ren2024derandomised} solved the open problem of derandomizing the model-X knockoff filter~\citep{ candes2018panning} for variable selection with FDR control by reinterpreting the procedure in terms of compound e-values (see Appendix~\ref{subsec:derandomization} for more details on derandomization).

The central aim of this paper is to set the statistical foundations for compound e-values, to demonstrate their central role, versatility, and usefulness for multiple testing, as well as provide certain optimal constructions and examples. In Section~\ref{sec:further_defi} we explicitly define compound e-values, study properties, and provide motivating examples. Our main contributions are as follows:
\begin{itemize}[leftmargin=*, wide, labelwidth=!, labelindent=0pt]
\item \emph{Definitions of approximate and asymptotic compound e-values.} The repertoire of available data-driven constructions of compound e-values expands substantially by permitting relaxations of~\eqref{eq:exact}. The main goal of Section~\ref{sec:app-asym-com} is to define the notions of approximate compound e-values and asymptotic compound e-values with asymptotics that allow for the size of the data, as well as the number of hypotheses $K$, to grow. Since such definitions have not been provided previously even for e-values in single hypothesis testing, we first define approximate e-values in Section~\ref{sec:approx} and asymptotic e-values in Section~\ref{sec:asymp-e}, wherein we also connect them with usual limiting results in asymptotic statistics. The technical challenge herein is that we must distinguish between benign multiplicative errors, as well as tail area events that may make the expectation very large or even infinite, i.e., $\mathbb E[E] = \infty$, where $E$ is an approximate e-value. Proposition~\ref{prop:alter-p} establishes that two natural definitions of approximate e-values are equivalent. These notions are then extended to compound e-values in Sections~\ref{subsec:approx_compound} and~\ref{subsec:asymp_compound_evalues}. 
\item \emph{\nameref*{sec:multiple_testing}.} In Section~\ref{sec:multiple_testing}, we describe the fundamental role of compound e-values in multiple testing. Theorem~\ref{theo:universality_eBH} demonstrates the universality of e-BH applied to compound e-values: \emph{every} FDR procedure can be recovered by applying e-BH to compound e-values.  While there are very few results on multiple testing with asymptotic and/or compound p-values, we show that the analogous notions with e-values are naturally compatible with guarantees for existing multiple testing procedures. Section~\ref{sec:multiple_testing} provides this unifying perspective. This is why we present Section~\ref{sec:multiple_testing} (on the central role of compound e-values in multiple testing) after Section~\ref{sec:app-asym-com} (which defines approximate and asymptotic compound e-values).  In Appendix~\ref{sec:compound_mtp_addendum} which accompanies Section~\ref{sec:multiple_testing}, we explain how this framework enables combination and derandomization of multiple testing procedures (Appendix~\ref{subsec:derandomization}),  that compound e-values provide a natural notion of weights for the ep-BH procedure of~\citet{ignatiadis2024evalues} (Appendix~\ref{subsec:compound_evalues_as_weights}), and that compound e-values can be used as weights for merging p-values (Appendix~\ref{sec:compound_evalues_to_merge}).
\item  \emph{\nameref*{sec:sequence_model}.} In Section~\ref{sec:sequence_model}, we study sequence models and develop connections to compound decision theory of~\citet{robbins1951asymptotically} that motivates the nomenclature ``compound e-values.'' In Section~\ref{subsec:compound}, we provide a bird's eye overview of compound decision theory,  while in Section~\ref{subsec:best_simple_separable}, we
construct log-optimal simple separable compound e-values using the argument underlying the fundamental theorem of compound decision theory (Theorem~\ref{th:compound-optimal} provides a formal statement). These optimal compound e-values are a ratio of mixture likelihoods and their form is identical to the statistics of the optimal discovery procedure (ODP) of~\citet{storey2007optimal}. For this reason, we call them optimal discovery compound e-values.

\item \emph{Localized and compound universal inference.} In Section~\ref{subsec:compound_ui}, we extend universal inference (UI) of~\citet{wasserman2020universal} for testing composite null hypotheses and 
introduce localized UI (LUI) and compound UI (CUI) universal inference. LUI is a general construction leading to approximate e-values, while CUI leads to approximate compound e-values. 
\item \emph{\nameref*{sec:ttest}.} In Section~\ref{sec:ttest} we demonstrate the usefulness of our definitions and constructions by instantiating them in the canonical setting of multiple t-tests where the nuisance parameters (variances) may be different across hypotheses. We first consider the normal setting (normal errors) with fixed sample-size for each hypothesis but growing number of hypotheses.  Section~\ref{subsec:compound_ttest}  provides an explicit construction of CUI e-values, while Section~\ref{subsec:odp_ebayes} approximates the optimal discovery compound e-values using empirical Bayes, specifically, by pretending the variances are iid from a prior $G$ and estimating $G$ via the nonparametric maximum likelihood estimator (NPMLE) of~\citet{robbins1950generalization} and \citet{kiefer1956consistency}. Theorem~\ref{theorem:eb_asymptotic} shows that this construction yields asymptotic compound e-values as $K \to \infty$. Our construction can be interpreted as a data-driven implementation of ODP, and to the best of our knowledge, Theorem~\ref{theorem:eb_asymptotic} provides the first type-I error guarantee for data-driven ODP. The usefulness of these procedures is illustrated via a simulation study in Section~\ref{subsec:ttest_simulation} wherein we demonstrate substantial power-gains compared to alternatives such as UI and standard t-test e-values.
Section~\ref{subsec:ttest_beyond_normality} provides constructions that do not require normality: we provide a prototypical example of an asymptotic e-value (Section~\ref{subsec:prototypical_single_asym}) that requires the sample size to grow to $\infty$, and then (Section~\ref{subsec:sum_of_squares_compound}) we provide an illustrative example of how to construct asymptotic compound e-values when either the per-hypothesis sample size or the number of hypotheses (or both) tends to $\infty$.

\item \emph{Connections to p-values.} Throughout the paper, we establish connections between our definitions and results and their analogous counterparts for p-values. In Section~\ref{sec:further_defi} we define compound p-values, which have been previously called average significance controlling p-values by~\citet{armstrong2022false}. Section~\ref{sec:app-asym-com} introduces approximate, asymptotic compound and *approximate compound p-values; these concepts are new. We find that the calibration between p-values and e-values established by~\citet{vovk2021evalues} extends to all  variants introduced in this paper. We establish this correspondence across compound (Theorem~\ref{prop:compound-pe1}), approximate (Theorem~\ref{th:calibration}), asymptotic (Proposition~\ref{prop:asymptotic_calibration}), approximate compound (Proposition~\ref{proposition:approximate_evalues_to_approximate_compound}), and *approximate compound cases (Proposition~\ref{prop:calibration_additively_approximate_compound}). We also discuss results on multiple testing with compound p-values in Appendix~\ref{subsec:compound_pvalues_multiple_testing}, generalizing a result of~\citet{armstrong2022false} on the Benjamini-Yekutieli procedure (BY)~\citep{benjamini2001control} with compound p-values.
\end{itemize}

\noindent{\textbf{Remarks on notation and terminology:}} We let $\R_+=[0,\infty)$. We say that a probability measure $\p$ is atomless if there exists a uniformly distributed random variable on $[0,1]$ under $\p$. We also write $x\wedge y=\min \{x,y\}$, $x\vee y=\max\{x,y\}$, and $x_+=x\vee 0$.

\section{Compound p-values and e-values in multiple testing}
\label{sec:further_defi}

We first review the definition of e-variables and p-variables in single hypothesis testing.
Suppose we observe data $X$ drawn according to some unknown probability distribution $\truep$. 
(Here, $X$ denotes the entire dataset available to the researcher.) 
The underlying sample space $\Omega$ is left implicit. 
We define a hypothesis (null or alternative) as a set of probability measures. We reserve the notation $\mathcal{P}$ for a set of probability measures when referring to a null hypothesis. That is, we define a null hypothesis as $H: \truep \in \mathcal{P}$ and say in words that $\mathcal{P}$ contains the unknown true data generating distribution. Analogously, we reserve the notation $\mathcal{Q}$ for the set of probability measures that specify the alternative hypothesis. A \emph{p-variable} $P=P(X)$ for  hypothesis $H$ is a nonnegative random variable that satisfies $\p(P\le t)\le t$ for all $t \in [0,1]$ and all $\p \in \mathcal{P}$.  Meanwhile, an \emph{e-variable} $E=E(X)$ for hypothesis $H$ is a $[0,\infty]$-valued random variable satisfying $\mathbb E^{\p}[E]\le1$ for all $\p \in \mathcal{P}$. The realizations of e- or p-variables are called e- or p-values. As is common in the literature, we use the terms e-values (or p-values) 
to refer to both the random variables and their realizations; the intended meaning should be clear from the context.

We now extend the setting above to multiple hypothesis testing. As before, we write $X$ to denote the entire dataset available to the researcher which is generated by some unknown probability distribution $\truep$. We consider testing
$K$ hypotheses determined by sets of probability measures $\mathcal{P}_1,\ldots,\mathcal{P}_K$. For $k \in \mathcal{K} = \{1,\ldots,K\}$, the $k$-th hypothesis is defined as $H_k: \truep \in \mathcal{P}_k$. 
We write 
 $$\mathcal{N}(\p) = \{k \in \mathcal{K}\,:\,  \p \in \mathcal{P}_k\} \subseteq \mathcal{K},$$
 and hence 
 $\mathcal{N}(\truep)$ denotes the indices of true null hypotheses. In the multiple testing setting, we use the notation $\cP$ for a set of distributions that represents some assumptions or restrictions on the distributions in the multiple testing problem.

We formally define compound p-variables and e-variables as fundamental statistical concepts.

\begin{definition}[Compound p-variables and e-variables]
\label{defi:compound_evalues}
Fix the null hypotheses $(\cP_1,\dots,\cP_K)$ and a set $\cP $ of distributions.

\begin{enumerate}[label=(\roman*)]
\item Let $P_1,\dots,P_K$ be nonnegative random variables. We say that 
$P_1,\dots,P_K$ are \emph{compound} p-variables for $(\cP_1,\dots,\cP_K)$ under $\cP$ if 
$$
\sum_{k : \p \in \cP_k} \p(P_k\leq t) \leq Kt \qquad \mbox{for all $t\in (0,1)$ and all $\p\in \mathcal P$.}
$$
\item Let $E_1,\dots,E_K$ be $[0,\infty]$-valued random variables. We say that 
$E_1,\dots,E_K$ are \emph{compound} e-variables for $(\cP_1,\dots,\cP_K)$ under $\cP$ if
$$\sum_{k : \p \in \cP_k} \E^{\p}[E_k] \leq K \qquad \mbox{for all $\p\in \mathcal P$.}$$
They are called \emph{tight} compound e-variables if the supremum of the left hand side over $\p \in \cP$ equals $K$.
\end{enumerate}
\end{definition}

We omit ``under $\cP$'' in case $\cP$ is the set of all distributions.  Note that 
for all conditions to be checked in the above definition, it suffices to consider  $\p \in   \bigcup_{k \in \mathcal{K}}  \mathcal{P}_k$.
Therefore, it is without loss of generality to assume $\mathcal P \subseteq \bigcup_{k \in \mathcal{K}} \mathcal{P}_k$, and by default (without any restrictions) it is
\begin{equation*}
\mathcal{P} = \bigcup_{k \in \mathcal{K}} \mathcal{P}_k. \end{equation*} 
Note that it is possible that $\truep \notin \mathcal{P}$ as we do not know whether true null  hypotheses exist.

When $K=1$, Definition~\ref{defi:compound_evalues} coincides with the standard definition of a single p-variable or e-variable. Of course, a vector of p-variables (resp. e-variables) is a trivial but important special case of compound p-variables (resp. e-variables). As is clear from the definition, no restriction is placed on the dependence structure of the p-variables or e-variables.

The notions in Definition~\ref{defi:compound_evalues} are not new.
The notion of compound p-variables\footnote{\citet{habiger2014compound} use the term compound p-values with a different meaning; see Appendix~\ref{sec:habiger_compound}.} was introduced by~\citet{armstrong2022false} under a different name (average significance controlling p-values and tests) and further studied in an empirical Bayes setting by~\citet{ignatiadis2024empirical}.
Meanwhile, the notion of compound e-variables was already used in the literature as a technical tool, for instance, it was first used (without being given a name) in~\citet[Theorem 3]{wang2022false}. The term compound e-values first appeared in~\citet*{ignatiadis2024evalues} with a distribution-dependent upper bound of $|\mathcal N(\p)|$
instead of $K$ (typically the value of $|\mathcal N(\p)|$ is not known except $|\mathcal N(\p)| \le K$). Other authors have used the term generalized e-values~\citep{banerjee2023harnessing, bashari2023derandomized, zhao2024false, lee2024boosting} or relaxed e-values~\citep{ren2024derandomised, gablenz2024catch}.

A first example of compound p-variables (resp. compound e-variables) is given by weighted p-variables~\citep{genovese2006false, ramdas2019unified,ignatiadis2021covariate}, resp. weighted e-variables~\citep{wang2022false}.

\begin{example}[Weighted p-variables and e-variables]
    \label{ex:c8-weighted}
    Let $w_1,\ldots,w_K \geq 0$ be deterministic nonnegative numbers such that $\sum_{k \in \mathcal{K}} w_k \leq K$. 
\begin{enumerate}[label=(\roman*)]
\item Let $P_1^\prime,\dots,P_K^\prime$ be p-variables and define $\tilde{P}_k = P_k^\prime/w_k$ for $k \in \mathcal{K}$ with the convention $0/0 = 0$.  Then, $\tilde{P}_1,\dots,\tilde{P}_K$ are compound p-variables.
\item Let $E_1^\prime,\dots,E_K^\prime$ be e-variables and define $\tilde E_k = {E}_k^\prime w_k$ for $k \in \mathcal{K}$ with the convention $\infty \times 0 = \infty$. Then $\tilde E_1,\dots,\tilde E_K$ are compound e-variables. 
\end{enumerate}
\end{example}
The class of compound p-variables (resp. e-variables) is much richer than the class of weighted p-variables (resp. e-variables) in Example~\ref{ex:c8-weighted}, because the latter satisfy the stronger constraint 
$$
\sum_{k\in \mathcal K} \sup_{\p\in \mathcal{P}_k} \p(\tilde P_k \leq t) \le Kt\;\; \text{ for all }\, t\in (0,1),\,\,\qquad \text{resp.}\;\;\;\; \sum_{k\in \mathcal K} \sup_{\p\in \mathcal{P}_k} \E^\p[\tilde E_k] \le K, 
$$
compared to the condition for the compound p-variables/e-variables 
$$
 \sup_{\p\in \mathcal{P}} \sum_{k: \p \in \mathcal{P}_k } \p( P_k \leq t) \le Kt\;\; \text{ for all }\, t\in (0,1),\,\,\qquad \text{resp.}\;\;\;\; \sup_{\p\in \mathcal{P}} \sum_{k: \p \in \mathcal{P}_k } \E^\p[E_k] \le K.
$$
This follows from the fact that for any functions $x_1,\dots,x_K:\mathcal P\to \R_+$, 
$$
\sup_{\p\in \mathcal{P}} \sum_{k: \p \in \mathcal{P}_k } x_k(\p)  
\le \sum_{k\in \mathcal K} \sup_{\p\in \mathcal{P}_k} x_k(\p).
$$
Given an e-variable $E$, we can calibrate it into a p-variable via $P=(1/E)\land 1$, which is the only admissible choice of an e-to-p calibrator. Conversely, given a p-variable $P$, we can calibrate into an e-variable via a (p-to-e) calibrator in the sense of \cite{vovk2021evalues}. A (p-to-e) calibrator is a decreasing function $h:[0,\infty)\to [0,\infty]$
satisfying $\int_0^1 h(x)\dd x \le 1$ and $h(x)=0$ for $x\in (1,\infty)$. Given such $h$ and a p-variable $P$, then $E=f(P)$ is an e-variable. 
We can analogously convert compound e-variables into compound p-variables and vice versa. If $E_1,\dots,E_K$ are compound e-variables, then $(1/E_1)\land 1, \dots, (1/E_K)\land 1$  are compound p-variables, as can be seen by Markov's inequality. Compound p-variables can be transformed into compound e-variables in the following result.

\begin{theorem}
\label{prop:compound-pe1}
Let $h $ be any p-to-e calibrator.
  If  $P_1,\dots,P_K$ are compound p-variables for $(\cP_1,\dots,\cP_K)$ under $\cP$,   then
$ 
E_k =   h  ( P_k  ),\, k \in \mathcal{K},
$ 
are compound e-variables for $(\cP_1,\dots,\cP_K)$ under $\cP$.
\end{theorem}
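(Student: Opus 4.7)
The plan is to adapt the standard one-hypothesis argument that $\E^{\p}[h(P)]\leq 1$ for a p-variable $P$, by using a layer-cake representation of the expectation and then applying the compound p-variable inequality inside the integral before summing across $k$. The only non-trivial step is matching up the ``inverse'' of the decreasing calibrator $h$ with the defining hypothesis on $P_1,\dots,P_K$.

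First, I would extend the compound inequality $\sum_{k:\,\p\in\cP_k}\p(P_k\leq t)\leq Kt$ from $t\in(0,1)$ to all $t\geq 0$: letting $t\downarrow 0$ and using right-continuity of distribution functions handles $t=0$, while for $t\geq 1$ the left-hand side is bounded by the number of summands, at most $K\leq Kt$. Next, fix $\p\in\cP$ and introduce the generalized inverse $g(s)=\sup\{x\geq 0:h(x)>s\}$ (with the convention $\sup\emptyset=0$); since $h$ vanishes on $(1,\infty)$, we have $g(s)\in[0,1]$ for every $s>0$. Monotonicity of $h$ gives the pointwise inclusion $\{h(P_k)>s\}\subseteq\{P_k\leq g(s)\}$, so $\p(h(P_k)>s)\leq \p(P_k\leq g(s))$.

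The rest is a direct computation combining layer-cake, Tonelli, the compound p-variable inequality, and Fubini:
\begin{align*}
\sum_{k:\,\p\in\cP_k}\E^{\p}[h(P_k)]
&=\sum_{k:\,\p\in\cP_k}\int_0^\infty \p(h(P_k)>s)\,\dd s\\
&\leq \int_0^\infty \sum_{k:\,\p\in\cP_k}\p(P_k\leq g(s))\,\dd s\\
&\leq \int_0^\infty K\,g(s)\,\dd s \;=\; K\int_0^1 h(x)\,\dd x \;\leq\; K,
\end{align*}
where the compound p-variable inequality is applied with $t=g(s)\in[0,1]$, and the identity $\int_0^\infty g(s)\,\dd s=\int_0^1 h(x)\,\dd x$ comes from Fubini applied to $\int\int \id\{h(x)>s\}\,\dd s\,\dd x$ together with the vanishing of $h$ on $(1,\infty)$.

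The main point to be careful about is the strict/weak-inequality bookkeeping around the generalized inverse $g$; since $h$ is monotone, the event inclusion above holds unconditionally, so no delicate truncation or continuity assumption on $h$ is needed, and the bound is sharp in the same sense as in the single-hypothesis case.
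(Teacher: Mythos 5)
Your proof is correct, and it takes a genuinely different route from the paper's. The paper reduces the compound statement to the single-hypothesis calibration fact by a randomization device: it draws a uniformly random null index $L$ and a uniform $U$ to form an auxiliary scalar $\tilde P = P_L\,\id\{U \le K_0/K\} + \id\{U > K_0/K\}$, verifies that $\tilde P$ is a bona fide p-variable, and then bounds $\frac{1}{K}\sum_{k:\p\in\cP_k}\E^\p[h(P_k)]$ by $\E^\p[h(\tilde P)] \le 1$. You instead compute directly: layer-cake for each $\E^\p[h(P_k)]$, the event inclusion $\{h(P_k) > s\} \subseteq \{P_k \le g(s)\}$ via the generalized inverse $g$, Tonelli to swap sum and integral, the compound inequality at $t=g(s)\in[0,1]$, and the Fubini identity $\int_0^\infty g(s)\,\dd s = \int_0^1 h(x)\,\dd x \le 1$. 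Your extension of the compound inequality to the endpoints $t\in\{0,1\}$, needed because $g(s)$ can hit those values, is correct and worth flagging as you do. The two arguments buy slightly different things: the paper's randomization proof is short and immediately reusable in the approximate-compound calibration result (Proposition~\ref{th:calibration_approximate_compound}), where the conditioning-on-an-event structure makes the auxiliary-variable device convenient; your integral proof is more elementary and self-contained, requires no enlargement of the probability space, and makes no reference (even cosmetically) to atomlessness of $\p$. Both are valid; yours is arguably the cleaner proof of this particular theorem in isolation.
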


We can combine p-variables and compound e-variables into compound p-variables as follows.

\begin{example}[Combining p-variables and compound e-variables]
Suppose that $P_k$ is a p-variable for $\cP_k$ for all $k \in \mathcal{K}$; $P_1,\dots,P_K$ can be arbitrarily dependent. Moreover, let $E_1,\dotsc,E_K$ be compound e-variables for $(\cP_1,\dots,\cP_K)$ under $\cP$. If $(E_1,\dots,E_K)$ is independent of $(P_1,\dots,P_K)$, then $P_1/E_1,\dots,P_K/E_K$ are compound p-variables for $(\cP_1,\dots,\cP_K)$ under $\cP$. We revisit this example in Appendix~\ref{subsec:compound_evalues_as_weights}.
\end{example}

We conclude this section with two general constructions of compound e-variables.

\begin{example}[Convex combinations of compound e-variables]
    \label{exam:convex_combi}
    Suppose that \smash{$E_1^{(\ell)}, \dots, E_K^{(\ell)}$} are compound e-variables for $\ell=1,\ldots,L$. Let $w_1,\dotsc,w_L \geq 0$ be deterministic nonnegative numbers such that $\sum_{\ell=1}^L w_\ell = 1$. Define \smash{$E_k = \sum_{\ell=1}^L w_\ell E_k^{(\ell)}$}. Then $E_1,\dots,E_K$ are compound e-variables. Also see~\citet[Proposition 5]{li2025note}.
\end{example}

\begin{example}[Combining evidence from follow-up studies]
Let $E_1,\dots,E_K$ be compound e-variables calculated from data $X$. Based on these observations, suppose that we select a subset of indices $\mathcal{S} \subseteq \mathcal{K}$ of interest, for which we seek additional evidence. With this goal in mind, suppose we collect additional data $Y$ and summarize it in the form of e-variables $\{F_j\}_{j \in \mathcal{S}}$ that are conditionally valid given the past data; in particular, $\mathbb E^{\p}[F_j | E_j] \leq 1$ for all $j \in \mathcal{S}$ with $\p \in \mathcal{P}_j$. Then defining $F_j = 1$ for $j \notin \mathcal{S}$, we have that $E_1F_1, \dots, E_KF_K$ are also compound e-variables.
\end{example}

\section{Approximate/asymptotic compound p-values and e-values}
\label{sec:app-asym-com}

This section relaxes Definition~\ref{defi:compound_evalues} to accommodate approximate and asymptotic variants of compound p-variables and e-variables. We begin by introducing the underlying concepts of approximate p-variables and e-variables (Section~\ref{sec:approx}) and asymptotic p-variables and e-variables (Section~\ref{sec:asymp-e}), then extend these to the compound setting in Section~\ref{subsec:approx_compound} (approximate compound) and Section~\ref{subsec:asymp_compound_evalues} (asymptotic compound).

\subsection{Approximate p-values and e-values}
\label{sec:approx}

\begin{definition}[Approximate p-variables and e-variables]
\label{defi:approx}
Fix the null hypothesis $\cP$ and two functions $\varepsilon:\cP \to \R_+ $ and $\delta:\cP \to [0,1]$, and write $\varepsilon_\p=\varepsilon(\p)$
and $\delta_\p=\delta(\p)$ for $\p \in \cP$. 
\begin{enumerate}[label=(\roman*)]
    \item 
A nonnegative random variable $P$ is an \emph{$(\varepsilon, \delta)$-approximate} p-variable for $\cP$ if 
$$\p(P\le t) \le  (1+\varepsilon_\p )t +\delta_\p \qquad \mbox{for all $t\in (0,1)$ and all $\p\in \mathcal P$.} $$
\item 
A $[0,\infty]$-valued random variable $E$ is an \emph{$(\varepsilon, \delta)$-approximate} e-variable for $\cP$ if  
 $$\E^\p [E\wedge t] \le  1+\varepsilon_\p  +\delta_\p t  \qquad \mbox{for all $t\in \R_+$ and all $\p\in \mathcal P$.}$$ 
\end{enumerate}
\end{definition}

For  two constants $\epsilon \in \R_+$ and $\delta\in [0,1] $, 
  $(\epsilon,\delta)$-approximation is understood as those in Definition~\ref{defi:approx} with the corresponding constant functions.

The notion of $(\varepsilon, 0)$-approximate e-variables (not using this terminology) was considered by~\citet[Section 6.2]{wang2022false}.

We make some straightforward observations. Clearly,  $(0,0)$-approximate p-variables (e-variables) are just p-variables (e-variables).
For a constant $\epsilon\in \R_+$, $P$ is an $(\epsilon,0)$-approximate p-variable if and only if $(1+\epsilon)P$ is a p-variable,
and 
$E$ is an $(\epsilon,0)$-approximate e-variable if and only if $E/(1+\epsilon) $ is an e-variable. 
The $\delta$-parameter is more lenient than the $\epsilon$-parameter; for instance, $(\epsilon,\delta)$-approximation implies $(0,\epsilon+\delta)$-approximation in both classes (see Proposition~\ref{prop:varepsilon_to_delta} below for a sharper result).
If $\delta=1$, then $(\epsilon,\delta)$-approximation puts no requirements on $P$ or $E$.

The definitions of approximate p-variables and approximate e-variables can be interpreted as allowing for a  wiggle room  in the definitions of p-variables and e-variables in two ways: (a) the probability and expectation requirements   have a multiplicative factor $1+\epsilon_\p$; (b) the requirements are only required to hold on an event of probability $1-\delta_\p$. 
This will be made  transparent  through the following proposition. 
 
\begin{proposition}
\label{prop:alter-p}
Let $\varepsilon:\cP \to \R_+ $ and $\delta:\cP \to [0,1]$ and $P$ and $E$ be nonnegative random variables ($[0,\infty]$-valued for $E$). For the following  statements,
\begin{enumerate} 
    \item[(P1)] $P$ is an $(\varepsilon, \delta)$-approximate p-variable for $\cP$;
\item[(P2)]  
 for every $\p \in \cP$,  there exists an event $A$ such that
$$\p(P\le t , A) \le  (1+\varepsilon_\p )t ~\mbox{for all $t\in (0,1)$ ~~and~~ }\p(A) \geq 1-\delta_\p;$$
  \item[(E1)]   $E$ is an $(\varepsilon, \delta)$-approximate e-variable for $\cP$;
\item[(E2)]   for every $\p \in \cP$, 
there exists an event $A$ such that
$$ \E^{\p}[E\id_{A}] \leq 1+\varepsilon_\p \mbox{  ~~and~~ }\p(A) \geq 1-\delta_\p;$$
\end{enumerate}
we have that (P2) implies (P1) and  (E2) implies (E1).
Moreover, if  each $\p\in \cP$ is atomless, then (P1)--(P2)
  are equivalent,
  and (E1)--(E2)
  are equivalent.
\end{proposition}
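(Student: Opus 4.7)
The plan is to treat the two pairs of equivalences separately. The forward directions (P2)$\Rightarrow$(P1) and (E2)$\Rightarrow$(E1) are immediate decompositions along $A$ and $A^c$, requiring no richness on $\p$. For the first, I would split $\p(P\le t) = \p(P\le t, A) + \p(P\le t, A^c)$ and apply both parts of (P2). For the second, I would use the pointwise inequality $E\wedge t \le E\,\id_A + t\,\id_{A^c}$, take $\p$-expectation, and combine $\E^\p[E\id_A]\le 1+\varepsilon_\p$ with $\p(A^c)\le \delta_\p$ to obtain the (E1) bound.

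The substantive content lies in (P1)$\Rightarrow$(P2) and (E1)$\Rightarrow$(E2), where atomlessness enters. Fix $\p\in\cP$ and write $F(t) = \p(P\le t)$. The idea for the p-variable direction is to place the lowest $\delta_\p$ mass of $P$ into $A^c$. I set $s^* = \inf\{t : F(t)\ge \delta_\p\}$ and build $A^c$ as $\{P < s^*\}$ together with a measurable sub-event of the atom $\{P = s^*\}$ of mass $\delta_\p - F(s^*-)$; atomlessness of $\p$ supplies this sub-event, since an atomless measure splits any positive-probability event into pieces of arbitrary prescribed measure. Then $\p(A^c) = \delta_\p$, and for $t < s^*$ the event $\{P\le t\}\cap A$ is empty, while for $t\ge s^*$ one gets $\p(P\le t, A) = F(t) - \delta_\p \le (1+\varepsilon_\p) t$ directly from (P1).

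For the e-variable direction, the symmetric strategy is to put the largest $\delta_\p$ mass of $E$ into $A^c$. Setting $s^* = \inf\{s : \p(E>s)\le \delta_\p\}$ and using atomlessness to split the atom at $s^*$ if needed, I build $A^c$ of total mass exactly $\delta_\p$ with $E\ge s^*$ on $A^c$ and $E\le s^*$ on $A$. The key calculation is
\[
\E^\p[E\id_A] = \E^\p[(E\wedge s^*)\id_A] = \E^\p[E\wedge s^*] - s^*\,\p(A^c) \le 1 + \varepsilon_\p + \delta_\p s^* - \delta_\p s^* = 1+\varepsilon_\p,
\]
invoking (E1) at $t=s^*$. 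The degenerate case $s^* = \infty$ forces $\p(E=\infty)\le \delta_\p$ (obtained by dividing the (E1) bound by $s$ and sending $s\to\infty$); here I would take $A^c$ to be a sub-event of $\{E=\infty\}$ of mass $\delta_\p$ and conclude by letting $s\to\infty$ in $\E^\p[E\id_{\{E\le s\}}] = \E^\p[E\wedge s] - s\,\p(E>s)$, using $\p(E>s) \ge \p(E=\infty)$ toward the tail.

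The main obstacle is not the algebra but the possible atom at the threshold $s^*$: without atomlessness, the construction yields $\p(A^c)$ either strictly below $\delta_\p$ (losing a factor in the probability bound on $A$) or strictly above $\delta_\p$ (violating the mass constraint). Atomlessness of $\p$ is used solely to split one atom into a measurable sub-event of prescribed measure, and this is exactly what the paper's definition of atomless (existence of a uniform random variable on $(\Omega,\p)$) supplies.
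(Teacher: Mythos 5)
Your proof is correct and follows the same underlying strategy as the paper: for the non-trivial implications, construct $A$ so that $A^c$ carries exactly $\delta_\p$ of mass at one extreme of the statistic ($P$ small, $E$ large), splitting the atom at the threshold using atomlessness, and then apply the approximation bound at that threshold. The paper delegates the construction of the sandwich event to Lemma~A.3 of \cite{wang2021axiomatic} and then simplifies via $\p(P\le t,A^c)=\min\{\p(P\le t),\p(A^c)\}$; you build the event by hand via atom-splitting, which makes the argument self-contained. Your explicit treatment of the degenerate threshold $s^* = \infty$ is worth noting: that lemma is stated for real-valued random variables with a finite threshold $t$, whereas $E$ is $[0,\infty]$-valued, so when $\p(E = \infty) = \delta_\p$ no finite $t$ sandwiches $A$, and your monotone-convergence pass in $s$ fills a gap the paper's proof passes over silently. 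One small wording slip: you write that $s^* = \infty$ ``forces $\p(E=\infty)\le\delta_\p$,'' but that inequality is a consequence of (E1) alone; what $s^*=\infty$ adds is $\p(E=\infty)\ge\delta_\p$, giving equality, which is what you actually use. This does not affect the correctness of the argument.
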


Without the atomless assumption on each $\p\in \cP$, the implications (P1)$\Rightarrow$(P2)
and (E1)$\Rightarrow$(E2)
in  Proposition~\ref{prop:alter-p}
do not hold (one can find counterexamples on discrete spaces). We emphasize that the atomless assumption does not pertain to the distribution of $P$ or $E$ (i.e., $P$ or $E$ can be   discrete). This atomless assumption can be avoided if external randomization is permitted.

The two formulations (P1) and (P2) in the above proposition are obviously equivalent when $\delta=0$ without requiring any condition; the same holds for (E1) and (E2).

The next theorem demonstrates that we can calibrate approximate p-variables and approximate e-variables in the usual way. For example, if $h $ is any p-to-e calibrator and $P$ is an $(\varepsilon,\delta)$-approximate p-variable, then $h(P)$ is an ($\varepsilon,\delta)$-approximate e-variable.

\begin{theorem}
\label{th:calibration}
 Let $\varepsilon:\cP \to \R_+ $ and $\delta:\cP \to [0,1]$. 
    Calibrating an $(\varepsilon, \delta)$-approximate  p-variable  yields an $(\varepsilon, \delta)$-approximate e-variable, and vice versa.
\end{theorem}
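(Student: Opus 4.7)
The plan is to handle the two directions separately, adapting the classical p-to-e and e-to-p calibration arguments in a quantitative way.

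\textbf{p-to-e direction.} Let $P$ be an $(\varepsilon,\delta)$-approximate p-variable and $h$ a p-to-e calibrator; set $E=h(P)$ and fix $\p\in\cP$ and $t\in\R_+$. The layer-cake identity gives
$$
\E^{\p}[E\wedge t]=\int_0^t \p(h(P)>s)\,\d s.
$$
Introduce the decreasing generalized inverse $h^{-1}(s):=\sup\{x\ge 0:h(x)>s\}$. Because $h$ vanishes on $(1,\infty)$, $h^{-1}(s)\in[0,1]$, and since $h$ is decreasing, $\{h(P)>s\}\subseteq\{P\le h^{-1}(s)\}$. Applying the approximate p-variable bound on the event $\{P\le h^{-1}(s)\}$ (with continuity of measure handling $h^{-1}(s)=0$, giving $\p(P\le 0)\le \delta_\p$, and the trivial bound handling $h^{-1}(s)=1$) yields $\p(h(P)>s)\le (1+\varepsilon_\p)h^{-1}(s)+\delta_\p$ for all $s\ge 0$. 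Integrating and invoking Fubini in the form $\int_0^\infty h^{-1}(s)\,\d s=\int_0^1 h(x)\,\d x\le 1$ gives
$$
\E^{\p}[E\wedge t]\le (1+\varepsilon_\p)\int_0^\infty h^{-1}(s)\,\d s+\delta_\p t\le 1+\varepsilon_\p+\delta_\p t,
$$
which is the approximate e-variable condition.

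\textbf{e-to-p direction.} Let $E$ be an $(\varepsilon,\delta)$-approximate e-variable and set $P=(1/E)\wedge 1$ (the unique admissible e-to-p calibrator). For $t\in(0,1)$, the equivalence $\{P\le t\}=\{E\ge 1/t\}=\{E\wedge(1/t)\ge 1/t\}$ and Markov's inequality applied to the truncated random variable $E\wedge(1/t)$ give, using the approximate e-variable bound with the truncation level $1/t$,
$$
\p(P\le t)\le t\,\E^{\p}[E\wedge(1/t)]\le t\bigl(1+\varepsilon_\p+\delta_\p/t\bigr)=(1+\varepsilon_\p)t+\delta_\p,
$$
as required.

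\textbf{Main obstacle.} Neither direction hides a real difficulty; the only subtlety is that the approximate p-variable condition is stated only for $t\in(0,1)$, so in the p-to-e step one must separately handle the sets of $s$ with $h^{-1}(s)=0$ (via continuity of measure) and $h^{-1}(s)=1$ (trivially). The remainder is a book-keeping exercise in which the parameters $\varepsilon_\p$ and $\delta_\p$ pass through unchanged, reflecting that the multiplicative and additive slacks in Definition~\ref{defi:approx} play symmetric roles for p- and e-variables.
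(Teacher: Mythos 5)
Your proof is correct. The e-to-p direction is essentially identical to the paper's: write $\{P\le t\}=\{E\wedge(1/t)\ge 1/t\}$ and apply Markov's inequality to the truncated variable. The p-to-e direction takes a genuinely different route. The paper constructs an explicit dominating cdf $F$ on $[0,1/b]$ with $b=(1+\varepsilon_\p)/(1-\delta_\p)$, having a point mass $\delta_\p$ at $0$ and density $1+\varepsilon_\p$ on $(0,1/b]$; it then invokes first-order stochastic dominance for the decreasing integrand $h\wedge t$ to get $\E^\p[h(P)\wedge t]\le\int (h\wedge t)\,\d F$, and the mass at $0$ delivers the $\delta_\p t$ term directly. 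You instead use the layer-cake identity for $\E^\p[h(P)\wedge t]$, push through the generalized inverse $h^{-1}$, bound $\p(h(P)>s)\le (1+\varepsilon_\p)h^{-1}(s)+\delta_\p$ pointwise, and invoke the Fubini identity $\int_0^\infty h^{-1}=\int_0^1 h\le 1$. Your version avoids the stochastic-dominance lemma and the (mildly awkward) $\delta_\p=1$ edge case that the paper handles separately (your argument works verbatim there, giving the trivially true bound), at the small cost of needing the limiting-measure argument for $\p(P\le 0)\le\delta_\p$ when $h^{-1}(s)=0$, which you flag. Both arguments are clean and elementary; yours is arguably slightly more self-contained, while the paper's makes the source of the $\delta_\p t$ term (the atom at $0$ of the dominating cdf) a bit more visible.
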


We end this section with the following comment. The roles of $\varepsilon$ and $\delta$ are asymmetric  and, as mentioned earlier, the wiggle room created by $\delta$ is more lenient. The following result shows that an $(\varepsilon, \delta)$-approximate p-variable (or e-variable) with $\varepsilon >0$ is always also an $(0, \delta')$-approximate p-variable (or e-variable) where $\delta'$ can be chosen strictly smaller than $\varepsilon + \delta$.

\begin{proposition}[$\varepsilon$ to $\delta$]
\label{prop:varepsilon_to_delta}
Let  $\varepsilon:\cP \to \R_+ $, $\delta:\cP \to [0,1]$, and $\delta' = (\varepsilon + \delta)/(1+\varepsilon)$.
\begin{enumerate}[label=(\roman*)]
\item An ($\varepsilon, \delta$)-approximate p-variable for $\mathcal{P}$ is also an $(0, \delta')$-approximate p-variable for $\mathcal{P}$.
\item An ($\varepsilon, \delta$)-approximate e-variable for $\mathcal{P}$ is also an $(0, \delta')$-approximate e-variable for $\mathcal{P}$.
\end{enumerate}
\end{proposition}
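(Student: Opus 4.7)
The plan is to establish each part by case analysis on $t$, splitting into a regime where the given $(\varepsilon,\delta)$ bound is informative and a regime where a trivial bound suffices. The threshold in each case is chosen so that the two bounds match.

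For part (i) on p-variables, fix $\p \in \mathcal{P}$ and set $t^* = (1-\delta_\p)/(1+\varepsilon_\p)$, noting $t^* \in [0,1]$. For $t \leq t^*$, I would use the assumed bound $\p(P \leq t) \leq (1+\varepsilon_\p)t + \delta_\p$ and verify algebraically that this is at most $t + \delta_\p'$; this reduces to $\varepsilon_\p t \leq \delta_\p' - \delta_\p = \varepsilon_\p(1-\delta_\p)/(1+\varepsilon_\p)$, i.e., $t \leq t^*$. For $t > t^*$, I would use the trivial bound $\p(P \leq t) \leq 1$ and verify $1 \leq t + \delta_\p'$; this reduces to $t \geq 1 - \delta_\p' = t^*$, which holds. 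The edge case $\delta_\p = 1$ forces $\delta_\p'=1$ and both bounds are vacuous.

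For part (ii) on e-variables, the argument is dual. Fix $\p \in \mathcal{P}$ and set $t^* = (1+\varepsilon_\p)/(1-\delta_\p)$ (interpreted as $+\infty$ when $\delta_\p=1$). For $t \leq t^*$, I would use the trivial bound $\E^\p[E \wedge t] \leq t$ and verify $t \leq 1 + \delta_\p' t$, which rearranges to $(1 - \delta_\p') t \leq 1$, i.e., $t \leq 1/(1-\delta_\p') = t^*$ after computing $1 - \delta_\p' = (1-\delta_\p)/(1+\varepsilon_\p)$. For $t > t^*$, I would use the assumed bound $\E^\p[E\wedge t] \leq 1 + \varepsilon_\p + \delta_\p t$ and verify this is at most $1 + \delta_\p' t$; this reduces to $\varepsilon_\p \leq (\delta_\p' - \delta_\p) t = \varepsilon_\p(1-\delta_\p)t/(1+\varepsilon_\p)$, i.e., $t \geq t^*$, which holds.

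There is no real obstacle here beyond identifying the crossover thresholds and checking the inequality $1 - \delta_\p' = (1-\delta_\p)/(1+\varepsilon_\p)$ that ties the two regimes together. The proof is essentially a direct verification once the decomposition is in place, and the statement extends naturally to the two constant-function cases $\varepsilon_\p \equiv \varepsilon$ and $\delta_\p \equiv \delta$.
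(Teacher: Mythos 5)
Your proof is correct and follows essentially the same route as the paper: for part (i) you split at $t^* = (1-\delta_\p)/(1+\varepsilon_\p)$ exactly as the paper does, and for part (ii) you carry out the "analogous" argument the paper leaves implicit, splitting at the dual threshold $(1+\varepsilon_\p)/(1-\delta_\p)$ and using the trivial bound $\E^\p[E\wedge t]\le t$ below it and the assumed bound above it.
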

Proposition~\ref{prop:varepsilon_to_delta} also suggests that controlling $\delta$ is a stronger requirement, leading to two different notions of asymptotic e-values and p-values in Section~\ref{sec:asymp-e}.

\subsection{Asymptotic p-values and e-values}
\label{sec:asymp-e}

In this section
we consider a sequence of datasets $X=X^{(n)}$ of growing size as $n \to \infty$. Based on $X^{(n)}$ we compute a $[0,1]$-valued statistic $P^{(n)} = P^{(n)}(X^{(n)})$.  
Similarly, we can compute a nonnegative extended random variable $E^{(n)}= E^{(n)}(X^{(n)})$. The following definitions give some intuitive conditions on $P^{(n)}$ and $E^{(n)}$ for them to be considered as p-variables and e-variables in an asymptotic sense.

\begin{definition}[Asymptotic p-variables and e-variables]
\label{defi:asymp_evariables_pvariables}
A sequence $(E^{(n)})_{n\in\N}$ of $[0,\infty]$-valued random variables is called a sequence of:
\begin{itemize}
    \item  \emph{asymptotic} e-variables for $\mathcal P$,  if  for each $n$, $E^{(n)}$ is $(\varepsilon_n, \delta_n)$-approximate for $\mathcal P$ 
    for some $(\varepsilon_n, \delta_n)$, satisfying  
    \begin{equation}\label{eq:asymp-e} \mbox{for each $\p\in \mathcal P$,}\quad  \lim_{n\to \infty}  \varepsilon_n(\p) = \lim_{n\to \infty} \delta_n(\p) = 0;
    \end{equation}
    
    \item  \emph{strongly asymptotic} e-variables for $\mathcal P$, if condition \eqref{eq:asymp-e} is replaced by   
$$    \mbox{for each $\p\in \mathcal P$,} \quad \lim_{n\to \infty}   \varepsilon_n(\p) = 0 \quad \mbox{~and~} \quad  
 \delta_n(\p) = 0 \mbox{~for $n$ large enough;}$$ 
 
    \item \emph{uniformly asymptotic} e-variables for $\mathcal P$, if 
    condition \eqref{eq:asymp-e} is replaced by  $$ 
    \lim_{n\to \infty} \sup_{\p \in \cP} \varepsilon_n(\p) = \lim_{n\to \infty} \sup_{\p \in \cP} \delta_n(\p) = 0;$$     
    \item  \emph{uniformly strongly asymptotic} e-variables for $\mathcal P$, if condition \eqref{eq:asymp-e} is replaced by $$\lim_{n\to \infty} \sup_{\p \in \cP} \varepsilon_n(\p) = 0 \quad \mbox{~and~} \quad  \sup_{\p \in \cP} 
 \delta_n(\p) = 0 \mbox{~for $n$ large enough;}$$

\end{itemize}
 Sequences $(P^{(n)})_{n\in\N}$ of  (uniformly, strongly) asymptotic p-variables for $\mathcal P$ are defined analogously.

\end{definition}

We would also say ``asymptotic e-values'' (or p-values) in non-mathematical statements, and they can be interpreted as asymptotic e-variables (or p-variables) or their realizations. 
We sometimes   simply say that a sequence is asymptotic for $\cP$, and it should be clear from context whether we meant  a sequence of asymptotic e-variables ($[0,\infty]$-valued) or that of asymptotic p-variables  ($\R_+$-valued).\footnote{
\citet{liu1997notions} use the terminology ``limiting p-values'' for asymptotic p-values and ``limiting p-values in the strong sense'' for uniformly asymptotic p-values.
}
Clearly, if $\mathcal P$ is finite, then asymptotic e-variables or p-variables are also uniformly asymptotic.

We now study conditions and properties of asymptotic p-variables and e-variables. Our first result offers intuition regarding the formulation of strongly asymptotic e-variables.

\begin{proposition}
\label{prop:asym-e}
The sequence  $(E^{(n)})_{n\in\N}$ of   $[0,\infty]$-valued random variables is uniformly strongly asymptotic  for $\mathcal P$ if and only if $$\limsup_{n \to \infty} \sup_{\p\in \cP} \E^\p[E^{(n)}]\le 1 .$$ Similarly, it is   strongly asymptotic  for $\mathcal{P}$ if and only if $\limsup_{n \to \infty} \E^\p[E^{(n)}]\le 1$ for all $\p\in  \cP$.
\end{proposition}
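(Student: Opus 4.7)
The plan is to exploit a single clean observation: in the definition of an $(\epsilon,\delta)$-approximate e-variable, when $\delta_\p=0$ the inequality $\E^\p[E\wedge t]\le 1+\epsilon_\p+\delta_\p t$ collapses to $\E^\p[E\wedge t]\le 1+\epsilon_\p$ uniformly in $t\ge 0$, and by monotone convergence $\E^\p[E\wedge t]\uparrow \E^\p[E]$ as $t\to\infty$, so this is equivalent to $\E^\p[E]\le 1+\epsilon_\p$. Because the (uniformly) strongly asymptotic notion is precisely the one that forces $\delta_n$ to vanish eventually, this equivalence drives both directions.

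For the forward direction, suppose $(E^{(n)})$ is strongly asymptotic, with witnesses $(\epsilon_n,\delta_n)$. Fix $\p\in\cP$; by definition there exists $N(\p)$ with $\delta_n(\p)=0$ for $n\ge N(\p)$, so the observation above gives $\E^\p[E^{(n)}]\le 1+\epsilon_n(\p)$ for $n\ge N(\p)$. Since $\epsilon_n(\p)\to 0$, taking the limsup yields $\limsup_{n\to\infty}\E^\p[E^{(n)}]\le 1$. The uniform version is identical: the hypothesis provides a single $N$ valid for all $\p$ such that $\sup_\p\delta_n(\p)=0$ for $n\ge N$, hence $\sup_\p\E^\p[E^{(n)}]\le 1+\sup_\p\epsilon_n(\p)$, and $\sup_\p\epsilon_n(\p)\to 0$.

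For the backward direction, suppose $\limsup_{n\to\infty}\E^\p[E^{(n)}]\le 1$ for every $\p\in\cP$. For each $\p$, choose $N(\p)$ so that $\E^\p[E^{(n)}]\le 2$ (in particular is finite) for $n\ge N(\p)$. Define $\delta_n(\p)=0$ and $\epsilon_n(\p)=(\E^\p[E^{(n)}]-1)_+$ for $n\ge N(\p)$, and set $\delta_n(\p)=1$, $\epsilon_n(\p)=0$ for $n<N(\p)$. The approximate e-variable inequality holds in both regimes: for $n\ge N(\p)$ it is $\E^\p[E^{(n)}\wedge t]\le\E^\p[E^{(n)}]\le 1+\epsilon_n(\p)$; for $n<N(\p)$ it is the trivial bound $\E^\p[E^{(n)}\wedge t]\le t\le 1+t$. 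Moreover $\delta_n(\p)=0$ for all $n\ge N(\p)$ and $\epsilon_n(\p)=(\E^\p[E^{(n)}]-1)_+\to 0$, so $(E^{(n)})$ is strongly asymptotic. For the uniform statement, the same construction works globally: the hypothesis $\limsup\sup_\p\E^\p[E^{(n)}]\le 1$ supplies a common $N$, and $\sup_\p\epsilon_n(\p)=(\sup_\p\E^\p[E^{(n)}]-1)_+\to 0$ since $x\mapsto(x-1)_+$ is monotone.

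There is no substantial obstacle; the only point that requires care is the early part of the sequence, where $\E^\p[E^{(n)}]$ could be infinite. This is handled by temporarily setting $\delta_n(\p)=1$, which makes the approximate condition vacuously true ($\E^\p[E^{(n)}\wedge t]\le t$), so the infinite-expectation regime is absorbed into the pre-$N$ window without affecting the eventual behaviour of $\epsilon_n$ and $\delta_n$.
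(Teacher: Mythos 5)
Your proof is correct and uses essentially the same approach as the paper. The key observation—that when $\delta_n(\p)=0$ the approximate e-variable condition reduces (via monotone convergence in $t$) to $\E^\p[E^{(n)}]\le 1+\epsilon_n(\p)$—is exactly what drives the paper's argument, and your choice $\epsilon_n(\p)=(\E^\p[E^{(n)}]-1)_+$ in the backward direction matches the paper's construction. The only difference is cosmetic: you explicitly patch the early part of the sequence with $(\epsilon_n,\delta_n)=(0,1)$ for $n<N(\p)$ (where $\E^\p[E^{(n)}]$ could be infinite), while the paper leaves this implicit and simply notes the construction works for $n$ large enough. Your version is a bit more careful on this point but not substantively different.
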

The result below connects asymptotic e-variables and strongly asymptotic e-variables 
via uniform integrability. 

\begin{proposition}
\label{proposition:uniform_integrability_strong_e}
If a sequence of asymptotic e-variables is  uniformly integrable for each $\p\in \cP$, then it is  strongly asymptotic.
\end{proposition}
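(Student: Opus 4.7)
The plan is to reduce to the characterization in Proposition~\ref{prop:asym-e}: it suffices to show that for each $\p\in\cP$, $\limsup_{n\to\infty} \E^\p[E^{(n)}] \le 1$. Uniform integrability provides a uniform control on the tail contribution, while the approximate e-variable property, applied at a finite truncation level $t$, controls the bounded part.

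Fix $\p\in\cP$ and $\eta>0$. By uniform integrability, choose $t_\eta\in\R_+$ large enough that $\sup_n \E^\p[E^{(n)} \id_{\{E^{(n)} > t_\eta\}}] \le \eta$. For every $n$, split
$$\E^\p[E^{(n)}] = \E^\p[E^{(n)} \id_{\{E^{(n)} \le t_\eta\}}] + \E^\p[E^{(n)} \id_{\{E^{(n)} > t_\eta\}}] \le \E^\p[E^{(n)} \wedge t_\eta] + \eta.$$
Since $(E^{(n)})$ is asymptotic, it is $(\varepsilon_n,\delta_n)$-approximate with $\varepsilon_n(\p),\delta_n(\p)\to 0$, so Definition~\ref{defi:approx}(ii) applied at the fixed level $t_\eta$ yields
$$\E^\p[E^{(n)} \wedge t_\eta] \le 1 + \varepsilon_n(\p) + \delta_n(\p) t_\eta.$$
Taking $\limsup_{n\to\infty}$ and using that $t_\eta$ is fixed gives $\limsup_{n\to\infty}\E^\p[E^{(n)}] \le 1 + \eta$. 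Letting $\eta\downarrow 0$ yields $\limsup_{n\to\infty}\E^\p[E^{(n)}] \le 1$, and Proposition~\ref{prop:asym-e} identifies this as strong asymptoticity.

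There is no serious obstacle here; the only mildly delicate point is the order of quantifiers, namely that we must first fix the truncation level $t_\eta$ (chosen with uniform integrability, uniformly in $n$) before taking $n\to\infty$, so that the term $\delta_n(\p) t_\eta$ vanishes in the limit. Note also that the argument is genuinely pointwise in $\p$, so uniform integrability is only needed under each individual $\p\in\cP$, matching the hypothesis.
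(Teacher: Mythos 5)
Your proof is correct and follows essentially the same route as the paper's: decompose $\E^\p[E^{(n)}]$ into a truncated part and a tail part, use uniform integrability to choose a fixed truncation level $T$ controlling the tail by $\eta$ uniformly in $n$, apply the approximate e-variable bound at level $T$, take $\limsup_{n\to\infty}$, let $\eta\downarrow 0$, and invoke Proposition~\ref{prop:asym-e}. The remark about the order of quantifiers is a nice touch but the mathematics is the same.
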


The next proposition gives a sufficient condition that best describes the intuition behind  asymptotic e-variables and p-variables. 
\begin{proposition}
\label{prop:converge-dist}
    If under each $\p\in\cP$, a nonnegative sequence  
converges to an e-variable (resp.~a p-variable) in distribution, then it
is a sequence of asymptotic e-variables  (resp.~p-variables) for~$\cP$.  
\end{proposition}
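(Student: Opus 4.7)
The plan is, for each fixed $\p\in\cP$, to construct explicit approximation parameters $\varepsilon_n(\p),\delta_n(\p)\to 0$ and verify the required inequalities directly or via the (E2)$\Rightarrow$(E1) direction of Proposition~\ref{prop:alter-p} (which holds without any atomlessness assumption). Convergence in distribution only gives pointwise control at continuity points, so the work is to turn that pointwise information into the uniform-in-$t$ bounds required by the approximate definitions.

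For the e-variable case, fix $\p\in\cP$ and let $E$ be the limit under $\p$, so $\E^\p[E]\le 1$ and in particular $E<\infty$ $\p$-a.s. For any continuity point $t$ of the distribution of $E$, the map $x\mapsto x\wedge t$ is bounded and continuous, so the continuous mapping theorem and bounded convergence give $\E^\p[E^{(n)}\wedge t]\to \E^\p[E\wedge t]\le 1$ and $\p(E^{(n)}>t)\to \p(E>t)$, and the latter tends to $0$ as $t\to\infty$ through continuity points. A diagonal argument then yields continuity points $t_n=t_n(\p)\uparrow\infty$ with
\[
\varepsilon_n(\p):=\bigl(\E^\p[E^{(n)}\wedge t_n]-1\bigr)_+\longrightarrow 0 \quad\text{and}\quad \delta_n(\p):=\p(E^{(n)}>t_n)\longrightarrow 0.
\]
Taking $A_n:=\{E^{(n)}\le t_n\}$, I would get $\E^\p[E^{(n)}\id_{A_n}]\le \E^\p[E^{(n)}\wedge t_n]\le 1+\varepsilon_n(\p)$ and $\p(A_n)\ge 1-\delta_n(\p)$, i.e.\ condition (E2) of Proposition~\ref{prop:alter-p} is satisfied, so (E1) holds: $E^{(n)}$ is $(\varepsilon_n,\delta_n)$-approximate for $\p$. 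Ranging over $\p\in\cP$ gives the claimed asymptotic e-variable sequence.

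For the p-variable case, fix $\p\in\cP$ and let $P$ be the limit, so $\p(P\le t)\le t$ for $t\in[0,1]$. Given $\eta>0$, pick finitely many continuity points $0<s_1<\cdots<s_m=1$ of the CDF of $P$ with $s_1\le \eta/2$ and $s_j-s_{j-1}\le \eta/2$; convergence in distribution at each $s_j$ gives, for all $n$ large enough (depending on $\p$ and $\eta$), $\p(P^{(n)}\le s_j)\le s_j+\eta/2$, and CDF monotonicity then upgrades this to $\p(P^{(n)}\le t)\le t+\eta$ uniformly in $t\in (0,1)$. A diagonal choice $\eta_n\downarrow 0$ produces $\delta_n(\p)\to 0$ with $\varepsilon_n(\p)=0$, which is enough. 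The main obstacle in both cases is the diagonal passage: for e-variables one must balance $t_n$ so that $\delta_n\to 0$ (needing $t_n\to\infty$) while simultaneously keeping the truncated expectation close to $1$; for p-variables the compactness of $(0,1]$ together with CDF monotonicity is what allows uniform-in-$t$ control to be extracted from pointwise convergence at a finite grid.
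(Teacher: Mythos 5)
Your proof is correct, and for the e-variable half it takes a genuinely different route from the paper's. For the p-variable half the underlying mathematics is the same: the paper invokes the Pólya-type fact that pointwise convergence of increasing functions on a compact interval to a continuous limit is automatically uniform, while your finite-grid-plus-monotonicity argument is precisely the hands-on proof of that fact; the two are morally identical. For the e-variable half the paper works directly with formulation (E1) through the auxiliary functional $f_\delta(E) = \sup_{t}\bigl(\E^\p[E\wedge t]-\delta t\bigr)-1$, showing that once $\p(E^{(n)}\ge m)<\delta$ for $n$ large one has $f_\delta(E^{(n)})\le \E^\p[E^{(n)}\wedge m]-1\to \E^\p[E\wedge m]-1\le 0$, and then (implicitly) diagonalizing over $\delta$. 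You instead pass through (E2), exhibiting the explicit events $A_n=\{E^{(n)}\le t_n\}$ along a diagonal sequence of truncation levels $t_n\uparrow\infty$, and appeal to the unconditional direction (E2)$\Rightarrow$(E1) of Proposition~\ref{prop:alter-p}, which indeed requires no atomlessness. Both packagings rely on the same ingredients (truncation, weak convergence of the truncated expectations, and tail decay of the limit $E$), but yours makes the $(\varepsilon,\delta)$-approximation event concrete, which is arguably more transparent given that Proposition~\ref{prop:alter-p} was set up to enable exactly such arguments. Two minor cosmetic points: in the p-variable case you fix the last grid point as $s_m=1$, but $1$ need not be a continuity point of the CDF of $P$; simply stop the grid at some continuity point $s_m\in(1-\eta/2,1)$ and note that for $t\ge s_m$ the trivial bound $\p(P^{(n)}\le t)\le 1\le t+\eta$ already suffices. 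Also, for the truncated expectation $\E^\p[E^{(n)}\wedge t]\to\E^\p[E\wedge t]$ you actually do not need $t$ to be a continuity point, since $x\mapsto x\wedge t$ is continuous and bounded everywhere; continuity of $t$ is only needed for $\p(E^{(n)}>t)\to\p(E>t)$, which is how you use it.
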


\begin{remark}
   \cite{gasparin2024combining} referred to asymptotic p-variables as those that converge to a p-variable in distribution. As Proposition~\ref{prop:converge-dist} illustrates, this is a sufficient condition for our definition of asymptotic p-values.
Our definition is more general. For instance, any sequence that is larger than a p-variable is a  sequence of asymptotic p-variables, and it does not need to converge. 
\end{remark}

The next result gives equivalent conditions for uniformly asymptotic p-variables. 
The proof is similar to that of Proposition~\ref{prop:converge-dist} and omitted here. 
\begin{proposition}
\label{prop:asym-p}
The sequence $(P^{(n)})_{n\in\N}$  of 
nonnegative random variables 
is uniformly asymptotic  for $\mathcal P$ if and only if  
$\limsup_{n\to\infty}\sup_{\p\in \cP} \p(P^{(n)}\le t)\le t $ for all $t\in (0,1)$.
Similarly, it
is asymptotic  for $\mathcal P$ if and only if  
$\limsup_{n\to\infty}\p(P^{(n)}\le t)\le t $ for all $\p\in \mathcal P$ and $t\in (0,1)$.
\end{proposition}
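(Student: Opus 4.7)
The proposition has two parts (uniformly asymptotic vs.\ asymptotic), each an ``iff''. The ``only if'' directions are immediate from unwrapping Definitions~\ref{defi:approx} and~\ref{defi:asymp_evariables_pvariables}: if $P^{(n)}$ is $(\varepsilon_n,\delta_n)$-approximate then $\p(P^{(n)}\le t)\le (1+\varepsilon_n(\p))t+\delta_n(\p)$; fixing $t\in(0,1)$, taking $\sup_{\p\in\cP}$ (or just fixing $\p$) and sending $n\to\infty$ gives $\limsup_n\sup_\p \p(P^{(n)}\le t)\le t$ (respectively, $\limsup_n\p(P^{(n)}\le t)\le t$).

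For the ``if'' direction, the plan is to take $\varepsilon_n\equiv 0$ and explicitly construct $\delta_n$ from the hypothesis. Specifically, for the non-uniform case, set
$$\delta_n(\p)\;=\;\sup_{t\in(0,1)}\bigl(\p(P^{(n)}\le t)-t\bigr)_+,$$
and for the uniformly asymptotic case, set $\delta_n\;=\;\sup_{t\in(0,1)}\sup_{\p\in\cP}\bigl(\p(P^{(n)}\le t)-t\bigr)_+$. By construction $P^{(n)}$ is $(0,\delta_n)$-approximate, so everything reduces to showing $\delta_n(\p)\to 0$ pointwise (respectively, $\delta_n\to 0$).

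The main obstacle is that the hypothesis only controls $F_n^\p(t):=\p(P^{(n)}\le t)$ pointwise in $t$, whereas $\delta_n$ involves a supremum in $t$; a straightforward pointwise passage to the limit does not commute with the sup. The key idea is to exploit the monotonicity of $F_n^\p$ in $t$ via a finite grid. Given $\eta>0$, pick a grid $0<t_1<\dots<t_N<1$ with $t_1\le\eta/2$, $1-t_N\le\eta/2$, and $t_{j+1}-t_j\le\eta/2$. For any $t\in[t_j,t_{j+1}]$, monotonicity gives
$$F_n^\p(t)-t\;\le\;F_n^\p(t_{j+1})-t_j\;=\;\bigl(F_n^\p(t_{j+1})-t_{j+1}\bigr)+(t_{j+1}-t_j)\;\le\;\bigl(F_n^\p(t_{j+1})-t_{j+1}\bigr)_++\tfrac{\eta}{2},$$
while on the boundary intervals $(0,t_1]$ and $[t_N,1)$ the same bound holds (using $F_n^\p\le 1$ on the right end). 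Hence
$$\sup_{t\in(0,1)}\bigl(F_n^\p(t)-t\bigr)_+\;\le\;\max_{1\le j\le N}\bigl(F_n^\p(t_j)-t_j\bigr)_++\tfrac{\eta}{2}.$$

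By hypothesis, $(F_n^\p(t_j)-t_j)_+\to 0$ for each $j$; since there are only finitely many $j$'s, the maximum is $\le\eta/2$ for $n$ large, whence $\delta_n(\p)\le\eta$ eventually. As $\eta>0$ was arbitrary, $\delta_n(\p)\to 0$. The same grid argument, applied to the monotone function $t\mapsto\sup_{\p\in\cP}F_n^\p(t)$, handles the uniform case (the finiteness of the grid is what allows the uniform-in-$\p$ hypothesis to be combined across $j$ to give $\delta_n\to 0$).
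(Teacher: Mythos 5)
Your proof is correct. The paper omits the proof of Proposition~\ref{prop:asym-p} and says only that it is ``similar to that of Proposition~\ref{prop:converge-dist},'' whose argument invokes the fact that pointwise convergence of increasing functions to a continuous limit on a compact interval is automatically uniform (a P\'olya-type result); your finite-grid argument is precisely a self-contained proof of that same fact in this setting, applied to $t\mapsto \p(P^{(n)}\le t)\vee t$ (or its $\sup_{\p}$ in the uniform case), so the two approaches agree in substance — you have simply inlined the step the paper delegates to a standard theorem.
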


The most common way of constructing asymptotic p-variables and e-variables is via limit theorems such as the central limit theorem. The following example is a consequence of Propositions~\ref{proposition:uniform_integrability_strong_e} and~\ref{prop:converge-dist}.
\begin{example}
\label{example:weak_convergence_to_asymptotic}
Let $Z^{(n)}$ be a statistic computed based on $X^{(n)}$ for $n\in \N$. Suppose that
$$
Z^{(n)} \cd Z\, \text{ as }\, n \to \infty \,\text{ under every }\, \p \in \cP,
$$
where $Z $ is a random variable with a continuous cumulative distribution function $F$ and ``$\cd$'' denotes  convergence in distribution. Then the following hold:
\begin{enumerate}[label=(\roman*)]
\item Let $P^{(n)} = F(Z^{(n)})$ or $P^{(n)}=1-F(Z^{(n)})$ for $n\in \N$. Then $(P^{(n)})_{n\in\N}$ is a sequence of asymptotic p-variables for $\cP$.
\item Let $h: \mathbb R \to [0,\infty]$ be a function with $\E[h(Z)] \leq 1$ that is increasing, decreasing, or upper semi-continuous, and  $E^{(n)} = h(Z^{(n)})$ for $n\in \N$. Then $(E^{(n)})_{n\in\N}$ is a sequence of asymptotic e-variables for $\cP$.  
If $h$ is further bounded, then $(E^{(n)})_{n\in\N}$ is strongly asymptotic  for $\cP$.  
\end{enumerate}
\end{example}

The claim for asymptotic p-variables of course is standard and underlies most constructions of p-values used in practice. But the statement for e-variables is new, which is our main motivation for stating the above result.

The most common cases in practice include the following textbook examples~\citep{vandervaart1998asymptotic, lehmann2005testing} (also see Section~\ref{subsec:prototypical_single_asym} for an explicit example):
\begin{enumerate}
\item $Z \sim \mathrm{N}(0,1)$ as would follow from the central limit theorem.
\item $Z \sim |\mathrm{N}(0,1)|$, where we use the notation $ |\mathrm{N}(0,1)|$ to denote the half-normal distribution. This would follow also from the central limit theorem and the continuous mapping theorem.\footnote{An application of this case is as follows. In the setting of Example~\ref{example:weak_convergence_to_asymptotic}(ii), suppose $Z \sim \mathrm{N}(0,1)$ but we seek to use a function $h$ that is increasing in $|z|$ (rather than $z$) and that satisfies $\E[h(|Z|
)]\leq 1$. Then the sequence $(E^{(n)})_{n \in \mathbb N}$ with $E^{(n)}=h(|Z^{(n)}|)$ is also a sequence of asymptotic e-variables.
}
\item $Z \sim \chi^2_{\nu}$, where $\chi^2_{\nu}$ denotes the chi-square distribution with $\nu$ degrees of freedom. Such a limit occurs e.g., for likelihood ratio or chi-square tests.
\end{enumerate}

We end this subsection by noting that we can naturally calibrate asymptotic p-variables and e-variables by applying a (possibly different) calibrator to each element in the sequence, in the sense of Theorem~\ref{th:calibration}. 
\begin{proposition}
\label{prop:asymptotic_calibration}
    Calibrating a sequence of (uniformly, strongly) asymptotic p-variables  yields a sequence of (uniformly, strongly) asymptotic e-variables, and vice versa. 
\end{proposition}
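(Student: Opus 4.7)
The plan is to reduce everything to a termwise application of Theorem~\ref{th:calibration}. By Definition~\ref{defi:asymp_evariables_pvariables}, each of the four notions (asymptotic, strongly asymptotic, uniformly asymptotic, uniformly strongly asymptotic) is phrased purely in terms of the existence of functions $\varepsilon_n, \delta_n : \cP \to \R_+ \times [0,1]$ for which $P^{(n)}$ (or $E^{(n)}$) is $(\varepsilon_n, \delta_n)$-approximate and for which $(\varepsilon_n, \delta_n)$ satisfies one of four vanishing conditions. Crucially, these vanishing conditions are properties of the parameter sequence alone, independent of whether we are looking at p- or e-variables.

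Concretely, I would fix any sequence of calibrators $(h_n)_{n \in \N}$: in the p-to-e direction each $h_n$ is a p-to-e calibrator, and in the e-to-p direction each $h_n$ is the (essentially unique admissible) map $x \mapsto (1/x)\wedge 1$. For each fixed $n$, Theorem~\ref{th:calibration} states that $h_n$ maps $(\varepsilon_n, \delta_n)$-approximate p-variables to $(\varepsilon_n, \delta_n)$-approximate e-variables and vice versa, with the \emph{same} pair $(\varepsilon_n, \delta_n)$. Therefore the calibrated sequence $(h_n(P^{(n)}))_{n \in \N}$ (respectively $(h_n(E^{(n)}))_{n \in \N}$) is $(\varepsilon_n, \delta_n)$-approximate for exactly the same $(\varepsilon_n, \delta_n)$ as the original sequence, and so it inherits whichever of the four vanishing conditions was assumed. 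There is no genuine obstacle: the entire content is already packaged in Theorem~\ref{th:calibration}, and Proposition~\ref{prop:asymptotic_calibration} is a direct lift of that finite-sample equivalence to the asymptotic framework, which is why the authors say its proof is omitted as in Proposition~\ref{prop:converge-dist}.
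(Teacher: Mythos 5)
Your proposal is correct and matches the paper's intent exactly: the paper itself notes immediately after the proposition that it "follows directly from Theorem~\ref{th:calibration}," which is precisely the termwise application you describe, with the key observation being that Theorem~\ref{th:calibration} preserves the $(\varepsilon_n,\delta_n)$ parameters and hence all four vanishing conditions are inherited. (Minor nit: the paper attributes an omitted-similar-proof remark to Proposition~\ref{prop:asym-p}, not this one, but that does not affect your argument.)
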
 
This result follows directly from Theorem~\ref{th:calibration}.

\subsection{Approximate compound p-values and e-values}
\label{subsec:approx_compound}

Just as in Definition~\ref{defi:approx} we introduced approximate p-variables and e-variables, below we define approximate compound p-variables and e-variables. This extends our earlier framework and builds upon the notion of approximate compound e-variables that was implicitly introduced in~\citet{ignatiadis2024evalues}. Throughout this section, we assume that each $\p$ is atomless; this means that we can simulate a uniformly distributed random variable under $\p$. This enables us to formulate the definition of approximate compound p-variables and e-variables following the equivalent formulation for approximate p-variables/e-variables in Proposition~\ref{prop:alter-p}.

\begin{definition}[Approximate compound p-variables and e-variables]
\label{defi:approx_compound}
Fix the null hypotheses $\cP_1,\dots,\cP_K$, a set of distributions $\cP$, and two functions $\varepsilon:\cP \to \R_+ $ and $\delta:\cP \to [0,1]$, and write $\varepsilon_\p=\varepsilon(\p)$
and $\delta_\p=\delta(\p)$ for $\p \in \cP$. 
\begin{enumerate}[label=(\roman*)]
    \item 
Nonnegative random variables $P_1,\dots,P_K$ are \emph{$(\varepsilon, \delta)$-approximate compound} p-variables for $(\cP_1,\dots,\cP_K)$ under $\cP$ if for every $\p \in \cP$,  there exists an event $A$ such that
$$\sum_{k: \p \in \cP_k}  \p(P_k\le t , A) \le  K(1+\varepsilon_\p )t ~\mbox{for all $t\in (0,1)$ ~~and~~ }\p(A) \geq 1-\delta_\p;$$
\item  $[0,\infty]$-valued random variables $E_1,\dots,E_K$ are \emph{$(\varepsilon, \delta)$-approximate compound} e-variables for $(\cP_1,\dots,\cP_K)$ under $\cP$ if  for every $\p \in \cP$, 
there exists an event $A$ such that
$$ \sum_{k: \p \in \cP_k}  \E^{\p}[E_k\id_{A}] \leq K(1+\varepsilon_\p) \mbox{  ~~and~~ }\p(A) \geq 1-\delta_\p.$$
\end{enumerate}
\end{definition}
We provide examples of approximate compound e-values further below. One prominent example in the recent literature is due to~\citet{lee2024boosting} who construct $(0,\delta)$-approximate compound e-variables (without calling them as such). In their application the $\delta$ inflation is due to Monte-Carlo error in approximating compound e-values and $\delta$ can be made arbitrarily close to $0$ by increased computational effort.

If we start with per-hypothesis approximate p-variables (or e-variables), then we recover approximate compound p-variables (or e-variables) for the multiple testing problem.
\begin{proposition}
\label{proposition:approximate_evalues_to_approximate_compound}
Suppose that for all $k \in \mathcal{K}$, $P_k$ (resp.~$E_k$) is an $(\varepsilon_k,\delta_k)$-approximate p-variable (resp.~e-variable) for $\cP_k$. Then, $P_1,\dots,P_K$ (resp.~$E_1,\dots,E_K$) are $(\varepsilon,\delta)$-approximate compound p-variables (resp.~e-variables) for the choice:
$$
\varepsilon(\p) = \frac{1}{K} \sum_{k:\p \in \cP_k} \varepsilon_k(\p),\qquad \delta(\p) = \sum_{k:\p \in \cP_k} \delta_k(\p).
$$  
\end{proposition}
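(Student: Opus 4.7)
The plan is to leverage the standing atomless hypothesis in Section~\ref{subsec:approx_compound}, which lets us invoke Proposition~\ref{prop:alter-p} to convert each per-hypothesis approximation into its event-based equivalent form, and then intersect these events across the null indices.

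Fix $\p \in \cP$. For each $k$ with $\p \in \cP_k$, apply Proposition~\ref{prop:alter-p} (implication (P1)$\Rightarrow$(P2), respectively (E1)$\Rightarrow$(E2), which needs $\p$ atomless) to obtain an event $A_k$ with $\p(A_k) \geq 1 - \delta_k(\p)$ such that
$$
\p(P_k \leq t, A_k) \leq (1+\varepsilon_k(\p))t \text{ for all } t\in(0,1), \quad \text{resp.} \quad \E^\p[E_k \id_{A_k}] \leq 1+\varepsilon_k(\p).
$$
Then set $A = \bigcap_{k:\p \in \cP_k} A_k$. A union bound gives
$$
\p(A^c) \leq \sum_{k:\p \in \cP_k} \p(A_k^c) \leq \sum_{k:\p \in \cP_k} \delta_k(\p) = \delta(\p),
$$
so $\p(A) \geq 1-\delta(\p)$, matching the required bound in Definition~\ref{defi:approx_compound}.

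For the compound p-variable claim, since $A \subseteq A_k$, for every $k$ with $\p \in \cP_k$ we have $\p(P_k \leq t, A) \leq \p(P_k \leq t, A_k) \leq (1+\varepsilon_k(\p))t$. Summing and using $|\mathcal{N}(\p)| \leq K$,
$$
\sum_{k:\p \in \cP_k} \p(P_k \leq t, A) \leq t\Bigl(|\mathcal N(\p)| + \sum_{k:\p\in \cP_k} \varepsilon_k(\p) \Bigr) \leq t\bigl(K + K\varepsilon(\p)\bigr) = K(1+\varepsilon(\p))t,
$$
which is Definition~\ref{defi:approx_compound}(i). The e-variable case is identical: from $A \subseteq A_k$ and nonnegativity of $E_k$, $\E^\p[E_k \id_A] \leq \E^\p[E_k \id_{A_k}] \leq 1+\varepsilon_k(\p)$, and summing over $k \in \mathcal N(\p)$ yields $\sum \E^\p[E_k \id_A] \leq K(1+\varepsilon(\p))$, which is Definition~\ref{defi:approx_compound}(ii).

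The only subtle point is the appeal to the atomless assumption when passing from the probabilistic (P1)/(E1) formulation to the event-based (P2)/(E2) formulation; this is needed precisely so that the per-hypothesis events $A_k$ exist and can be intersected. Beyond that, the argument is just a union bound on exceptional sets combined with the observation that the number of terms in the compound sum is at most $K$.
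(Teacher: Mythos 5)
Your proof is correct and mirrors the paper's argument almost exactly: both fix $\p$, invoke Proposition~\ref{prop:alter-p} (using the atomless assumption) to extract per-hypothesis events $A_k$, intersect them into $A$, apply a union bound for the $\delta$-budget, and sum the per-hypothesis bounds using $|\mathcal{N}(\p)|\le K$. The only difference is that the paper writes out the e-variable case and declares the p-variable case analogous, whereas you spell out both.
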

We next provide an analogous statement to that of Proposition~\ref{prop:alter-p} for approximate compound e-variables.
A corresponding formulation for approximate compound p-variables is omitted.

\begin{proposition}
\label{prop:equiv_approx_compound_e}
Let $\varepsilon:\cP \to \R_+ $ and $\delta:\cP \to [0,1]$ and $E_1,\dots,E_K$ be $[0,\infty]$-valued random variables. If each $\p\in \cP$ is atomless, then Definition~\ref{defi:approx_compound}(ii) is equivalent to:
$$
\E^\p \left[\left(\sum_{k: \p \in \cP_k}E_k \right)\wedge t \right] \le  K(1+\varepsilon_\p)  +\delta_\p t  \qquad \mbox{for all $t\in \R_+$ and all $\p\in \mathcal P$}.
$$
\end{proposition}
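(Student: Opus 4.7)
The plan is to reduce Proposition~\ref{prop:equiv_approx_compound_e} to the single-variable equivalence (E1)$\Leftrightarrow$(E2) of Proposition~\ref{prop:alter-p} by collapsing the null-indexed e-variables into one random variable. For each $\p \in \cP$, set
$$
S_\p \ := \ \sum_{k : \p \in \cP_k} E_k,
$$
which is a $[0,\infty]$-valued random variable under $\p$.

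First I would rewrite both conditions in terms of $S_\p$. By linearity of expectation, Definition~\ref{defi:approx_compound}(ii) says exactly that for each $\p \in \cP$ there exists an event $A$ with $\p(A) \ge 1 - \delta_\p$ and $\E^\p[S_\p \id_A] \le K(1+\varepsilon_\p)$, while the displayed inequality in Proposition~\ref{prop:equiv_approx_compound_e} asserts $\E^\p[S_\p \wedge t] \le K(1+\varepsilon_\p) + \delta_\p t$ for every $t \ge 0$.

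Next I would apply Proposition~\ref{prop:alter-p} to the random variable $S_\p/K$, with the same $\varepsilon_\p$ and $\delta_\p$. Condition (E2) there reads $\E^\p[(S_\p/K)\id_A] \le 1 + \varepsilon_\p$ on an event $A$ with $\p(A) \ge 1 - \delta_\p$, which upon multiplying through by $K$ is precisely Definition~\ref{defi:approx_compound}(ii). Condition (E1) reads $\E^\p[(S_\p/K) \wedge s] \le 1 + \varepsilon_\p + \delta_\p s$ for all $s \ge 0$; using the identity $(a/K) \wedge s = (a \wedge Ks)/K$ together with the substitution $t = Ks$ transforms this into $\E^\p[S_\p \wedge t] \le K(1 + \varepsilon_\p) + \delta_\p t$, i.e., the displayed inequality of Proposition~\ref{prop:equiv_approx_compound_e}. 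Because each $\p \in \cP$ is assumed atomless, Proposition~\ref{prop:alter-p} yields (E1)$\Leftrightarrow$(E2) for $S_\p/K$, finishing the proof.

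The substance of the argument is therefore imported entirely from Proposition~\ref{prop:alter-p}; the only genuinely nontrivial direction is (E1)$\Rightarrow$(E2), whose proof uses the atomless hypothesis on $\p$ to carve out the truncation event via a quantile/layer-cake construction. I expect no additional obstacle to arise in the compound setting, because aggregating the $E_k$'s into $S_\p$ reduces the claim to the single-variable case, leaving only the multiplicative $K$-bookkeeping to verify.
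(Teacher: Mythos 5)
Your proof is correct and is essentially identical to the paper's: both arguments collapse the sum into the single random variable $S_\p/K = \frac{1}{K}\sum_{k:\p\in\cP_k} E_k$ and then invoke the equivalence (E1)$\Leftrightarrow$(E2) of Proposition~\ref{prop:alter-p} for the singleton hypothesis $\{\p\}$. The paper summarizes the rescaling step as ``direct manipulation,'' whereas you make the substitution $t=Ks$ and the identity $(a/K)\wedge s=(a\wedge Ks)/K$ explicit, but the underlying argument is the same.
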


We can calibrate approximate compound p-variables to approximate compound e-variables and vice versa as per usual if we use \emph{the same} calibrator for all hypotheses. For instance, given a p-to-e calibrator $h$, if $P_1,\dots,P_K$ are approximate compound p-variables for $\cP_1,\dots,\cP_K$ under $\cP$, then $h(P_1),\dots,h(P_k)$ are approximate compound e-variables for the same hypotheses.

\begin{proposition}
\label{th:calibration_approximate_compound}
 Let $\varepsilon:\cP \to \R_+ $ and $\delta:\cP \to [0,1]$. 
    Calibrating $(\varepsilon, \delta)$-approximate compound p-variables  yields  $(\varepsilon, \delta)$-approximate compound e-variables, and vice versa.
\end{proposition}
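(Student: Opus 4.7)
My plan is to handle both directions by a single-hypothesis computation analogous to Theorem~\ref{th:calibration}, summed across $k$ with $\p\in\cP_k$. The key structural feature of Definition~\ref{defi:approx_compound} makes this possible: a single event $A$ witnesses the approximate compound inequality simultaneously for every relevant $k$, so I will reuse precisely this $A$ to witness the calibrated object. The $\delta$ parameter and the lower bound $\p(A)\ge 1-\delta_\p$ are thereby preserved automatically, and only the approximate bound on the sum remains to be verified.

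For the e-to-p direction, the only admissible calibrator is $P_k=(1/E_k)\land 1$. Fixing $\p\in\cP$ and the witnessing event $A$ from Definition~\ref{defi:approx_compound}(ii), the identity $\{P_k\le t\}=\{E_k\ge 1/t\}$ for $t\in(0,1)$ together with Markov's inequality applied to the nonnegative random variable $E_k\id_A$ yields $\p(P_k\le t,A)\le t\,\E^\p[E_k\id_A]$. Summing over $k$ with $\p\in\cP_k$ and invoking the approximate compound e-variable bound from Definition~\ref{defi:approx_compound}(ii) delivers $K(1+\varepsilon_\p)t$, as required.

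For the p-to-e direction, take any p-to-e calibrator $h$, set $E_k=h(P_k)$, and write $s_t=\sup\{p\ge 0:h(p)>t\}$. The layer-cake formula gives
$$
\E^\p[h(P_k)\id_A]=\int_0^\infty \p(h(P_k)>t,A)\,\dd t,
$$
and since $h$ is decreasing with $h=0$ on $(1,\infty)$, one has $\{h(P_k)>t\}\subseteq\{P_k\le s_t\}$ with $s_t\in[0,1]$. Summing over $k$ with $\p\in\cP_k$, applying the compound p-variable bound at the threshold $s_t$, and recognizing $\int_0^\infty s_t\,\dd t=\int_0^1 h(p)\,\dd p\le 1$ via a second Fubini step closes the estimate at $K(1+\varepsilon_\p)$.

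The only real obstacle is minor boundary bookkeeping in this last step: Definition~\ref{defi:approx_compound}(i) is stated for $t\in(0,1)$, so the endpoints $s_t\in\{0,1\}$ must be handled separately by letting $t\downarrow 0$ in the compound p-variable bound (which forces $\sum_k\p(P_k=0,A)=0$) and $t\uparrow 1$ (which costs only the harmless slack $K\le K(1+\varepsilon_\p)$), and the identity between $s_t$ and the Lebesgue measure of $\{p\in[0,1]:h(p)>t\}$ must be justified using monotonicity of $h$. Everything else is a direct transcription of the proof of Theorem~\ref{th:calibration}.
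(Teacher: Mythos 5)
Your e-to-p direction is identical to the paper's: reuse the witnessing event $A$, rewrite $\{P_k\le t\}\cap A=\{E_k\id_A\ge 1/t\}$, and apply Markov. The p-to-e direction, however, is a genuinely different argument. The paper conditions on $A$ (passing to $\p_A(\cdot)=\p(\cdot\mid A)$ with an inflated constant $b=(1+\epsilon_\p)/\p(A)-1$), then reduces to the single-hypothesis case by invoking the auxiliary randomization $(L,U)$ from the proof of Theorem~\ref{prop:compound-pe1} to manufacture a $(b,0)$-approximate p-variable $\tilde P$, and finally applies Theorem~\ref{th:calibration} to $\tilde P$. Your proof instead expands $\sum_k\E^\p[h(P_k)\id_A]$ directly via the layer-cake representation $\int_0^\infty\sum_k\p(h(P_k)>t,A)\,\dd t$, bounds the inner sum by $K(1+\varepsilon_\p)s_t$ using the compound p-variable inequality at the level set $s_t=\sup\{p:h(p)>t\}$, and finishes with the identity $\int_0^\infty s_t\,\dd t=\int_0^1 h\le 1$. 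This is self-contained and more elementary (no conditioning, no external randomization, no invocation of Theorem~\ref{th:calibration}), and your boundary handling of $s_t\in\{0,1\}$ is correct: $t\downarrow 0$ forces $\sum_k\p(P_k\le 0,A)=0$, and $s_t=1$ is covered by the trivial bound $\sum_k\p(P_k\le 1,A)\le K\le K(1+\varepsilon_\p)$. What the paper's route buys is modularity — it exhibits the compound statement as a corollary of the single-hypothesis calibration theorem via a general reduction mechanism that recurs elsewhere in the paper (e.g.\ Theorem~\ref{prop:compound-pe1}); what yours buys is a shorter, assumption-lighter direct computation.
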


We end this subsection by introducing an alternative definition in lieu of Definition~\ref{defi:approx_compound} for approximate compound p-variables and e-variables. This definition has the advantage that it is easier to satisfy, but it provides weaker guarantees. We include it because the asymptotic notion it leads to for compound p-variables was central to the work of~\citet{ignatiadis2024empirical}. However, we emphasize that in the remainder of the main text, we focus on Definition~\ref{defi:approx_compound}, since, as we will see in Section~\ref{sec:multiple_testing}, approximate compound e-values as per Definition~\ref{defi:approx_compound} are naturally compatible with approximate false discovery rate guarantees for the e-BH procedure.

\begin{definition}[*Approximate\footnote{We use the asterisk (*approximate) to distinguish between two related but distinct definitions within this paper. We expect other authors may reasonably use ``$(\varepsilon,\delta)$-approximate compound p-variables'' or ``$(\varepsilon,\delta)$-approximate compound e-variables'' for either notion, provided the definition is clearly stated.} compound p-variables and e-variables]
\label{defi:additive_approx_compound}
Fix the null hypotheses $\cP_1,\dots,\cP_K$, a set of distributions $\cP$, and two functions $\varepsilon:\cP \to \R_+ $ and $\delta:\cP \to [0,1]$, and write $\varepsilon_\p=\varepsilon(\p)$
and $\delta_\p=\delta(\p)$ for $\p \in \cP$. 
\begin{enumerate}[label=(\roman*)]
    \item 
Nonnegative random variables $P_1,\dots,P_K$ are \emph{$(\varepsilon, \delta)^*$-approximate compound} p-variables for $(\cP_1,\dots,\cP_K)$ under $\cP$ if 
$$\sum_{k: \p \in \cP_k} \p(P_k\le t) \le  K ( (1+\varepsilon_\p )t +\delta_\p ) \qquad \mbox{for all $t\in (0,1)$ and all $\p\in \mathcal P$.} $$
\item $[0,\infty]$-valued random variables $E_1,\dots,E_K$ are \emph{$(\varepsilon, \delta)^*$-approximate compound} e-variables for $(\cP_1,\dots,\cP_K)$ under $\cP$ if  
 $$\sum_{k: \p \in \cP_k} 
 \E^\p [E_k\wedge t] \le  K(1+\varepsilon_\p  +\delta_\p t)  \qquad \mbox{for all $t\in \R_+$ and all $\p\in \mathcal P$.}$$ 
\end{enumerate}
\end{definition}

When all $\p \in \cP$ are atomless and $K=1$, then Definitions~\ref{defi:additive_approx_compound} and~\ref{defi:approx_compound} are equivalent to each other and equivalent to Definition~\ref{defi:approx} on approximate p-variables and e-variables. The definitions also agree when $\delta=0$, for instance, $(\varepsilon,0)^*$-approximate compound e-variables are identical to $(\varepsilon,0)$-approximate compound e-variables. Definition~\ref{defi:additive_approx_compound} is implied by Definition~\ref{defi:approx_compound}, as can be seen most clearly for approximate compound e-variables by comparing to the result of Proposition~\ref{prop:equiv_approx_compound_e}.
In Appendix~\ref{sec:alt_defi} we provide further results regarding Definition~\ref{defi:additive_approx_compound}.

\subsection{Asymptotic compound p-values and -values}
\label{subsec:asymp_compound_evalues}

In this subsection we define asymptotic compound p-variables and e-variables. The reader may directly anticipate the definitions that follow. However, what is most important, is to clarify the asymptotic setup.
As before, we seek to test hypotheses $\mathcal{P}_1,\dots,\mathcal{P}_K$ based on data $X^{(n)}$. 
We are interested in asymptotics in which either $n \to \infty$ (with $K$ fixed) or $K \to \infty$ (with $n$ fixed) or both $K,n \to \infty$. 

Formally, we consider triangular-array type of asymptotics indexed by a single parameter $m \in \mathbb N$ that determines the dataset size, $n=n(m)$, and the number of hypotheses, $K=K(m)$. We require that $m \to \infty$ and that $\max\{K, n\} \to \infty$ as $m \to \infty$.

All three asymptotic regimes are frequently encountered in the (multiple) testing literature. The asymptotic regime with $n \to \infty$ and $K$ fixed is the setting we considered in Section~\ref{sec:asymp-e} (with $K=1$), while~\citet{dudoit2008multiple} provide a textbook treatment of multiple testing in the regime with $n \to \infty$ and $K>1$ fixed. When using empirical Bayes arguments to share strength and learn nuisance parameters across hypotheses, one is relying on asymptotics as $K$ grows to infinity (as in e.g., the seminal FDR asymptotics in~\citet{storey2004strong}) and often with $n$ fixed (see~\citet{ignatiadis2024empirical} for a concrete example). Finally, other authors, e.g.,~\citet{cao2011simultaneous, liu2014phase}, have studied multiple testing procedures in the regime where both $n, K \to \infty$.

\begin{definition}[Asymptotic compound p-variables and e-variables]
\label{defi:asymp_compound_evariables_pvariables}
A sequence 
$$\left( E^{(n(m))}_1,\dots, E^{(n(m))}_{K(m)}\right)_{m\in\N}$$ 
of tuples of $[0,\infty]$-valued random variables is called a sequence of:
\begin{itemize}
    \item  \emph{asymptotic compound}  e-variables for $\cP_1,\dots,\cP_{K(m)}$ under $\cP$,  if  for each $m$, $E^{(n(m))}_1,\dots, E^{(n(m))}_{K(m)}$ are $(\varepsilon_m, \delta_m)$-approximate compound e-variables for $\cP_1,\dots,\cP_{K(m)}$ under $\cP$
    for some $(\varepsilon_m, \delta_m)$, satisfying  
    \begin{equation}\label{eq:asymp-compound-e} \mbox{for each $\p\in \mathcal P$,}\quad  \lim_{m\to \infty}  \varepsilon_m(\p) = \lim_{m\to \infty} \delta_m(\p) = 0;
    \end{equation}
    
    \item  \emph{strongly asymptotic compound} e-variables for $\cP_1,\dots,\cP_{K(m)}$ under $\cP$, if condition \eqref{eq:asymp-compound-e}  is replaced by   
$$    \mbox{for each $\p\in \mathcal P$,} \quad \lim_{m\to \infty}   \varepsilon_m(\p) = 0 \quad \mbox{~and~} \quad  
 \delta_m(\p) = 0 \mbox{~for $m$ large enough.}$$ 
\end{itemize}
Uniformly (strongly) asymptotic compound e-variables are defined analogously. Moreover, sequences 
$$\left( P^{(n(m))}_1,\dots, P^{(n(m))}_{K(m)}\right)_{m\in\N}$$ 
of (uniformly, strongly) asymptotic compound p-variables are defined analogously. 
\end{definition}

We provide explicit constructions of asymptotic compound e-values in Sections~\ref{subsec:odp_ebayes} and~\ref{subsec:sum_of_squares_compound}.

\section{Compound e-values are central to multiple testing}
\label{sec:multiple_testing}
Our goal in this section is to demonstrate that compound e-values are fundamental to multiple testing and provide a useful conceptual framework for developing new multiple testing procedures that control the false discovery rate (FDR). To motivate what follows, we first record an observation about single hypothesis testing in which we seek to construct a level-$\alpha$ non-randomized test for a hypothesis $\cP$, that is, a $\{0,1\}$-valued function $\phi(\cdot)$ such that $\mathbb E^{\p}[\phi(X)] \leq \alpha$ for all $\p \in \mathcal{P}$. First, if $E=E(X)$ is an e-value for $\cP$, then $\phi(X) = \id_{\{E(X) \geq 1/\alpha\}}$ is a level-$\alpha$ test, since by Markov's inequality $\p(E(X) \geq 1/\alpha) \leq  \alpha \E^{\p}[E(X)] \leq \alpha$ for all $\p \in \mathcal{P}$. Moreover, given any level-$\alpha$ test $\phi(X)$, we can construct an e-value via $E=\phi(X)/\alpha$ in which case $E \geq 1/\alpha$ if and only if $\phi(X)=1$. Thus, every level-$\alpha$ test (including those based on thresholding p-values) implicitly uses an e-value and is recoverable by thresholding that e-value at $1/\alpha$~\citep[Chapter 2]{ramdas2024hypothesis}.

The remainder of this section develops a generalization of the above relation in the context of multiple testing. Our primary result is that every FDR procedure can be recovered by applying the e-BH procedure of~\citet{wang2022false} to compound e-values.

We first introduce some further notation.
We denote a multiple testing procedure by $\mathcal{D}$, that is, a Borel function of $X$  that produces a subset of $\mathcal K$
representing the indices of rejected hypotheses. The rejected hypotheses by  $\mathcal{D}$ are called discoveries. We write $V_k = \id_{\{k \in \mathcal{D}\}}$ as the indicator of whether $\cP_k$ is rejected by $\mathcal{D}$,
 $F_{\mathcal{D}} = F_{\mathcal{D}}(\p)$ as the number of true  null hypotheses that are rejected (i.e., false discoveries) when the true distribution is $\p$: 
\[
F_{\cD} = \sum_{k: \p \in \mathcal{P}_k} V_k,
\]
and $R_{\mathcal{D}}$ as the total number of discoveries:
\[
R_{\cD} = \sum_{k \in \cK} V_k.
\]
\citet{benjamini1995controlling}
proposed to control the FDR, which is the expected value of the false discovery proportion,
that is, $ 
\mathrm{FDR}_{\cD}^{\p}:=\E^{\p}[{F_{\cD}}/{R_{\cD} }]$ with the convention $0/0=0$. 

Throughout this section (just as in Section~\ref{subsec:approx_compound}), we assume that each $\p$ is atomless; this means that we can simulate a uniformly distributed random variable under $\p$. This allows us to use the equivalent definition in Proposition~\ref{prop:alter-p}  of approximate e-variables through an event $A$ with a constraint on its probability.
  
\begin{definition}[FDR control]
Fix the null hypotheses $(\cP_1,\dots,\cP_K)$ and a set $\cP $ of distributions.
Let $\mathcal{D}$ be a multiple testing procedure and let $\varepsilon:\cP \to \R_+ $ and $\delta:\cP \to [0,1]$. The procedure $\mathcal{D}$ is said to have
\begin{itemize}
\item FDR \emph{control} at level $\alpha$ for $(\mathcal{P}_1,\dotsc,\mathcal{P}_K)$ under $\cP$ if for every $\p \in \cP$
$$ \mathrm{FDR}_{\cD}^{\p} = \E^{\p}\left[\frac{F_{\cD}}{R_{\cD}}\right] \leq \alpha;$$
\item \emph{$(\varepsilon,\delta)$-approximate} FDR \emph{control} at level $\alpha$ for $(\mathcal{P}_1,\dotsc,\mathcal{P}_K)$ under $\cP$, if for every $\p \in \cP$, 
there exists an event $A$ such that
$$\E^{\p}\left[\frac{F_{\cD}}{R_{\cD}} \id_{A}\right] \le  \alpha(1+\varepsilon_\p ) ~\mbox{~~and~~ }\p(A) \geq 1-\delta_\p;$$
\item \emph{asymptotic} FDR \emph{control} at level $\alpha$
for $(\cP_1,\dots,\cP_{K(m)})$ under $\cP$ if for each $m \in \mathbb N$ (with $m$ indexing a sequence of problems as in Section~\ref{subsec:asymp_compound_evalues}) it has $(\varepsilon_m, \delta_m)$-approximate FDR control at level $\alpha$
for some $(\varepsilon_m, \delta_m)$, satisfying  
$$
\mbox{for each $\p\in \mathcal P$,}\quad  \lim_{m\to \infty}  \varepsilon_m(\p) = \lim_{m\to \infty} \delta_m(\p) = 0.
$$
\end{itemize}
\end{definition}
We make a few remarks. A notion related to approximate FDR control was used as proof technique (without being given that name) by~\citet{blanchard2008two}.

A procedure with $(\varepsilon, \delta)$-approximate FDR control also satisfies $$\mathrm{FDR}_{\cD}^{\p} \leq \alpha(1+\varepsilon_{\p}) + \delta_{\p} \text{ for any } \p \in \cP.$$
Conversely, any procedure satisfying the above inequality 
has $(\epsilon,\delta/\alpha)$-approximate FDR control, by invoking the equivalence in Proposition \ref{prop:alter-p}; more precisely,
the above condition implies, for all $t\ge 1$,
$$\E^{\p}\left[\frac{F_{\cD}}{\alpha {R_{\cD}}} \wedge t\right] \le\E^{\p}\left[\frac{F_{\cD}}{\alpha {R_{\cD}}} \right] 
\le (1+\epsilon_\p) + \frac{\delta_\p}{\alpha} \le(1+\epsilon_\p) + \frac{\delta_\p}{\alpha} t . $$

Let $E_1,\dots,E_K$ be compound e-values. Then we can control the FDR through the e-BH procedure, which is defined as follows.
\begin{definition}[e-BH procedure, \citealp{wang2022false}]
\label{defi:eBH}
Let  $E_1,\dots, E_K$ be $[0,\infty]$-valued random variables.
For $k\in \mathcal K$, let $E_{[k]}$ be the $k$-th order statistic of $E_1,\ldots,E_K$, from the largest to the smallest.  The e-BH procedure rejects all hypotheses with the largest $k_e^*$ e-values, where 
\begin{equation*} 
\label{eq:e-k-intro} 
k_e^*:=\max\left\{k\in \mathcal K: \frac{k E_{[k]}}{K} \ge \frac{1}{\alpha}\right\},
\end{equation*}   
with the convention $\max(\varnothing) = 0$.
\end{definition} 
The following theorem records the fact that e-BH controls the FDR.
\begin{theorem}
\label{theo:eBH_controls_the_FDR}
Suppose that we apply the e-BH procedure at level $\alpha$ to $E_1,\dots,E_K$. Then:
\begin{itemize}
\item If $E_1,\dots,E_K$ are compound e-values, then e-BH controls the FDR at level $\alpha$.
\item If $E_1,\dots,E_K$ are $(\varepsilon, \delta)$-approximate compound e-values, then e-BH satisfies $(\varepsilon, \delta)$-approximate FDR control, and so it controls
the FDR at level $\alpha(1+\varepsilon) + \delta$. 
\item If $E_1,\dots,E_K$ are asymptotic compound e-values, then e-BH has asymptotic FDR control at level $\alpha $.
\end{itemize}
\end{theorem}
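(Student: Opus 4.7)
The plan is to handle the three bullets in order, with the first being a slight restatement of the original e-BH argument of~\citet{wang2022false}, the second being a localization of that same argument to the event $A$ supplied by Definition~\ref{defi:approx_compound}(ii), and the third being an immediate consequence of the second.

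For the first bullet, the key deterministic observation is that if $k \in \cD$ (i.e., $V_k=1$), then $E_k \ge E_{[k_e^*]} \ge K/(\alpha k_e^*) = K/(\alpha R_{\cD})$, by the very definition of the e-BH threshold $k_e^*$. Equivalently, for every $k \in \cK$,
$$ \frac{V_k}{R_{\cD}} \le \frac{\alpha E_k}{K}, $$
with the convention $0/0=0$ (so the inequality is trivial when $V_k=0$). Summing over $k$ with $\p \in \cP_k$ and taking $\E^\p$ gives
$$\mathrm{FDR}_{\cD}^{\p} = \sum_{k : \p \in \cP_k} \E^\p\!\left[\frac{V_k}{R_{\cD}}\right] \le \frac{\alpha}{K} \sum_{k: \p \in \cP_k} \E^\p[E_k] \le \alpha,$$
using the compound e-value constraint $\sum_{k: \p \in \cP_k}\E^\p[E_k] \le K$ in the last step.

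For the second bullet, fix $\p \in \cP$ and let $A$ be the event furnished by Definition~\ref{defi:approx_compound}(ii) (using atomlessness of $\p$ as assumed in Section~\ref{sec:multiple_testing}). Multiplying the same pointwise inequality $V_k/R_{\cD} \le \alpha E_k/K$ by $\id_A$ and repeating the calculation yields
$$\E^\p\!\left[\frac{F_{\cD}}{R_{\cD}}\id_A\right] \le \frac{\alpha}{K}\sum_{k: \p \in \cP_k} \E^\p[E_k \id_A] \le \alpha(1+\varepsilon_\p),$$
and $\p(A) \ge 1-\delta_\p$ by the definition of $A$. This is exactly $(\varepsilon,\delta)$-approximate FDR control. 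The weaker inequality $\mathrm{FDR}_{\cD}^\p \le \alpha(1+\varepsilon_\p)+\delta_\p$ then follows by writing $\mathrm{FDR}_{\cD}^\p = \E^\p[(F_{\cD}/R_{\cD})\id_A]+\E^\p[(F_{\cD}/R_{\cD})\id_{A^c}]$ and bounding the second term by $\p(A^c)\le \delta_\p$ since $F_{\cD}/R_{\cD}\le 1$.

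For the third bullet, apply the second bullet to each index $m$: if $E_1^{(n(m))},\dots,E_{K(m)}^{(n(m))}$ are $(\varepsilon_m,\delta_m)$-approximate compound e-values with $\varepsilon_m(\p),\delta_m(\p)\to 0$ for every $\p\in\cP$, then the resulting e-BH procedure has $(\varepsilon_m,\delta_m)$-approximate FDR control at level $\alpha$ for each $m$, which is the definition of asymptotic FDR control at level $\alpha$. There is no genuine obstacle in this proof; the one place worth being careful is the pointwise inequality $V_k/R_{\cD}\le \alpha E_k/K$, which must handle the convention $0/0=0$ cleanly, and the use of atomlessness of each $\p\in\cP$ to invoke Definition~\ref{defi:approx_compound}(ii) in the second bullet.
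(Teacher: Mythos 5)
Your proof is correct and takes the expected approach. The paper itself does not spell out a proof of this theorem — it presents the first two bullets as a (slight generalization of a) known result from \citet{wang2022false}, and states in the text immediately after the definition of FDR control that $(\varepsilon,\delta)$-approximate FDR control implies $\mathrm{FDR}_{\cD}^{\p} \le \alpha(1+\varepsilon_\p)+\delta_\p$. Your argument — the deterministic pointwise inequality $V_k/R_{\cD} \le \alpha E_k/K$ from the e-BH definition, summed over null indices and integrated, then localized to the event $A$ from Definition~\ref{defi:approx_compound}(ii), with the final inequality obtained by adding the trivial bound $\E^{\p}[(F_{\cD}/R_{\cD})\id_{A^c}] \le \p(A^c) \le \delta_\p$ — is exactly the standard e-BH argument and its obvious localization, and it handles the $0/0=0$ convention and the atomlessness hypothesis correctly. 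The third bullet is indeed immediate by applying the second pointwise in $m$.
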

Notably, the above result requires no assumption whatsoever on the dependence across the $E_k$. 
The results in~\citet{wang2022false} handled $(\varepsilon, 0)$-approximate compound e-values, so  Theorem~\ref{theo:eBH_controls_the_FDR} offers a slight generalization in that respect. 

Several authors including \citet{ren2024derandomised, li2025note} have observed that specific multiple testing procedures may be recast as e-BH with a specific choice of compound e-values. In fact one can show, as we do next, that this holds for any multiple testing procedure that controls the FDR at a known level. This implies that if one seeks to control the FDR, it is without loss of generality to search among all procedures that apply e-BH to compound e-values.  

In what follows, an FDR procedure $\mathcal D$ at level $\alpha $ is admissible if 
for any FDR procedure $\mathcal D'$ at level $\alpha $ such that $\mathcal D \subseteq \mathcal D'$, we have $
\mathcal D=\mathcal D'$.

\begin{theorem}[Universality of e-BH with compound e-values]
\label{theo:universality_eBH}
Let $\cD$ be any procedure that controls the FDR at a known level $\alpha$ for the hypotheses $\cP_1,\ldots,\cP_K$ under $\cP$. Then, there exists a choice of compound e-values such that the e-BH procedure yields identical discoveries as $\cD$. 
Further, if the original FDR procedure is admissible, then these compound e-values can be chosen to be tight. 
\end{theorem}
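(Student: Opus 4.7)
The strategy is to realize $\cD$ as the output of e-BH by building compound e-values directly from its rejection indicators. The natural candidate is
\[
E_k \;=\; \frac{K}{\alpha R_\cD}\, \id_{\{k \in \cD\}}, \qquad k \in \mathcal{K},
\]
with the convention $E_k = 0$ whenever $R_\cD = 0$. The compound constraint is immediate: for any $\p \in \cP$,
\[
\sum_{k : \p \in \cP_k} \E^\p[E_k] \;=\; \E^\p\!\left[\frac{K}{\alpha R_\cD}\sum_{k \in \mathcal{N}(\p)} \id_{\{k\in \cD\}}\right] \;=\; \frac{K}{\alpha}\, \E^\p\!\left[\frac{F_\cD}{R_\cD}\right] \;=\; \frac{K}{\alpha}\, \mathrm{FDR}_\cD^\p \;\leq\; K,
\]
using the hypothesized FDR control of $\cD$.

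Next I would verify that e-BH at level $\alpha$ reproduces $\cD$. The only non-zero entries of $(E_k)$ are the $R_\cD$ copies of $K/(\alpha R_\cD)$ indexed by $\cD$, so the order statistics satisfy $E_{[k]} = K/(\alpha R_\cD)$ for $k \le R_\cD$ and $E_{[k]} = 0$ otherwise. The threshold condition $k E_{[k]}/K = k/(\alpha R_\cD) \ge 1/\alpha$ therefore holds iff $k \ge R_\cD$, forcing $k_e^* = R_\cD$; the e-BH rejection set is then exactly $\{k : E_k > 0\} = \cD$. The degenerate case $R_\cD = 0$ gives $k_e^* = 0$, matching the empty rejection set.

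For the tight clause, set $\beta := \sup_{\p \in \cP} \mathrm{FDR}_\cD^\p$, so that the construction above yields $\sup_\p \sum_k \E^\p[E_k] = K\beta/\alpha$. When $\beta > 0$, I would rescale to $\widetilde E_k = (K/(\beta R_\cD))\,\id_{\{k \in \cD\}}$: by the same calculation $\sup_\p \sum_k \E^\p[\widetilde E_k] = K$ exactly, and the e-BH analysis at level $\alpha$ is unchanged because $k\widetilde E_{[k]}/K = k/(\beta R_\cD) \ge 1/\alpha$ holds at $k = R_\cD$ (using $\beta \le \alpha$), while entries for $k > R_\cD$ vanish, so $k_e^* = R_\cD$ still.

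The main obstacle is the degenerate case $\beta = 0$, where $F_\cD \equiv 0$ almost surely under every $\p \in \cP$ and a direct rescaling is unavailable. Here I would lean on admissibility: if $\cD \neq \mathcal{K}$, then $\cD' = \mathcal{K}$ cannot be FDR-controlling (else it would be a strict FDR-controlling superset of $\cD$), forcing $\sup_\p |\mathcal N(\p)| > K\alpha$. This allows an augmentation $\widetilde E_k = E_k + c\,\id_{\{k \notin \cD\}}$ with $c = K/\sup_\p |\mathcal N(\p)| \in (0, 1/\alpha)$, which preserves the e-BH output (since $kc/K < 1/\alpha$ for all $k \le K$) while making the compound sum attain $K$ at the maximizing $\p$. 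Balancing $c$ against the e-BH threshold and the null cardinality simultaneously is the delicate part of the argument.
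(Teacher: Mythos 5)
Your construction is exactly the paper's (its $E_k = \frac{K}{\alpha}\frac{V_k}{R_\cD\vee 1}$, which equals your formula under the $R_\cD=0$ convention), and your verification of the compound constraint and of $k_e^*=R_\cD$ match the paper's proof. Where you genuinely diverge is the tightness clause. The paper rescales to the same $\widetilde E_k = (K/(\beta R_\cD))\,\id_{\{k\in\cD\}}$ with $\beta := \sup_{\p\in\cP}\mathrm{FDR}_\cD^\p$ (its $E'_k=(K/K^*)E_k$, $K^*=K\beta/\alpha$), but then appeals to monotonicity plus admissibility: since $\widetilde E\ge E$, e-BH applied to $\widetilde E$ is an FDR-controlling superset of $\cD$, hence equals $\cD$ by admissibility. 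You instead recompute $k_e^*=R_\cD$ directly, which is shorter and, notably, shows that tightness needs no admissibility at all when $\beta>0$; the admissibility hypothesis is only used to handle $\beta=0$. For that degenerate case your additive augmentation $\widetilde E_k=E_k+c\,\id_{\{k\notin\cD\}}$ with $c=K/\sup_\p|\mathcal N(\p)|$ in fact closes out cleanly, so the delicacy you flag resolves: $\beta=0$ forces $V_k=0$ a.s.\ and $\E^\p[E_k]=0$ for $k\in\mathcal N(\p)$, so $\sum_{k:\p\in\cP_k}\E^\p[\widetilde E_k]=c\,|\mathcal N(\p)|$ with supremum $K$; your argument that $\cD'\equiv\cK$ cannot be FDR-controlling gives $\sup_\p|\mathcal N(\p)|>K\alpha$, hence $c<1/\alpha$, so the constant entries never cross the e-BH threshold while the entries $K/(\alpha R_\cD)\ge 1/\alpha>c$ stay at the top, preserving $k_e^*=R_\cD$. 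Your treatment of $\beta=0$ is actually more careful than the paper's: the paper asserts that $K^*=0$ forces $\cD$ to reject nothing and then takes $E'_k\equiv1$, but $\mathrm{FDR}_\cD^\p\equiv0$ for all $\p\in\cP$ does not by itself rule out $\cD$ rejecting non-null hypotheses with positive probability under some $\p\in\cP$, and in that situation $E'_k\equiv1$ (which makes e-BH reject nothing at level $\alpha<1$) would fail to reproduce $\cD$.
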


\begin{proof}
We first construct compound e-values from $\cD$ following~\citet{banerjee2023harnessing}.  Let
\begin{align}\label{eq:universal-e} E_k = \frac{K}{\alpha}\frac{V_k}{R_{\cD} \lor 1}.
\end{align}
Fix any $\p \in \cP$. Then,
$$
\sum_{k: \p \in \mathcal{P}_k} \mathbb E^{\p}[ E_k] = \sum_{k: \p \in \mathcal{P}_k}\mathbb E^{\p}\left[ \frac{K}{\alpha}\frac{V_k}{R_{\cD} \lor 1}\right] = \frac{K}{\alpha}\mathrm{FDR}_{\cD}^{\p} \leq \frac{K}{\alpha}\alpha = K,$$
where the inequality follows from the definition of FDR guarantee. Thus, $E_1,\dots, E_K$ are indeed compound e-variables for $\mathcal{P}_1,\dotsc,\mathcal{P}_K$ under $\cP$, 

By Theorem~\ref{theo:eBH_controls_the_FDR}, e-BH applied to $E_1,\ldots,E_K$ controls the FDR. Moreover, $k_e^*$ in Definition~\ref{defi:eBH} is seen to be equal to $R_{\cD}$. Thus, $\cD$ and e-BH on $E_1,\ldots,E_K$ make the same discoveries.
We postpone the proof of tightness to Appendix~\ref{sec:ebh_admissible_proof}.
\end{proof}

Such a result is not known to hold for the BH procedure, or for any other procedure. The e-BH procedure, coupled with the notion of compound e-values, truly stands out in that respect.

Theorem~\ref{theo:universality_eBH} captures the fundamental role of compound e-values for multiple testing with FDR  control. There are several further practical applications in which compound e-values play a fundamental role in multiple testing. We review four such applications in Appendix~\ref{sec:compound_mtp_addendum}:

\begin{itemize}[leftmargin=*, wide, labelwidth=!, labelindent=0pt]
    \item Appendix~\ref{subsec:derandomization} explains how compound e-values enable combination and derandomization of multiple testing procedures, including a new application to derandomization of the randomized e-BH procedures of~\citet{xu2024more}.
    \item Appendix~\ref{subsec:compound_evalues_as_weights} demonstrates that compound e-values provide a natural notion of weights for the ep-BH procedure of~\citet{ignatiadis2024evalues}.
    \item Appendix~\ref{sec:compound_evalues_to_merge} shows that compound e-values can be used as weights for merging p-values. These results are new. 
    \item Appendix~\ref{subsec:compound_pvalues_multiple_testing} briefly describes the role of compound p-values in multiple testing. Compound p-values can be used for multiple testing via calibration to compound e-values and applying the e-BH procedure; this argument generalizes a result of~\citet{armstrong2022false} and provides an alternative proof that the Benjamini-Yekutieli procedure applied to compound p-values controls the FDR. 
\end{itemize}

\section{Sequence models and compound decisions}
\label{sec:sequence_model}  

In this section we focus on sequence models, wherein the data $X$ may be written as $X=(X_k : k \in \mathcal{K})$ and in principle we may test each hypothesis $\cP_k$ using only $X_k$. 
We record the following definition, which we will call upon throughout this section.
\begin{definition}[Simple separable e-values]
\label{defi:simple_separable}
$E_1,\dots,E_K$ are called separable if for all $k$, $E_k$ is $X_k$-measurable, that is $E_k = E_k(X_k)$. They are called simple separable if $E_k = f(X_k)$ for some function $f$ (which is the same for all $k$).
\end{definition}
In what follows, we provide a brief summary of compound decision theory which motivates the nomenclature ``compound e-values'' (Section~\ref{subsec:compound}) and then we provide two constructions of (approximate) compound e-values motivated by compound decision theory (Sections~\ref{subsec:best_simple_separable}  and~\ref{subsec:compound_ui}).

\subsection{Compound decision theory}
\label{subsec:compound}
Why do we use the term compound e-values? Our motivation is to pay tribute and connect the definition to Robbins' compound decision theory~\citep{robbins1951asymptotically} in which multiple statistical problems are connected through a (compound) loss function that averages over individual losses. We refer the reader to e.g.~\citet{copas1969compound, zhang2003compound, jiang2009general} for more comprehensive accounts and present  a brief overview here. As a very concrete example (which was studied in detail already in~\citet{robbins1951asymptotically}), consider the Gaussian sequence model:
\begin{equation}
\label{eq:gaussian_sequence}
\boldmu = (\mu_1,\dotsc,\mu_K) \text{ fixed},\quad X_k \simindep \mathrm{N}(\mu_k, 1) \text{ for } k \in \mathcal{K}.
\end{equation}
Robbins was interested in constructing estimators $\hat{\mu}_k$ of $\mu_k$ such that the following expected compound loss would be small:
\begin{equation}
\label{eq:compound_risk}
\frac{1}{K}\sum_{k \in \mathcal{K}} \E[(\hat{\mu}_k - \mu_k)^2].
\end{equation}
If $\mu_k \simiid M$ for some prior distribution $M$, then the best estimator is the Bayes estimator $\hat{\mu}_k^{B} = \E^M{[\mu_k \mid X]} = \E^M{[\mu_k \mid X_k]}$, which is simple separable (in the sense of Definition~\ref{defi:simple_separable}).\footnote{The conditional expectations in the above displays refer to the joint distribution of $(\mu_k, X_k),\, k \in \mathcal{K}$.} Robbins 
asked: Suppose $\boldmu$ is deterministic as in~\eqref{eq:gaussian_sequence}. Given knowledge of $\boldmu$,
which function $s_{\boldmu}: \R \to \R$ leads to a simple separable estimator $\hat{\mu}_k^{s_{\boldmu}} = s_{\boldmu}(X_k)$ that minimizes the risk in~\eqref{eq:compound_risk}? The answer lies in the fundamental theorem of compound decisions, which formally connects~\eqref{eq:gaussian_sequence} to the univariate Bayesian problem with prior $M(\boldmu) = \sum_{k \in \mathcal{K}} \delta_{\mu_k}/K$ equal to the empirical distribution of $\boldmu$ (with $\delta_{\mu_k}$ denoting the Dirac measure at $\mu_k$):
\begin{equation}
\label{eq:bayes_compound}
\mu' \sim M(\boldmu) ,\;\; X' \mid \mu' \sim \mathrm{N}(\mu', 1). 
\end{equation}
The fundamental theorem of compound decision theory states the following. Given any fixed  $s: \R \to \R$, we have that
\begin{equation}
\label{eq:fundamental_compound_decisions}
\frac{1}{K}\sum_{k \in \mathcal{K}} \E^{\boldmu}[(s(X_k) - \mu_k)^2]  = \sum_{k \in \mathcal{K}}\frac{1}{K}\E^{\mu_k}[(s(X_k) - \mu_k)^2] = \E^{M(\boldmu)}[(s(X') - \mu')^2].
\end{equation}
In the left hand side display, $\boldmu \in \R^{K}$ is treated as fixed (as in~\eqref{eq:gaussian_sequence}), while in the right hand side display, we only have a single random $\mu' \in \R$ that is randomly drawn from $M(\boldmu)$ (as in~\eqref{eq:bayes_compound}). From~\eqref{eq:fundamental_compound_decisions} it immediately follows that the optimal simple separable estimator is the Bayes estimator under $M(\boldmu)$, that is $s(x) = \E^{M(\boldmu)}[\mu' \mid X'=x]$.
This construction motivated Robbins to construct feasible non-separable estimators $\hat{\mu}_k = \hat{\mu}_k(X)$ that have risk close to the optimal simple separable estimator.\footnote{Thus the optimal simple separable estimator defines an oracle benchmark. Just as in classical decision theory, one often restricts attention to subclasses of estimators, e.g., equivariant or unbiased, Robbins set a benchmark defined by simple separable estimators. What is slightly unconventional here is a fundamental asymmetry: the oracle estimator must be simple separable but has access to the true $\boldmu$, while the feasible estimator is non-separable but must work without knowledge of $\boldmu$.}
Empirical Bayes~\citep{robbins1956empirical} ideas work well for this task, for example,~\citet{jiang2009general} provide sharp guarantees for the performance of a nonparametric empirical Bayes method in mimicking the best simple separable estimator in the Gaussian sequence model in~\eqref{eq:gaussian_sequence} as $K \to \infty$.

\subsection{The optimal discovery compound e-values}
\label{subsec:best_simple_separable}
The connection of compound e-values to the fundamental theorem of compound decisions is born out from averaging a suitable statistical criterion, e.g., replacing the criterion $\E^{\p}[E_k] \leq 1$ for all $\p \in \mathcal{P}_k$ by a corresponding requirement summed over all null $k$. 
To further clarify the connection, we provide a construction of optimal simple separable compound e-values in sequence models with dominated null marginals via the fundamental theorem of compound decisions.

Suppose that $X=(X_k : k \in \mathcal{K})$, where $X_k$ takes values in a space $\mathcal{X}$. For any $\p \in \mathcal{P}$, let $\p_k$ be the $k$-th marginal of $\p$, that is, the distribution of $X_k$ when $(X_k : k \in \mathcal{K}) \sim \p$.
Suppose that for all $k \in \mathcal{K}$, the $k$-th hypothesis
$\cP_k$ is specified as a simple point null hypothesis on the $k$-th marginal, i.e., $\mathcal{P}_k = \{\p\,:\, \p_k = \p_k^{\circ}\}$ for some prespecified distribution $\p_k^{\circ}$ with $\dd\nu$-density equal to $p_k^{\circ}$, where $\nu$ is a common dominating measure. We assume that there exists $\p \in \cP$ such that $\p_k = \p_k^{\circ}$ for all $k \in \mathcal{K}$ (as would be the case, e.g., if the $X_k$ are independent and we take the product measure $\p = \bigotimes_{k \in \mathcal{K}}  \p_k^{\circ}$).

For $k \in \mathcal{K}$, we let $\mathbb Q_k^{\circ}$ be distributions on $\mathcal{X}$ with $\dd\nu$-densities $q_k^{\circ}$. We seek to solve the following optimization problem:    
\begin{equation}
\label{eq:optim_compound}
    \begin{aligned}
    & \underset{s(\cdot)}{\text{maximize}} 
    & & \frac{1}{K}\sum_{k \in \mathcal{K}} \E^{\mathbb Q_k^{\circ}}[\log(s(X_k))] \\
    & \text{subject to}
    & & s: \mathcal{X} \to [0,\infty] \\
    & & & s(X_1),\dots,s(X_K) \text{ are } \text{compound e-values}.
    \end{aligned}
\end{equation}

\begin{theorem}[Optimal simple separable compound e-values]
\label{th:compound-optimal}
The optimal solution to optimization problem~\eqref{eq:optim_compound} is given by the likelihood ratio of mixtures over the alternative and the null:
\begin{equation}
\label{eq:optimal_evalue}
s(x) = \frac{\sum_{j \in \mathcal{K}} q_j^{\circ}(x)}{\sum_{ j \in \mathcal{K}} p_j^{\circ}(x)}.
\end{equation}
\end{theorem}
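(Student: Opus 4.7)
The plan is to reduce the compound-decision optimization problem~\eqref{eq:optim_compound} to the classical log-optimal e-value problem for the null mixture against the alternative mixture, and then invoke the standard KL / Gibbs inequality.

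First, I would simplify the compound constraint. For each $k$, any $\p \in \cP_k$ satisfies $\p_k = \p_k^\circ$, so $\E^{\p}[s(X_k)] = \int s(x) p_k^\circ(x) \dd\nu(x)$ depends only on $\p_k^\circ$, not on which $\p \in \cP_k$ we pick. Consequently the compound e-value requirement
$$\sum_{k : \p \in \cP_k} \E^{\p}[s(X_k)] \le K \qquad \text{for all } \p \in \cP$$
is tightest at the distinguished $\p \in \cP$ under which every $k$ is null (which exists by assumption). Writing $\bar p(x) = K^{-1} \sum_{k \in \cK} p_k^\circ(x)$, the constraint therefore collapses to
$$\int s(x) \bar p(x) \dd\nu(x) \le 1,$$
i.e., $s$ must be an ordinary e-variable against the mixture null $\bar p\,\dd\nu$.

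Next, I would rewrite the objective. With $\bar q(x) = K^{-1} \sum_{k \in \cK} q_k^\circ(x)$, linearity of expectation gives
$$\frac{1}{K} \sum_{k \in \cK} \E^{\mathbb Q_k^\circ}[\log s(X_k)] = \int \log s(x)\, \bar q(x) \dd\nu(x).$$
So problem~\eqref{eq:optim_compound} is exactly: maximize $\int \log s(x)\, \bar q(x) \dd\nu(x)$ over measurable $s: \cX \to [0,\infty]$ with $\int s\, \bar p \dd\nu \le 1$. The candidate~\eqref{eq:optimal_evalue} is precisely $s^\star = \bar q / \bar p$, which attains the constraint with equality.

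For optimality I would apply the standard log-e-value argument: for any admissible $s$, using $\log x \le x - 1$,
$$\int \log\!\left(\frac{s \bar p}{\bar q}\right) \bar q \dd\nu \le \int \!\left(\frac{s \bar p}{\bar q} - 1\right) \bar q \dd\nu = \int s\, \bar p \dd\nu - 1 \le 0,$$
which rearranges to $\int \log s\, \bar q \dd\nu \le \int \log(\bar q/\bar p)\, \bar q \dd\nu$. Equality holds at $s = s^\star$, proving the theorem. Minor care is needed where $\bar p$ or $\bar q$ vanish (set $s^\star = 0$ or $\infty$ appropriately with the usual conventions $0 \log 0 = 0$ and $\infty \cdot 0 = 0$), but no real obstacle arises: the hardest part is merely the initial reduction, i.e., recognizing that the worst case of the compound null is the joint distribution making every coordinate null, which pools the individual null constraints into a single constraint against the mixture $\bar p$.
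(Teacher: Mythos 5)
Your proof is correct and follows essentially the same route as the paper: reduce to the mixture-vs-mixture problem via the fundamental theorem of compound decisions and then solve the resulting Kelly-type problem. The one worthwhile difference is in handling the compound constraint. The paper relaxes the constraint to the single all-nulls distribution $\p^\circ$, solves the relaxed problem (citing Kelly 1956), and then separately verifies that $s^\star$ satisfies the full compound constraint for every $\p \in \cP$. You instead observe up front that, because $s \geq 0$ and each term $\E^\p[s(X_k)] = \int s\,p_k^\circ \dd\nu$ is nonnegative, the constraint is tightest at the all-nulls $\p^\circ$ and hence the compound constraint is \emph{equivalent} to the single mixture constraint $\int s\,\bar{p}\,\dd\nu \le 1$. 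This makes the final verification unnecessary. You also make the optimality argument self-contained via $\log x \le x-1$ rather than citing Kelly. Both are clean; your constraint observation slightly streamlines the logic.
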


\begin{proof}
Let $\p^{\circ}$ be a probability measure with $k$-th marginal given by $\p_k^{\circ}$ for all $k \in \mathcal{K}$. We will first solve the optimization problem subject to the weaker constraint that $s(X_1),\dotsc,s(X_K)$ satisfy $\sum_{k \in \mathcal{K}} \E^{\p^{\circ}}[s(X_k)] \leq K$.

We will next provide an argument inspired by Robbins' fundamental theorem of compound decisions. Define the mixture distributions  $\overline{\p}^{\circ} = \sum_{k \in \mathcal{K}} \p_k^{\circ}/K$ and $\overline{\mathbb Q}^{\circ} = \sum_{k \in \mathcal{K}} \mathbb Q_k^{\circ}/K$. Then, for any fixed $s : \mathcal{X} \to [0,\infty]$, we have the following formal equalities:
$$
\frac{1}{K}\sum_{k \in \mathcal{K}} \E^{\p_k^{\circ}}[\log(s(X_k))] = \E^{X' \sim \overline{\p}^{\circ}}[\log(s(X'))],\;\;\; \frac{1}{K}\sum_{k \in \mathcal{K}} \E^{\mathbb Q_k^{\circ}}[s(X_k)] = \E^{X' \sim \overline{\mathbb Q}^{\circ}}[ s(X')].
$$
Thus, it suffices to solve the following optimization problem:
\begin{equation*}
    \begin{aligned}
    & \underset{s(\cdot)}{\text{maximize}} 
    & & \E^{X' \sim \overline{\mathbb Q}^{\circ}}[\log(s(X'))] \\
    & \text{subject to}
    & & s : \mathcal{X} \to [0,\infty] \\
    & & & \E^{X' \sim \overline{\p}^{\circ}}[s(X')] \leq 1.
\end{aligned}
\end{equation*}
This is a well-known optimization problem~\citep{kelly1956new} 
and has solution given by~\eqref{eq:optimal_evalue}. Finally we may verify that $s(\cdot)$ in~\eqref{eq:optimal_evalue} indeed yields compound e-values. To this end, write $\overline{q}^{\circ}$ for the $\dd \nu$ density of $\overline{\mathbb Q}^{\circ}$. Then, we have the following for any $\p$:
$$\sum_{k: \p_k = \p_k^{\circ} } \mathbb E^{\p}[s(X_k)] = \sum_{k: \p_k = \p_k^{\circ} }\int \frac{K \overline{q}^{\circ}(x)}{ \sum_{ j \in \mathcal{K}} p_j^{\circ}(x)}p_k^{\circ}(x)\nu(\dd x)\leq \int K \overline{q}^{\circ}(x) \nu(\dd x) = K,$$
and this completes the proof.
\end{proof}
The best simple separable compound e-values in~\eqref{eq:optimal_evalue} take precisely the form of the statistics of the optimal discovery procedure (ODP) of~\citet{storey2007optimal} and~\citet*{ storey2007optimala}. We briefly recall the  statistical guarantee of the ODP. We call a multiple testing procedure $\mathcal{D}$ simple and separable (cf. Definition~\ref{defi:simple_separable}) if there exists a function $\phi: \mathcal{X} \to \{0,1\}$ such that $\id_{\{k \in \mathcal{D}\}} = \phi(X_k)$ for all $k \in \mathcal{K}$.~\citet{storey2007optimal} proves the following.
\begin{proposition}[Neyman-Pearson optimality of ODP]
Let $s(\cdot)$ be defined as in~\eqref{eq:optimal_evalue}. Then, for any function $\phi:\mathcal{X} \to \{0,1\}$ and any $c>0$, the following implication holds:
$$\sum_{k \in \mathcal{K}} \mathbb E^{\p_k}[\phi(X_k)] \leq \sum_{k \in \mathcal{K}} \mathbb E^{\p_k}[\id_{\{s(X_k) \geq c\}}] \,\;\Longrightarrow \;\, 
\sum_{k \in \mathcal{K}} \mathbb E^{\mathbb Q_k}[\phi(X_k)] \leq \sum_{k \in \mathcal{K}} \mathbb E^{\mathbb Q_k}[\id_{\{s(X_k) \geq c\}}]. 
$$
\end{proposition}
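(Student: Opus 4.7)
The plan is to reduce the displayed implication to the classical Neyman--Pearson lemma applied to the simple-versus-simple testing problem of the mixture $\overline{\p}^{\circ}$ versus the mixture $\overline{\mathbb Q}^{\circ}$, using the very same mixture-density trick that appeared in the proof of Theorem~\ref{th:compound-optimal}. Write $\overline{p}^{\circ} = K^{-1}\sum_{k \in \cK} p_k^{\circ}$ and $\overline{q}^{\circ} = K^{-1}\sum_{k \in \cK} q_k^{\circ}$, so that $s(x) = \overline{q}^{\circ}(x)/\overline{p}^{\circ}(x)$ with the convention $s \equiv +\infty$ on $\{\overline{p}^{\circ} = 0\}$. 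First I would collapse each of the four sums appearing in the proposition into a single integral against a mixture, using
\[
\sum_{k \in \cK}\E^{\p_k^{\circ}}[\phi(X_k)] \;=\; K \int \phi(x)\,\overline{p}^{\circ}(x)\,\nu(\d x),
\]
together with the analogous identities in which $\overline{p}^{\circ}$ is replaced by $\overline{q}^{\circ}$ or $\phi$ is replaced by $\id_{\{s \ge c\}}$. After this rewriting, the whole implication becomes
\[
\int \phi\,\overline{p}^{\circ}\,\d\nu \;\le\; \int \id_{\{s \ge c\}}\,\overline{p}^{\circ}\,\d\nu
\quad\Longrightarrow\quad
\int \phi\,\overline{q}^{\circ}\,\d\nu \;\le\; \int \id_{\{s \ge c\}}\,\overline{q}^{\circ}\,\d\nu,
\]
which is precisely the statement that thresholding the likelihood ratio $\overline{q}^{\circ}/\overline{p}^{\circ}$ at level $c$ is Neyman--Pearson optimal at its own size.

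Next I would invoke the standard pointwise inequality underlying Neyman--Pearson:
\[
\bigl(\id_{\{s(x) \ge c\}} - \phi(x)\bigr)\bigl(\overline{q}^{\circ}(x) - c\,\overline{p}^{\circ}(x)\bigr) \;\ge\; 0 \quad \text{for $\nu$-a.e. } x,
\]
which holds because on $\{s \ge c\}$ both factors are nonnegative and on $\{s < c\}$ both are nonpositive. Integrating over $\nu$ and rearranging yields
\[
\int \id_{\{s \ge c\}}\,\overline{q}^{\circ}\,\d\nu - \int \phi\,\overline{q}^{\circ}\,\d\nu \;\ge\; c\left(\int \id_{\{s \ge c\}}\,\overline{p}^{\circ}\,\d\nu - \int \phi\,\overline{p}^{\circ}\,\d\nu\right).
\]
By the hypothesis, the parenthesized quantity on the right-hand side is nonnegative; since $c > 0$, the left-hand side is nonnegative as well, and multiplying by $K$ recovers the desired conclusion.

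There is no genuinely hard step here. The only bookkeeping concerns are to interpret $s$ as $+\infty$ on $\{\overline{p}^{\circ} = 0\}$ so that this set automatically lies in the rejection region (consistent with our convention that e-values take values in $[0,\infty]$), and to note that ties on $\{s = c\}$ are harmless because the second factor of the pointwise inequality vanishes there. The substantive ingredient is the collapse of per-hypothesis sums into mixture integrals, which is the content of the fundamental theorem of compound decisions reviewed in Section~\ref{subsec:compound}; once that identification is made, the remainder is a textbook application of the classical Neyman--Pearson lemma.
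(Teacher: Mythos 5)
Your proof is correct. The paper itself does not supply an argument for this proposition—it simply cites~\citet{storey2007optimal}—and your approach (collapsing each per-hypothesis sum into an integral against the mixture density, then applying the pointwise Neyman--Pearson inequality $\bigl(\id_{\{s \ge c\}} - \phi\bigr)\bigl(\overline{q}^{\circ} - c\,\overline{p}^{\circ}\bigr) \ge 0$ and integrating) is exactly the standard Neyman--Pearson-style argument that Storey gives, re-expressed in the mixture-density language of Theorem~\ref{th:compound-optimal}; the edge-case handling on $\{\overline{p}^{\circ}=0\}$ and ties at $s=c$ is also sound.
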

The interpretation of the result is as follows~\citep[Section 5.2]{storey2007optimal}. Suppose for each $k \in \mathcal{K}$ we draw $X_k$ with probability $\pi_0 \in (0,1)$ from $\p_k$ and probability $1-\pi_0$ from $\mathbb Q_k$. Then the procedure that thresholds the optimal discovery statistics $s(X_k)$ maximizes the expected number of true discoveries among all simple separable multiple testing procedures with the same expected number of false discoveries. Together with Theorem~\ref{th:compound-optimal}, this motivates our terminology: we call $s(X_1),\ldots,s(X_k)$ the optimal discovery compound e-values.

Some remarks are in order:
\begin{itemize}[leftmargin=*, wide, labelwidth=!, labelindent=0pt]
\item The objective $\E[\log(s )]$ that we use in \eqref{eq:optim_compound} is a long-standing criterion of optimality; see the classic works of \cite{kelly1956new,breiman1961optimal,bell1980competitive},
and the more recent works of \cite{shafer2021testing,grunwald2024safe,vovk2024nonparametric,waudby2024estimating,larsson2025numeraire}.
This quantity represents the rate of growth of e-processes in a single-hypothesis setting. 
Other utility functions beyond logarithmic utility can be used in \eqref{eq:optim_compound} and lead to different forms of optimal simple separable compound e-values. Our framework provides a general method for translating single e-value results to the compound setting. In Appendix~\ref{sec:optimal_compound_general_utilities}, we illustrate this approach using results of~\citet{koning2025continuous} on optimal e-values with more general utility functions.
\item The objective value of the optimization problem is given by \smash{$\DKL{\overline{\mathbb Q}^{\circ}}{\overline{\p}^{\circ}}$}, where $\DKL{\cdot}{\cdot}$ is the Kullback-Leibler divergence. By convexity, \smash{$\DKL{\overline{\mathbb Q}^{\circ}}{\overline{\p}^{\circ}} \leq \sum_{k \in \mathcal{K}} \DKL{\mathbb Q_k^{\circ}}{\p_k^{\circ}}/K$}, which means that the optimal discovery compound e-values have a worse objective than using the optimal separable e-value for each individual testing problem of $\p_k^{\circ}$ vs $\mathbb Q_k^{\circ}$ given by $E_k^* = q_k^{\circ}(X_k)/p_k^{\circ}(X_k)$.
In view of the above, why are the optimal discovery compound e-values of interest? The reason is that if $K$ is large (but the individual statistical problems have limited sample sizes), then it becomes possible to mimic the optimal discovery compound e-values without exact knowledge of the null $\p_k^{\circ}$, e.g., via empirical Bayes methods. Such a strategy is fruitful when the hypotheses $\cP_k$ are in fact composite. We demonstrate such a strategy explicitly in Section~\ref{sec:ttest} on simultaneous t-tests where the nuisance parameters are the variances.
\end{itemize}

We finally note that any tight simple separable compound e-value for testing point nulls $\mathcal{P}_k = \{\p\,:\, \p_k = \p_k^\circ\}$ must be of the form~\eqref{eq:optimal_evalue} for some densities $q_1^\circ,\dots,q_K^\circ$.

\begin{proposition}
\label{prop:compound_separable}
Suppose that $E_1,\dots,E_K$ are tight simple separable compound e-values (with the nulls defined as in the previous parts of this section).  
Then there exist $\dd \nu$-densities $q_1^\circ,\dots,q_K^\circ$ such that $E_k$ may be represented as in~\eqref{eq:optimal_evalue} on the support of $\sum_{k \in \mathcal{K}} \p_k^\circ/K$ and it is without loss of generality to take $q_1^\circ=\dots=q_K^\circ$. 
\end{proposition}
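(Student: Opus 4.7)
The plan is to extract the common simple separable function and read off the required density directly from the tightness condition. First, by simple separability (Definition~\ref{defi:simple_separable}), I would write $E_k = f(X_k)$ for some common $f:\mathcal{X}\to[0,\infty]$. Next I would unpack tightness: for any $\p \in \cP$, each summand $\E^\p[f(X_k)]$ with $\p_k = \p_k^\circ$ equals $\int f(x)\,p_k^\circ(x)\,\nu(\d x)$, which depends only on the marginal and is nonnegative. The supremum $\sup_{\p\in\cP}\sum_{k:\,\p_k=\p_k^\circ}\E^\p[f(X_k)]$ is therefore attained at the measure in $\cP$ with all marginals equal to $\p_k^\circ$ (which exists by the standing assumption in Section~\ref{subsec:best_simple_separable}), so tightness yields
\[
\sum_{k\in\cK} \int f(x)\, p_k^\circ(x)\,\nu(\d x) \;=\; K, \qquad \text{equivalently} \qquad \int f(x)\,\overline{p}^\circ(x)\,\nu(\d x) \;=\; 1,
\]
where $\overline{p}^\circ := \sum_{j\in\cK} p_j^\circ / K$.

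The central step is then to define $q(x) := f(x)\,\overline{p}^\circ(x)$, interpreting $\infty\cdot 0 = 0$. The displayed equality forces $f$ to be $\overline{p}^\circ$-a.e.\ finite, so $q$ is a nonnegative measurable function integrating to one, i.e., a $\dd\nu$-density. On the support of $\sum_{j\in\cK} p_j^\circ$ (equivalently, where $\overline{p}^\circ > 0$) we can solve for $f$ to obtain
\[
f(x) \;=\; \frac{q(x)}{\overline{p}^\circ(x)} \;=\; \frac{K\,q(x)}{\sum_{j\in\cK} p_j^\circ(x)}.
\]
Taking $q_1^\circ = \dotsb = q_K^\circ := q$ produces $\sum_j q_j^\circ(x) = K\,q(x)$, matching~\eqref{eq:optimal_evalue} exactly. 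The ``without loss of generality'' clause is then immediate: only the sum $\sum_j q_j^\circ$ enters~\eqref{eq:optimal_evalue}, so any family of $\dd\nu$-densities with prescribed sum $Kq$ gives the same $E_k$, and the all-equal choice is the canonical representative.

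There is no real obstacle once the setup is understood; the only subtlety I would want to verify carefully is the tightness step, namely that the supremum is attained at the all-marginal-$\p_k^\circ$ measure and equals $K$. This follows from non-negativity of $f$, monotonicity of the sum in the index set $\{k:\p_k=\p_k^\circ\}$, and the standing assumption that such a measure belongs to $\cP$. A secondary (minor) point is the measure-theoretic bookkeeping when $f$ attains the value $\infty$ on a $\overline{p}^\circ$-null set, which is handled by the $\infty \cdot 0 = 0$ convention and changes $q$ only on a $\nu$-null subset of the support.
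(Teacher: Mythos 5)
Your proof is correct and follows essentially the same route as the paper's: extract the common separable function, evaluate the tightness constraint at the all-nulls measure to get $\int f\,\overline{p}^\circ\,\d\nu = 1$, and set $q_1^\circ=\dots=q_K^\circ = f\,\overline{p}^\circ$. You add a bit more care on the attainment of the supremum and the $\infty\cdot 0$ bookkeeping, but these are refinements of the same argument rather than a different approach.
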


\subsection{Localized and compound universal inference}
\label{subsec:compound_ui}

We now slightly modify the setting of Section~\ref{subsec:best_simple_separable} to accommodate composite null hypotheses with nuisance parameters. We continue to suppose that $X=(X_k : k \in \mathcal{K})$, where $X_k$ takes values in a space $\mathcal{X}$ and that the $k$-th hypothesis $\cP_k$ is specified solely in terms of the marginal distribution of $X_k$. However, we now suppose that $\cP_k$ is composite and that $k$-th null marginal may be parameterized by a nuisance parameter $\psi_k$ that lies in a parameter space $\Psi$. Concretely, $\mathcal{P}_k = \{\p: \dd \p_k/ \dd \nu = p^\circ_{\psi_k} \text{ for some } \psi_k \in \Psi\}$, where $\{p^\circ_{\psi} : \psi \in \Psi\}$ is a family of $\dd \nu$-densities on $\mathcal{X}$. We seek to test each composite null $\cP_k$ against the alternative $\{\mathbb Q: \dd \mathbb Q_k/ \dd \nu = q^\circ\}$, where $q^\circ$ is a further $\dd \nu$-density on $\mathcal{X}$ (which we choose to be the same for all $k \in \mathcal{K}$).

In the above setting, universal inference (UI) of~\citet{wasserman2020universal} provides a flexible and practical way of constructing separable e-values:\footnote{Throughout this section we assume for simplicity that infima are attained.}
\begin{equation}
E_k^{\mathrm{UI}} = \frac{ q^\circ(X_k)}{ p^\circ_{\widehat{\psi}_k^{\mathrm{UI}}}(X_k)},\,\,\,\;\widehat{\psi}_k^{\mathrm{UI}} \in \argmin \left\{ \frac{q^\circ(X_k)}{p^\circ_{\psi}(X_k)}\,:\, \psi \in \Psi \right\}. 
\label{eq:UI}
\end{equation}
It is easy to see that $E_k^{\mathrm{UI}}$ is a valid e-value. 

We now propose an alternative to the universal inference e-value above that we call localized universal inference (LUI) e-value. Let \smash{$\widehat{\Psi}_k(\delta) \subseteq\Psi$} be a $(1-\delta)$-confidence set for the nuisance parameter $\psi_k$. Then define:
\begin{equation}
E_k^{\mathrm{LUI}} = \frac{ q^\circ(X_k)}{ p^\circ_{\widehat{\psi}_k^{\mathrm{LUI}}}(X_k)},\,\,\, \; \widehat{\psi}_k^{\mathrm{LUI}} \in \argmin \left\{ \frac{q^\circ(X_k)}{p^\circ_{\psi}(X_k)}\,:\, \psi \in \widehat{\Psi}_k(\delta) \right\}. 
\label{eq:LUI}
\end{equation}
We have the following result.
\begin{proposition}
\label{prop:lui}
Suppose that $\widehat{\Psi}_k(\delta)$ is a $(1-\delta)$-confidence set for $\psi_k$ for all $\p \in \cP_k$. Then $E_k^{\mathrm{LUI}}$ defined in~\eqref{eq:LUI} is a $(0,\delta)$-approximate e-variable for $\cP_k$.
\end{proposition}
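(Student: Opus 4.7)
The plan is to verify condition (E2) of Proposition~\ref{prop:alter-p} for $E_k^{\mathrm{LUI}}$ under every $\p \in \cP_k$ with $\varepsilon_\p = 0$, and then appeal to the implication (E2)$\Rightarrow$(E1), which holds without any atomless assumption, to conclude that $E_k^{\mathrm{LUI}}$ is a $(0,\delta)$-approximate e-variable.

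Fix $\p\in\cP_k$ and let $\psi_k\in\Psi$ be the true nuisance parameter, so $\dd\p_k/\dd\nu = p^\circ_{\psi_k}$. The natural witnessing event is
$$A \;=\; \{\psi_k \in \widehat{\Psi}_k(\delta)\},$$
which satisfies $\p(A)\ge 1-\delta$ by the confidence-set property. On $A$, the true $\psi_k$ is feasible for the argmin in~\eqref{eq:LUI}, so the minimizer achieves a value no larger than the one at $\psi_k$; equivalently $p^\circ_{\widehat{\psi}_k^{\mathrm{LUI}}}(X_k)\ge p^\circ_{\psi_k}(X_k)$, and hence
$$
E_k^{\mathrm{LUI}} \id_A \;\le\; \frac{q^\circ(X_k)}{p^\circ_{\psi_k}(X_k)}\, \id_A.
$$
Taking $\p$-expectations and using that $X_k$ has density $p^\circ_{\psi_k}$ under $\p$ bounds $\E^\p[E_k^{\mathrm{LUI}}\id_A]$ above by $\int q^\circ \dd \nu = 1$, which is exactly (E2) with $\varepsilon_\p=0$.

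I do not expect a serious obstacle here; the conceptual point is simply that LUI replaces the global supremum in the UI denominator by a supremum over a confidence set, and the $(1-\delta)$ coverage of that set is precisely what compensates for this restriction on a high-probability event. The only minor care is the direction of the argmin (it effectively maximizes the null density in the denominator, since $q^\circ(X_k)$ does not depend on $\psi$), and handling the convention at points where $p^\circ_{\psi_k}(X_k)=0$, which form a $\p$-null set and therefore do not affect the expectation.
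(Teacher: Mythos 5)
Your proof is correct and takes essentially the same route as the paper's: fix $\p \in \cP_k$, take $A = \{\psi_k \in \widehat{\Psi}_k(\delta)\}$, use feasibility of the true $\psi_k$ on $A$ to bound $E_k^{\mathrm{LUI}}\id_A \le q^\circ(X_k)/p^\circ_{\psi_k}(X_k)$, take expectations, and invoke (E2)$\Rightarrow$(E1) of Proposition~\ref{prop:alter-p}. The extra remarks about the argmin direction and the $\p$-null set where $p^\circ_{\psi_k}(X_k)=0$ are accurate but not substantively different from what the paper does.
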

The construction of LUI e-values is analogous to the construction of p-values by maximizing over a confidence set for the nuisance parameter by~\citet{berger1994values}. Notice that if \smash{$\widehat{\psi}_k^{\mathrm{UI}} \in \widehat{\Psi}_k(\delta)$}, then $E_k^{\mathrm{LUI}}= E_k^{\mathrm{UI}}$. This simple observation allows us to highlight the difference of UI and LUI.
Under the alternative, $X_k \sim \mathbb Q_k$, it may hold (depending on $\mathbb Q_k$ and the construction of \smash{$\widehat{\Psi}_k(\delta)$}) that \smash{$\widehat{\psi}_k^{\mathrm{UI}} \notin \widehat{\Psi}_k(\delta)$} with high probability and thus also that $E_k^{\mathrm{UI}} < E_k^{\mathrm{LUI}}$.

Now suppose instead that \smash{$\widehat{\Psi}^K(\delta) \subseteq\Psi^K$} is a $(1-\delta)$-simultaneous confidence set for $\boldpsi = (\psi_1,\ldots,\psi_K)$ and let \smash{$\widehat{\Psi}_k(\delta)$} be the projection of \smash{$\widehat{\Psi}^K(\delta)$} onto the $k$-th coordinate. Then it follows that \smash{$ E_1^{\mathrm{LUI}},\ldots,  E_K^{\mathrm{LUI}}$} are $(0,\delta)$-approximate compound e-variables for $\cP_1,\ldots,\cP_K$. This construction, however, may be very conservative and does not borrow strength across problems to yield more powerful approximate compound e-variables.

There are two ideas that lead us to a more powerful construction. First, in compound decision problems, even though it may not be possible to learn individual nuisance parameters accurately (e.g., $\boldpsi = (\psi_1,\ldots,\psi_K)$), it may be possible to accurately learn the ensemble as represented by its empirical distribution
$$G(\boldpsi) = \frac{1}{K}\sum_{k \in \mathcal{K}} \delta_{\psi_k}.$$
Specifically, let $\mathcal{G}$ be the set of all distributions supported on $\Psi$. We say that $\widehat{\mathcal{G}}(\delta) \subseteq\mathcal{G}$ is a $(1-\delta)$-localization of $G(\boldpsi)$ if:
$$
\p( G(\boldpsi) \in   \widehat{\mathcal{G}}(\delta)    ) \geq 1-\delta \; \text{ for all }\; \p \in \cP.
$$
We provide an explicit example of such a construction in Section~\ref{subsec:compound_ttest}. The second idea is that we can profile over the nuisance parameters appearing in the optimal discovery compound e-values of Theorem~\ref{th:compound-optimal} by lifting to the space of distributions. Compound universal inference (CUI) e-values are defined as follows:\footnote{ 
\citet{nguyen2023finite} use universal inference for an empirical Bayes application. By contrast, here we use ideas from empirical Bayes and universal inference to provide a new construction of compound e-values.
}
\begin{equation}
E_k^{\mathrm{CUI}} = \frac{q^\circ(X_k)}{  \int p^\circ_{\psi}(X_k) \dd \widehat{G}_k^{\mathrm{CUI}}(\psi)},\,\,\, \; \widehat{G}_k^{\mathrm{CUI}}  \in \argmin \left\{ \frac{ q^\circ(X_k)}{ \int p^\circ_{\psi}(X_k) \dd G(\psi)}\,:\, G \in  \widehat{\mathcal{G}}(\delta) \right\}. 
\label{eq:CUI}
\end{equation}
\begin{proposition}
\label{prop:cui}
Suppose that $\widehat{G}(\delta)$ is a $(1-\delta)$-localization for $G(\boldpsi)$ for all $\p \in \cP$. Then $E_1^{\mathrm{CUI}},\ldots, E_1^{\mathrm{CUI}}$ defined in~\eqref{eq:CUI} are $(0,\delta)$-approximate compound e-variables for $\cP_1,\ldots,\cP_K$ under $\cP$.
\end{proposition}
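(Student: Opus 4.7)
The plan is to verify Definition~\ref{defi:approx_compound}(ii) directly with $\varepsilon = 0$ and the prescribed $\delta$. The natural candidate event is the localization event $A = \{G(\boldpsi) \in \widehat{\mathcal{G}}(\delta)\}$, which by the defining property of $(1-\delta)$-localization satisfies $\p(A) \geq 1-\delta$ for every $\p \in \cP$. The core of the proof is then a pointwise domination step on $A$: since $G(\boldpsi)$ is a feasible point of the minimization problem in~\eqref{eq:CUI} whenever $\omega \in A$, and $\widehat{G}_k^{\mathrm{CUI}}$ is (by the assumption on attainment) a minimizer of the ratio $q^\circ(X_k)/\!\int p^\circ_\psi(X_k)\,\dd G(\psi)$ over $G \in \widehat{\mathcal{G}}(\delta)$, one obtains
$$
E_k^{\mathrm{CUI}} \id_A \;\leq\; \frac{q^\circ(X_k)}{\int p^\circ_\psi(X_k)\,\dd G(\boldpsi)(\psi)}\,\id_A \;=\; \frac{K\, q^\circ(X_k)}{\sum_{j \in \mathcal{K}} p^\circ_{\psi_j}(X_k)}\,\id_A,
$$
where the equality uses the explicit form $G(\boldpsi) = \frac{1}{K}\sum_{j \in \mathcal{K}} \delta_{\psi_j}$.

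Next I would sum this inequality over null indices $k$ with $\p \in \cP_k$, drop the indicator $\id_A$ (harmless, since the right-hand side is nonnegative), take expectations under $\p$, and exchange the expectation with the finite sum. Because $\p \in \cP_k$ means the marginal of $X_k$ under $\p$ has $\dd\nu$-density $p^\circ_{\psi_k}$, the resulting expectations become integrals against $\nu$, and interchanging sum and integral yields
$$
\sum_{k: \p \in \cP_k} \E^\p\!\left[E_k^{\mathrm{CUI}}\, \id_A\right] \;\leq\; K \int q^\circ(x)\, \frac{\sum_{k: \p \in \cP_k} p^\circ_{\psi_k}(x)}{\sum_{j \in \mathcal{K}} p^\circ_{\psi_j}(x)}\, \nu(\dd x) \;\leq\; K\int q^\circ(x)\, \nu(\dd x) \;=\; K,
$$
exactly as in the closing step of the proof of Theorem~\ref{th:compound-optimal}. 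Together with $\p(A) \geq 1-\delta$, this verifies the $(0,\delta)$-approximate compound e-variable condition.

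The structure of the argument is thus ``localize, then invoke the oracle compound e-value bound.'' I do not anticipate a serious technical obstacle; the main subtlety to flag is that $\boldpsi$ must be well-defined under every $\p \in \cP$ (i.e., each marginal $\p_k$ should be representable as some $p^\circ_{\psi_k}$ with $\psi_k \in \Psi$, including at alternative indices) so that $G(\boldpsi)$ is a meaningful random element on which a localization can be asserted; this is implicit in the definition of $\widehat{\mathcal{G}}(\delta)$ in the text. Beyond that, the only measurability and attainment issues are already handled by the standing conventions in Section~\ref{subsec:compound_ui}.
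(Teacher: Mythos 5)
Your proof is correct and follows essentially the same route as the paper's: localize via the event $A = \{G(\boldpsi) \in \widehat{\mathcal{G}}(\delta)\}$, dominate $E_k^{\mathrm{CUI}}$ on $A$ by the oracle optimal discovery compound e-value at $G(\boldpsi)$, and invoke the bound from the last step of Theorem~\ref{th:compound-optimal}. You merely spell out the final integral computation that the paper compresses into ``conclude similarly to the proof of Proposition~\ref{prop:lui}.''
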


\section{The simultaneous t-test sequence model}
\label{sec:ttest}

In this section we provide concrete examples for data-driven implementations of localized compound universal inference, optimal discovery compound e-variables, as well as of asymptotic e-variables and asymptotic compound e-variables under asymptotics in which either the number of hypotheses or the sample size grows (or both). We do so in the stylized but practically relevant setting of multiple testing with simultaneous t-tests. We refer to~\citet{smyth2004linear, westfall2004weighted, finos2007fdr, bourgon2010independent, lu2016variance, guo2017analysis,  ignatiadis2021covariate, hoff2022smaller, ignatiadis2024empirical} for further works that have used this setting to demonstrate the fundamental difference between single hypothesis testing and multiple testing.

In our setting, we observe $K$ groups of data of size $n$, that is, we observe $X=(X_k^1,\ldots,X_k^n)_{k \in \mathcal{K}}$. We consider distributions $\p$ according to which all observations are jointly independent and the data within each group are iid. The universe of distributions $\cP$ is understood to obey the above assumptions. Given any $\p \in \cP$, we write $\p_k$ for its $k$-th group-wise marginal so that $X \sim \p$ entails that 
\begin{equation}
\label{eq:iid_marginal}
X_k^i \simiid \p_k\;\; \text{ for }\;\; i=1,\dotsc,n.
\end{equation}
We seek to test the hypotheses,
\begin{equation}
\cP_k = \{ \p \in \cP\,:\,  \E^{\p_k}[X_k^i] = 0,\,\var^{\p_k}(X_k^i) \in (0,\infty)\}.
\label{eq:ttest_nulls}
\end{equation}
Throughout we write $\mu_k = \mu_k(\p) = \mathbb E^{\p_k}(X_k^i)$, $\sigma_k^2 = \sigma_k^2(\p)=\var^{\p_k}(X_k^i)$ and $\lambda_k = \mu_k/\sigma_k$. We will typically keep the dependence on $\p$ implicit. With this notation, we seek to test whether $\mu_k=0$ (equivalently, $\lambda_k=0$). 
To test the above hypotheses, we start by constructing summary statistics in a group-wise manner, namely, for each $k \in \mathcal{K}$,
\begin{equation}
\bar{X}_k^{(n)} = \frac{1}{n}\sum_{i=1}^n X^i_k,\;\;\;\, S^{(n)}_k = \sqrt{\frac{1}{n}\sum_{i=1}^n (X^i_k)^2},\;\;\;\, \hat{\sigma}_k^{(n)} = \sqrt{\frac{1}{n-1}\sum_{i=1}^n \left(X^i_k - \bar{X}^{(n)}_k \right)^2}.
\label{eq:one_sample_summary_mtp_rewrite}
\end{equation}
We consider the testing task first under a normality assumption in Section~\ref{subsec:ttest_normal}, and then without normality in Section~\ref{subsec:ttest_beyond_normality}.

\subsection{Asymptotic compound e-values under normality}
\label{subsec:ttest_normal}

We first assume that all (null and non-null) marginals $\cP_k$ are normal, that is we further restrict the universe $\cP$ to $\cP^{\mathrm{N}}$, where
\begin{equation}
\label{eq:normal_universe}
\cP^{\mathrm{N}} =\{\p \in \cP\;:\; \p_k\, \text{ is normal and }\,  \ubar{\sigma}^2 \leq \var^{\p_k}(X_k^i) \leq \bar{\sigma}^2 \;\text{ for all }\; k\},
\end{equation}
for known constants satisfying $0<\ubar{\sigma} \leq \bar{\sigma} < \infty$. For $\p \in \cP^{\mathrm{N}}$,
the summary statistics in~\eqref{eq:one_sample_summary_mtp_rewrite} are sufficient. 
In what follows we treat $n \geq 3$ as fixed and consider asymptotics as $K \to \infty$; for this reason we omit explicit dependence on $n$, for instance, we write $\bar{X}_k$ instead of \smash{$\bar{X}_k^{(n)}$}.

Our setting is encompassed by the composite null setting with nuisance parameters that we described in Section~\ref{subsec:compound_ui}. The nuisance parameters under the null ($\psi_k$ in Section~\ref{subsec:compound_ui}) are given by the variances $\sigma_k^2$. Letting $\boldsigma^2 = (\sigma_1^2,\ldots,\sigma_K^2)$, a central object is the empirical distribution of the $\sigma_k^2$, that is,
\begin{equation}
G(\boldsigma^2) = \frac{1}{K}\sum_{k \in \mathcal{K}} \delta_{\sigma_k^2}.
\label{eq:G_sigma}
\end{equation}
Our goal is to mimic the optimal discovery compound e-values in~\eqref{eq:optimal_evalue} without knowledge of $G(\boldsigma^2)$.

\subsubsection{Localized compound universal inference for the simultaneous t-test}
\label{subsec:compound_ttest}

Our first approach is based on localized compound universal inference (CUI) of Section~\ref{subsec:compound_ui}. CUI requires a confidence set of distributions containing $G(\boldsigma^2)$. We can achieve this by working with the marginal distribution of the $\hat{\sigma}_k^2$ which is ancillary for the $\mu_k$.
Concretely, let
\begin{equation}
f_G(\hat{\sigma}_k^2) =  \int_0^{\infty} p(\hat{\sigma}_k^2 \mid \sigma^2)\dd G(\sigma^2),\;\;\;\,  p(\hat{\sigma}_k^2 \mid \sigma^2) = \frac{\nu^{\nu/2}\left(\hat{\sigma}_k^2\right)^{\nu/2-1}}{\left(\sigma^2\right)^{\nu/2} 2^{\nu/2}\Gamma(\nu/2)}   \exp\left(-\frac{\nu \hat{\sigma}_k^2}{2\sigma^2}\right),
\label{eq:marginal_density_defi}
\end{equation}
where $\nu = n-1$ denotes the degrees of freedom. Let $\mathcal{G}$ be the class of all distributions supported on $[\ubar{\sigma}^2, \bar{\sigma}^2]$ and consider:
\begin{equation}
\widehat{\mathcal{G}}(\delta) = \left\{G \in \mathcal{G}\,:\, \sup_{t \in (0,\infty)} \left|  \frac{1}{K} \sum_{k \in \mathcal{K}} \id_{ \{\hat{\sigma}_k^2 \leq t\}} - \int_0^t f_G(u) \dd u \right| \leq \sqrt{\frac{1+\log(2/\delta)}{2K}} \right\}. 
\label{eq:floc_ttest}
\end{equation}

\begin{proposition}
\label{prop:ttest_cui}
For any $\p \in \cP^{\mathrm{N}}$ it holds that:
$$
\p( G(\boldsigma^2) \in \widehat{\mathcal{G}}(\delta)) \geq 1-\delta.
$$
\end{proposition}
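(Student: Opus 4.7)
The plan is to show that the event $\{G(\boldsigma^2) \in \widehat{\mathcal{G}}(\delta)\}$ coincides with a uniform concentration event for the empirical CDF of $\hat{\sigma}_1^2,\ldots,\hat{\sigma}_K^2$ around its pointwise mean, and then invoke a Dvoretzky--Kiefer--Wolfowitz (DKW)--type inequality for independent but non-identically distributed data. First, under any $\p \in \cP^{\mathrm{N}}$, the joint independence of $(X_k^i)_{k,i}$ together with standard normal theory (Cochran's theorem) yields that $\nu\hat{\sigma}_k^2/\sigma_k^2 \sim \chi^2_\nu$ for each $k$ and that the $\hat{\sigma}_k^2$ are mutually independent across $k$. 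Consequently $\hat{\sigma}_k^2$ has density $p(\cdot \mid \sigma_k^2)$ as in~\eqref{eq:marginal_density_defi}, and I write $F_k(t) = \int_0^t p(u \mid \sigma_k^2)\,\dd u$ for its CDF.

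Next, taking $G = G(\boldsigma^2) = K^{-1}\sum_{k}\delta_{\sigma_k^2}$ and applying Fubini to the definition of $f_G$ in~\eqref{eq:marginal_density_defi} gives the key identity
$$
\int_0^t f_{G(\boldsigma^2)}(u)\,\dd u = \frac{1}{K}\sum_{k=1}^K F_k(t) =: F^*(t),
$$
so $F^*$ is exactly the mean of the empirical CDF $\hat F(t) := K^{-1}\sum_{k} \id_{\{\hat{\sigma}_k^2 \le t\}}$. In particular, the event $\{G(\boldsigma^2) \in \widehat{\mathcal{G}}(\delta)\}$ rewrites as
$$
\sup_{t>0}\, \bigl|\hat F(t) - F^*(t)\bigr| \le \sqrt{\frac{1+\log(2/\delta)}{2K}},
$$
which is exactly a uniform concentration statement for the empirical process of the independent (though non-identically distributed) sample $\hat{\sigma}_1^2,\ldots,\hat{\sigma}_K^2$ around its pointwise mean.

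To conclude, I would apply the DKW-type inequality for independent samples with continuous marginals, which yields $\p(\sup_t|\hat F(t)-F^*(t)| > \epsilon) \le 2\exp(-2K\epsilon^2)$; plugging in $\epsilon = \sqrt{(1+\log(2/\delta))/(2K)}$ gives $2\exp(-1-\log(2/\delta)) = \delta/e \le \delta$, which is exactly the desired bound. The main obstacle is establishing the DKW tail inequality with the constant $2$ in the non-iid setting, which is not standard textbook material; I would either cite a suitable reference or provide a self-contained derivation along the lines of Massart's martingale-style proof (whose core argument uses only independence and continuity of the marginals, not identical distribution). An alternative route is to combine McDiarmid's bounded-differences inequality, exploiting that $\sup_t|\hat F(t) - F^*(t)|$ is $(1/K)$-Lipschitz in each $\hat{\sigma}_k^2$, with a symmetrization/Rademacher-complexity bound on $\E[\sup_t|\hat F(t) - F^*(t)|]$ over the VC class of half-lines; either route is routine but requires care to retrieve the precise constants appearing in the stated bound.
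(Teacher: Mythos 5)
Your reduction is exactly right and matches the essence of the paper's proof: under $\cP^{\mathrm{N}}$ the $\hat\sigma_k^2$ are independent with marginal densities $p(\cdot\mid\sigma_k^2)$, the identity $\int_0^t f_{G(\boldsigma^2)}(u)\,\dd u = K^{-1}\sum_k F_k(t)$ turns the localization event into a uniform deviation bound for the empirical CDF around its (non-iid) pointwise mean, and the remaining task is a DKW-type tail inequality.

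The gap is in the final step, and it is not benign. You invoke the DKW inequality with Massart's tight constant, $\p\bigl(\sup_t|\hat F(t)-F^*(t)|>\epsilon\bigr)\le 2\exp(-2K\epsilon^2)$, for independent but \emph{non-identically} distributed samples, and justify this by asserting that Massart's proof ``uses only independence and continuity of the marginals, not identical distribution.'' That assertion is incorrect (Massart's reflection-based argument relies on the exchangeability/iid structure of the uniform order statistics), and, more to the point, it contradicts what the construction itself is telling you. After transforming via $\bar F$, one is left with independent $[0,1]$-valued variables whose CDFs $G_k$ satisfy only $K^{-1}\sum_k G_k(u)=u$; they are not iid uniform. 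The correct inequality for this independent-but-non-identical setting, due to Bretagnolle and recorded as Lemma 7.1 of Donoho and Jin (2006), carries an extra factor of $e$:
$$
\p\left(\sup_{t>0}\bigl|\hat F(t)-F^*(t)\bigr|>\epsilon\right)\;\le\;2e\,\exp(-2K\epsilon^2).
$$
That factor of $e$ is precisely why the localization radius in \eqref{eq:floc_ttest} is $\sqrt{(1+\log(2/\delta))/(2K)}$ and not $\sqrt{\log(2/\delta)/(2K)}$: plugging into the correct bound gives $2e\cdot e^{-1}\cdot(\delta/2)=\delta$, with no slack. Your calculation produced $\delta/e\le\delta$, and that ``extra'' factor of $e^{-1}$ is exactly the room that the non-iid correction consumes; it is not spare slack but a sign that you used the wrong constant. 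Your alternative route via McDiarmid plus symmetrization would recover the right rate but not the sharp constants and thus would not match the specific radius built into $\widehat{\mathcal G}(\delta)$. The fix is simply to replace the iid DKW bound by the Bretagnolle/Donoho--Jin non-iid version, which is what the paper cites.
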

\begin{proof}
This follows by the Dvoretzky–Kiefer–Wolfowitz inequality with Massart's tight constant~\citep{massart1990tight} and Bretagnolle's argument~\citep{bretagnolle1981}; see also 
\citet[Lemma 7.1]{donoho2006asymptotic}. The additive factor of $1$ on the right-hand side is needed because $\hat{\sigma}_k^2$ are not identically distributed.
\end{proof}

The construction is akin to the $F$-Localization of~\citet{ignatiadis2022confidence} with the difference that we do not require that \smash{$\sigma_k^2 \simiid G$} and instead construct the localization in the compound setting.

\begin{remark}[Computation of CUI]
\label{remark:cui_computation}
Computing CUI amounts to solving the optimization problem in~\eqref{eq:CUI}. To do this in practice, we proceed as follows. First, following~\citet{ignatiadis2022confidence}, we discretize $\mathcal{G}$ by taking a covering of $[\ubar{\sigma}^2,\bar{\sigma}^2]$  with $B$ points $\tilde{\sigma}_1^2,\dots,\tilde{\sigma}_B^2 \in [\ubar{\sigma}^2,\bar{\sigma}^2]$ and considering all distributions of the form \smash{$G= \sum_{\ell=1}^B \pi_{\ell} \delta_{\tilde{\sigma}_{\ell}^2}$}, where $(\pi_1,\dots,\pi_B)$ lies on the probability simplex. Second, we relax the Kolmogorov-Smirnov constraint in~\eqref{eq:floc_ttest} by enforcing it only on a finite subset $\mathcal{T} \subseteq (0, \infty)$; e.g., we can let $\mathcal{T}$ be a subset of the realized sample variances $\hat{\sigma}_1^2,\dotsc, \hat{\sigma}_K^2$. This second discretization step is always conservative, that is, it makes the localization strictly larger. Our computational approximation to $E_k^{\mathrm{CUI}}$ is then given by $q^\circ(X_k)/ \mathrm{obj}$, where $\mathrm{obj}$ is the optimal value of the following linear program:
\begin{align*}
\underset{\pi_1,\ldots,\pi_B}{\text{maximize}} \quad & \sum_{\ell=1}^B \pi_{\ell} p^{\circ}_{\tilde{\sigma}_{\ell}^2}(X_k) \\
\text{subject to} \quad & \sum_{\ell=1}^B \pi_{\ell} = 1,\quad \pi_{\ell} \geq 0, \quad \ell = 1, \ldots, B, \\
& \left| \frac{1}{K} \sum_{j \in \mathcal{K}} \id_{\{\hat{\sigma}_j^2 \leq t\}} - \sum_{\ell=1}^B \pi_{\ell} \int_0^t p(u \mid \tilde{\sigma}_{\ell}^2) \, \dd u \right|  \leq \sqrt{\frac{1+\log(2/\delta)}{2K}}, \quad  t \in \mathcal{T}.
\end{align*}

\end{remark}

\subsubsection{Optimal discovery compound e-values via empirical Bayes}
\label{subsec:odp_ebayes}

In this section, we take a more direct approach to approximating the optimal discovery compound e-values. We directly use empirical Bayes to estimate the compound empirical distribution $G(\boldsigma^2)$ in~\eqref{eq:G_sigma}.

It will be convenient to parameterize the distribution of $X_k$ by $(\lambda_k, \sigma_k^2)$ where $\lambda_k = \mu_k / \sigma_k$, that is, for $x \in \R^n$, we write:
$$
p_{\lambda, \sigma^2}(x) = \frac{1}{(2\pi \sigma^2)^{n/2}} \prod_{i=1}^n \exp\left( - \frac{(x_i-\lambda \sigma)^2}{2 \sigma^2} \right),\;\;\; p_{\sigma^2}(x) = p_{0,\sigma^2}(x).
$$
In this way, given a measure $Q$ over $\R \times (0,\infty)$ and $G$ over $(0,\infty)$, we may define the Bayes factor
\begin{equation}
E(x; G, Q) = \frac{\int p_{\lambda, \sigma^2}(x)  \dd Q(\lambda, \sigma^2)}{\int p_{\sigma^2}(x) \dd G(\sigma^2)}.
\end{equation}
For fixed $Q$ and with $G=G(\boldsigma^2)$, it holds that 
$$E(x; G(\boldsigma^2), Q) =   \frac{K \int p_{\lambda, \sigma^2}(x)  \dd Q(\lambda, \sigma^2)}{\sum_{j \in \mathcal{K}} p_{\sigma_j^2}(x)},$$
and so, $E(X_k; G(\boldsigma^2), Q)$,  $k \in \mathcal{K}$, are optimal discovery compound e-variables. We propose to construct asymptotic compound e-variables 
\begin{equation}
\label{eq:EKEB}
E_k = E(X_k; \widehat{G}, \widehat{Q}),
\end{equation}
by plugging in specific estimates $\widehat{G}$ of $G$ and $\widehat{Q}$ of $Q$ as described below.
We estimate $\widehat{G}$ based on the nonparametric maximum likelihood estimator (NPMLE) of $G$ by~\citet{robbins1950generalization} and \citet{kiefer1956consistency} based on $\hat{\sigma}_1^2,\ldots,\hat{\sigma}_K^2$:
\begin{equation}
\label{eq:npmle_opt}
\widehat{G} \in \argmax\left\{ \sum_{k \in \mathcal{K}} \log\left( f_G(\hat{\sigma}_k^2)\right)\,:\,G \in \mathcal{G}\right\},\;\;
\end{equation}
where we recall that $\mathcal{G}$ is the class of all distributions supported on $[\ubar{\sigma}^2, \bar{\sigma}^2]$ and $f_G(\hat{\sigma}_k^2)$ is defined in~\eqref{eq:marginal_density_defi}.
Moreover we let \smash{$\widehat{Q} = \widehat{H} \otimes \widehat{G}$} where \smash{$\widehat{H}$} is a distribution on $\mathbb R$ estimated based on $X_1,\ldots,X_K$ and that is supported on $[\ubar{\lambda},\bar{\lambda}] \subseteq \R$, where $\ubar{\lambda},\bar{\lambda}$ are fixed.

\begin{theorem}
\label{theorem:eb_asymptotic}
Consider asymptotics with $n \geq 3$ fixed and $K \to \infty$. Assume
there exists $H^*$ such that \smash{$\widehat{H} \cd H^*$} almost surely. Then $E_1,\ldots,E_K$ in \eqref{eq:EKEB} are  strongly asymptotic compound e-variables for $\cP_1,\ldots,\cP_K$ under \smash{$\cP^{\mathrm{N}}$}.
\end{theorem}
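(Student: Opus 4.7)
My plan is to reduce the claim to showing that for every fixed $\p \in \cP^{\mathrm{N}}$,
\[
\limsup_{K\to\infty} \frac{1}{K}\sum_{k \in \mathcal{N}(\p)} \E^\p[E_k] \leq 1.
\]
This suffices because it lets us exhibit $E_1,\dots,E_K$ as $(\varepsilon_K(\p),0)$-approximate compound e-variables (via Proposition~\ref{prop:equiv_approx_compound_e}) with $\varepsilon_K(\p)\to 0$, matching the definition of strongly asymptotic compound e-variables. The target mirrors the proof of Theorem~\ref{th:compound-optimal}: if $\widehat{G}$ could be replaced by the compound empirical $G(\boldsigma^2)$ of \eqref{eq:G_sigma} and if $\widehat{G},\widehat{H}$ were independent of each $X_k$, a Fubini exchange would give
\[
\frac{1}{K}\sum_{k \in \mathcal{N}(\p)} \int \frac{\int p_{\lambda,\sigma^2}(x)\,\dd\widehat{Q}(\lambda,\sigma^2)}{\int p_{\sigma^2}(x)\,\dd G(\boldsigma^2)} \, p_{\sigma_k^2}(x)\,\dd x \;\leq\; \int\!\int p_{\lambda,\sigma^2}(x)\,\dd\widehat{Q}(\lambda,\sigma^2)\,\dd x = 1,
\]
where the inequality uses the pointwise bound $\frac{1}{K}\sum_{k\in\mathcal{N}(\p)} p_{\sigma_k^2}(x) \leq \int p_{\sigma^2}(x)\,\dd G(\boldsigma^2)$. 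The remaining work is therefore to justify (i) decoupling of $X_k$ from the estimators and (ii) replacement of $\widehat{G}$ by $G(\boldsigma^2)$ in the denominator, both up to errors that vanish after averaging in $k$.

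For (i), I would introduce leave-one-out estimators $\widehat{G}^{(k)},\widehat{H}^{(k)}$ built from $\{X_j : j\ne k\}$; these are independent of $X_k$ under $\p$, so by Fubini
\[
\E^\p\bigl[E(X_k;\widehat{G}^{(k)},\widehat{H}^{(k)}\otimes \widehat{G}^{(k)})\bigr]=\E^\p\!\left[\int E(x;\widehat{G}^{(k)},\widehat{H}^{(k)}\otimes \widehat{G}^{(k)})\, p_{\sigma_k^2}(x)\,\dd x\right].
\]
A stability lemma for the NPMLE (removing one observation perturbs the normalized log-likelihood by $O(1/K)$), together with an analogous statement for $\widehat{H}$, shows that $\widehat{G}^{(k)},\widehat{H}^{(k)}$ approximate $\widehat{G},\widehat{H}$ in total variation, uniformly in $k$. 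Since both $p_{\lambda,\sigma^2}(x)$ and $p_{\sigma^2}(x)$ are uniformly bounded above and below on $[\ubar{\sigma}^2,\bar{\sigma}^2]\times[\ubar{\lambda},\bar{\lambda}]$ for fixed $x$, this perturbation contributes only $o(1)$ to the averaged difference $\frac{1}{K}\sum_{k\in\mathcal{N}(\p)} \E^\p[E_k - E(X_k;\widehat{G}^{(k)},\widehat{H}^{(k)}\otimes\widehat{G}^{(k)})]$.

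For (ii), the crucial input is consistency of the NPMLE in Hellinger distance for the independent but non-identically distributed observations $\hat{\sigma}_k^2 \sim p(\cdot\mid \sigma_k^2)$ of \eqref{eq:marginal_density_defi}. Using the maximum-likelihood inequality $\sum_k \log f_{\widehat{G}}(\hat{\sigma}_k^2)\geq \sum_k \log f_{G(\boldsigma^2)}(\hat{\sigma}_k^2)$ together with bracketing/entropy bounds on the inverse-gamma mixture class indexed by $G\in\mathcal{G}$, one obtains $\Dhel(f_{\widehat{G}},f_{G(\boldsigma^2)}) \to 0$ almost surely as $K \to \infty$, in the spirit of Kiefer--Wolfowitz. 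Coupled with the hypothesis $\widehat{H} \cd H^*$ and the compact supports of $\widehat{G}$ and $\widehat{H}$, this lets us pass to the limit inside the integral (along any subsequence, by compactness of $\mathcal{G}$ in the weak topology) and recover $\limsup \leq 1$. The main obstacle I anticipate is the uniform integrability needed to commute $\limsup$ and integral: one needs a $K$-uniform dominating envelope for $E(x;\widehat{G},\widehat{Q})\, p_{\sigma_k^2}(x)$. Compactness of the supports rescues the argument, yielding an envelope of the form $C\exp(c\sum_i (x^i)^2)$ that is $p_{\sigma_k^2}$-integrable with constants uniform in $\sigma_k^2 \in [\ubar{\sigma}^2,\bar{\sigma}^2]$; once this envelope is in hand, dominated convergence closes the argument.
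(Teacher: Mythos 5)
Your reduction is the right one, and several ingredients (Hellinger consistency of the NPMLE, the subsequence/compactness argument, the exponential envelope) also appear in the paper. But the central mechanism of your proof---leave-one-out decoupling plus an NPMLE stability lemma---is not the paper's route, and the stability lemma as stated is a genuine gap. You claim that because deleting one observation perturbs the normalized log-likelihood by $O(1/K)$, the leave-one-out NPMLE $\widehat{G}^{(k)}$ approximates $\widehat{G}$ in total variation, uniformly in $k$. This does not follow: closeness of log-likelihoods at near-optimal points does not imply closeness of the maximizers, and the NPMLE objective is notoriously flat near its optimum (it is not strongly concave in $G$, and the maximizer is a sparse discrete measure whose support points can move discontinuously under small data perturbations). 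At best, a likelihood-gap argument yields Hellinger closeness of the induced marginals $f_{\widehat{G}^{(k)}}$ and $f_{\widehat{G}}$, not TV closeness of the mixing measures themselves; and passing from Hellinger closeness of the marginals to weak convergence of the mixing measures requires an identifiability argument (the paper invokes Teicher's identifiability theorem for scale mixtures in Lemma~\ref{lemma:npmle_cd}) that your sketch omits. Without filling this in, the Fubini computation you want to run on $E(X_k;\widehat{G}^{(k)},\widehat{Q}^{(k)})$ cannot be connected back to $E_k$.

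The paper avoids the leave-one-out issue entirely. It compares $E_k = E(X_k;\widehat{G},\widehat{Q})$ by a triangle inequality to the \emph{deterministic} oracle compound e-value $E(X_k; G(\boldsigma^2), H^*\otimes G(\boldsigma^2))$, splits the error by whether $\|X_k\|\le B$ (term $\mathrm I_K$) or $\|X_k\|>B$ (term $\mathrm{II}_K$), shows $\max_k \E^\p[\mathrm I_K(X_k;B)]\to 0$ via Arzel\`a--Ascoli uniform convergence on $\{\|x\|\le B\}$ plus Lemma~\ref{lemma:npmle_cd} plus dominated convergence, and shows $\sup_K\max_k\E^\p[\mathrm{II}_K(X_k;B)]\to 0$ as $B\to\infty$ using the pointwise bound $E(x;\widehat{G},\widehat{Q})\le 2\cosh(\bar v\,\mathbf 1^\intercal x)$ and Cauchy--Schwarz. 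Because the oracle is a fixed function of $X_k$ alone, the compound constraint for it is checked by a direct computation (no Fubini, no decoupling), and the data-dependence of $\widehat G,\widehat H$ only enters through the uniformly-small error term. Your proposed $C\exp(c\sum_i (x^i)^2)$ envelope would work (the actual linear-exponential bound $2\cosh(\bar v\,\mathbf 1^\intercal x)$ is dominated by it for any $c>0$, and small $c$ gives integrability against $p_{\sigma_k^2}$), so that part is recoverable, but it is cleaner to use the linear-exponential bound directly as the paper does. In short: the reduction and the supporting lemmas you identify are sound, but the leave-one-out/stability step is the load-bearing claim and it is unjustified; the paper's triangle-inequality-against-the-oracle decomposition is how one sidesteps it.
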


Theorem~\ref{theorem:eb_asymptotic} is a purely frequentist result. As is common in compound decision theory, we ``pretend'' that $\sigma_k^2 \sim G$ when computing the NPMLE, however, in the theorem statements, $\sigma_1^2,\ldots,\sigma_K^2$ are fixed (deterministic). Moreover, $H^*$ can be any distribution and does not need to represent the data-generating distribution.

Since the ODP was first described by~\citet{storey2007optimal}, several authors have proposed data-driven implementations of the ODP when the nuisance parameters are unknown including~\citet{storey2007optimala, guindani2009bayesian, woo2011computationally}. The approach closest to ours is the one of~\citet{noma2012optimal, noma2013empirical} who also pursue an empirical Bayes approach; however, they only consider parametric specifications for $G$ (while allowing for nonparametric $H$). To the best of our knowledge, Theorem~\ref{theorem:eb_asymptotic} presents the first theoretical guarantee for data-driven ODP and together with Theorem~\ref{theo:eBH_controls_the_FDR} it implies that we can use the empirical Bayes optimal discovery compound e-values alongside the e-BH procedure to asymptotically control the FDR.

\begin{remark}[Computation of NPMLE]
\label{remark:computation_npmle}
In practice we compute the NPMLE~\eqref{eq:npmle_opt} using the standard strategy proposed by~\citet{koenker2014convex}: we first discretize $G$, as described in Remark~\ref{remark:cui_computation}, and this discretization reduces~\eqref{eq:npmle_opt} to a finite conic programming problem that can be solved  by interior point methods.
\end{remark}

We demonstrate the improved power of our two constructions through a simulation study in Section~\ref{subsec:ttest_simulation}.

\subsection{Asymptotic e-values and compound e-values beyond normality}
\label{subsec:ttest_beyond_normality}

In this section, we demonstrate how to construct asymptotic (compound) e-values without assuming normality. 

\subsubsection{A prototypical example of an asymptotic e-value}
\label{subsec:prototypical_single_asym}
We first provide a prototypical construction of asymptotic e-variables 
along the lines of 
the central limit theorem (as discussed in Section~\ref{sec:asymp-e}).  We consider the setting
described in the beginning of Section~\ref{sec:ttest}
and also use the summary statistics in~\eqref{eq:one_sample_summary_mtp_rewrite} (even though they are no longer sufficient in the absence of normal errors). We focus on the case of a single hypothesis, i.e., $K=1$ and consider asymptotics as $n \to \infty$. For this reason, we omit the subscript $k$, and write e.g., \smash{$\bar{X}^{(n)} = \bar{X}_1^{(n)}$}. We seek to test the hypothesis $\mathcal{P}_1$ in~\eqref{eq:ttest_nulls}.

For fixed $\lambda \in \R$, let
\begin{equation}
\label{eq:asymptotic_evalue}
E^{(n)} = \exp\left(\lambda \frac{ \sqrt{n} \bar{X}^{(n)}}{S^{(n)}} - \frac{\lambda^2}{2}\right),
\end{equation}
and
\begin{equation}
\label{eq:asymptotic_evalue2}
\tilde{E}^{(n)} = \frac12 \exp\left(\lambda \frac{\sqrt{n}\bar{X}^{(n)}}{S^{(n)}}- \frac{\lambda^2}{2}\right) + \frac12 \exp\left(-\lambda \frac{\sqrt{n}\bar{X}^{(n)}}{S^{(n)}} - \frac{\lambda^2}{2}\right).
\end{equation}
The next proposition justifies that \eqref{eq:asymptotic_evalue} and~\eqref{eq:asymptotic_evalue2} define sequences of strongly asymptotic e-variables.

\begin{theorem}
\label{prop:c9-asym}
The sequences of random variables $E^{(n)}$ in \eqref{eq:asymptotic_evalue}
and $\tilde{E}^{(n)}$ in \eqref{eq:asymptotic_evalue2} 
are strongly asymptotic e-variables for $\cP_1$. 
If $\hat{\sigma}^{(n)}$ is used in place of $S^{(n)}$ in \eqref{eq:asymptotic_evalue} and \eqref{eq:asymptotic_evalue2}, then $E^{(n)}$ and $\tilde{E}^{(n)}$ are asymptotic e-variables for $\cP_1$.
In either case, for $\lambda>0$, $E^{(n)}$ (resp.~$\tilde{E}^{(n)}$) grows to infinity with probability $1$ under the alternative that $X^1,\dots,X^n$ are drawn iid from $\mathbb Q$ with $\E^{\mathbb Q}[X^i]> 
0$ (resp.~$\E^{\mathbb Q}[X^i]\ne 
0$) and $\var^{\mathbb Q}(X^i) < \infty$. 
\end{theorem}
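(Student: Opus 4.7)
The plan is to reduce the asymptotic claims for both $E^{(n)}$ and $\tilde{E}^{(n)}$ to weak convergence of the studentized statistic $Z^{(n)}:=\sqrt{n}\bar{X}^{(n)}/S^{(n)}$ (and its $\hat{\sigma}^{(n)}$-analogue) to $Z\sim\mathrm{N}(0,1)$, and then invoke Example~\ref{example:weak_convergence_to_asymptotic}(ii) with the continuous calibrators $h(z)=\exp(\lambda z-\lambda^{2}/2)$ (for $E^{(n)}$) and $\tilde{h}(z)=\tfrac12(\exp(\lambda z-\lambda^{2}/2)+\exp(-\lambda z-\lambda^{2}/2))$ (for $\tilde{E}^{(n)}$); both satisfy $\E^{\mathrm{N}(0,1)}[h(Z)]=\E^{\mathrm{N}(0,1)}[\tilde{h}(Z)]=1$. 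Fixing any $\p\in\cP_1$ with $\sigma^{2}:=\var^{\p}(X^i)\in(0,\infty)$, I would deduce the weak convergence from three inputs: the CLT yields $\sqrt{n}\bar{X}^{(n)}/\sigma\cd\mathrm{N}(0,1)$; Kolmogorov's SLLN applied to $(X^i)^{2}$ (using $\mu=0$) yields $(S^{(n)})^{2}\to\sigma^{2}$ almost surely; and the identity $(\hat{\sigma}^{(n)})^{2}=\tfrac{n}{n-1}\bigl((S^{(n)})^{2}-(\bar{X}^{(n)})^{2}\bigr)$ upgrades this to $(\hat{\sigma}^{(n)})^{2}\to\sigma^{2}$ a.s.; Slutsky's theorem then gives $Z^{(n)}\cd\mathrm{N}(0,1)$ under either normalization.

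The divergence under the alternative is a short SLLN exercise: under iid $\mathbb{Q}$ with $\mu_{*}:=\E^{\mathbb{Q}}[X^i]$ and $\sigma_{*}^{2}:=\var^{\mathbb{Q}}(X^i)<\infty$, one obtains $\bar{X}^{(n)}\to\mu_{*}$, $(S^{(n)})^{2}\to\mu_{*}^{2}+\sigma_{*}^{2}$, and $(\hat{\sigma}^{(n)})^{2}\to\sigma_{*}^{2}$ a.s. When $\mu_{*}>0$ the numerator $\sqrt{n}\bar{X}^{(n)}$ diverges while the denominator converges to a positive constant, so $Z^{(n)}\to+\infty$ a.s.\ and $E^{(n)}=\exp(\lambda Z^{(n)}-\lambda^{2}/2)\to\infty$ a.s.\ for any $\lambda>0$. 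The pointwise bound $\tilde{E}^{(n)}\ge\tfrac12\exp(|\lambda||Z^{(n)}|-\lambda^{2}/2)$ then handles $\tilde{E}^{(n)}$ whenever $\mu_{*}\ne 0$.

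The delicate step is upgrading the $S^{(n)}$-based claim from asymptotic to \emph{strongly} asymptotic. By Proposition~\ref{proposition:uniform_integrability_strong_e} it suffices to establish uniform integrability of $(E^{(n)})$ under each null $\p$, which combined with convergence in distribution forces $\E^{\p}[E^{(n)}]\to 1$. The structural feature peculiar to the $S^{(n)}$-version is the Cauchy--Schwarz bound $(Z^{(n)})^{2}=n(\bar{X}^{(n)})^{2}/(S^{(n)})^{2}\le n$; this alone is not enough, and the plan is to combine it with a uniform self-normalized tail inequality $\p(|Z^{(n)}|\ge t)\le C\exp(-t^{2}/C')$ with constants depending on $\p$, a classical consequence of the theory of self-normalized sums for iid mean-zero observations with finite variance. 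Together these yield $\sup_{n}\E^{\p}[\exp((1+\epsilon)|\lambda||Z^{(n)}|)]<\infty$ for small $\epsilon>0$, which is the required UI; the same computation handles $\tilde{E}^{(n)}$. No analogous deterministic bound is available in the $\hat{\sigma}^{(n)}$-version---under normality, $\sqrt{n}\bar{X}^{(n)}/\hat{\sigma}^{(n)}\sim t_{n-1}$ has infinite MGF for every $n$---so strong asymptoticity genuinely fails there and only the asymptotic conclusion survives. The hard part will be justifying the uniform self-normalized tail bound under only the finite-variance hypothesis defining $\cP_1$.
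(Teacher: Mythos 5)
Your route coincides with the paper's: weak convergence of the studentized statistic via CLT + LLN + Slutsky, then Example~\ref{example:weak_convergence_to_asymptotic}(ii) with $h(z)=\exp(\lambda z-\lambda^2/2)$ (noting $\E^{\mathrm{N}(0,1)}[h(Z)]=1$) gives asymptotic e-variables; an SLLN argument under $\mathbb Q$ gives the divergence claim; and the upgrade to \emph{strongly} asymptotic for the $S^{(n)}$-version goes through uniform integrability and Proposition~\ref{proposition:uniform_integrability_strong_e}.

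The one place you stop short is exactly the step you flag: justifying the uniform self-normalized tail/moment control under only $\E X=0$, $\var(X)<\infty$. The paper closes this by citing Theorem~2.5 of \citet{gine1997when}, which establishes precisely the uniform integrability of the sequence $(E^{(n)})_n$ for iid mean-zero observations in the domain of attraction of the normal law (in particular, under finite variance). Your proposed route through a pointwise bound $\p(|Z^{(n)}|\ge t)\le C\exp(-t^2/C')$ uniformly in $n$ is a somewhat stronger statement than what is needed and is delicate in the asymmetric case (the clean Efron-type inequality $\p(Z^{(n)}\ge t)\le e^{-t^2/2}$ requires symmetry); the GGM theorem bypasses this by giving the uniform integrability directly, which is all that Proposition~\ref{proposition:uniform_integrability_strong_e} requires. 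So your plan is correct in spirit and would work once you replace "classical consequence of the theory of self-normalized sums" with that specific reference; as written, it leaves the key lemma unproven. Your side remark that the $\hat\sigma^{(n)}$-version cannot be strongly asymptotic (infinite MGF of the $t$-distribution under normality) is a nice observation but is not required --- the theorem only asserts the weaker asymptotic property in that case, and the paper does not prove a negative.
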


The above conclusion extends to a broader class of asymptotic e-variables obtained by mixing in~\eqref{eq:asymptotic_evalue} over $\lambda$ with respect to a distribution with bounded support. Concretely, let $G$ be any distribution on $\mathbb R$ with compact support. Then, under the assumptions of this section,
$$
\int \exp\left(\lambda \frac{\sqrt{n}\bar{X}^{(n)}}{S^{(n)}} - \frac{\lambda^2}{2}\right) \d G(\lambda),
$$
is a strongly asymptotic e-variable. 

One central purpose of defining asymptotic e-variables is to handle cases where one does not know how to construct a nonasymptotic e-variable. Indeed, for the above $\cP_1$, we do not know of nonasymptotic e-variables (other than constants).

\subsubsection{A prototypical construction of asymptotic compound e-values}
\label{subsec:sum_of_squares_compound}

We next describe a prototypical model of constructing asymptotic compound e-variables.
The construction appeared in~\citet{ignatiadis2024evalues} (assuming normality as in Section~\ref{subsec:ttest_normal}) and here we generalize the construction and show that it does not require normality. We start with the summary statistics in~\eqref{eq:one_sample_summary_mtp_rewrite}, and will consider asymptotics with $\max\{n,K\} \to \infty$.
\citet{ignatiadis2024evalues} were interested in computing approximate compound e-variables that satisfy \smash{$E_k^{(n)} \propto (S_k^{(n)})^2$} for all $k \in \mathcal{K}$. 
To motivate the construction that follows, fix any $\p \in \cP$ and write  $\mu_k(\p) =\E^{\p_k}[X_k^i]$ and $\sigma_k^2(\p) = \var^{\p_k}(X_k^i)$. 
Observe that for any $k \in \mathcal{K}$, 
$$\E^{\p_k}[(S_k^{(n)})^2] = \sigma_k^2(\p) + \mu_k^2(\p) \, \text{ for all }\,  k \in \mathcal{K} \quad\text{and so}\quad \E^{\p_k}[(S_k^{(n)})^2] = \sigma_k^2(\p)\, \text{ for }\, \p \in \cP_k.
$$
One can check that for any $\p \in (\bigcup_{k \in \mathcal{K}} \cP_k)\bigcap \cP$, 
$$
\sum_{k: \p \in \cP_k} \E^{\p}\left[\tilde{E}_k^{(n)}(\p)\right]  = K, \quad \text{where} \quad \tilde{E}_k^{(n)}(\p) = \frac{K \big(S_k^{(n)}\big)^2}{ \sum_{j: \p \in \cP_j} \sigma_j^2(\p)}.
$$
The above construction is not directly usable in practice because of the dependence of $\tilde{E}_k^{(n)}(\p)$ on $\p$. However, it motivates a construction in which we conservatively estimate the $\p$-dependent quantity $\sum_{j: \p \in \cP_j} \sigma_j^2(\p)$ by $\sum_{j \in \mathcal{K}}(\hat{\sigma}^{(n)}_{j})^2$ and then let
\begin{equation*}
E^{(n)}_k = \frac{K \big(S_k^{(n)}\big)^2}{\sum_{j \in \mathcal{K}}\big(\hat{\sigma}^{(n)}_{j}\big)^2},\quad k \in \mathcal{K}.
\end{equation*}
We will show that the above are asymptotic compound e-variables under an asymptotic setup indexed by $m \in \mathbb N$ such that $n=n(m)$, $K=K(m)$ and $\max\{n, K\} \to \infty$ as $m \to \infty$. Our formal result will be stated in terms of two universes of distributions. The first set imposes some restrictions on moments of the marginals,
$$
\cP^{\mathrm{B}} =\{\p \in \cP\;:\; \E^{\p_k}[|X_k^i|^{2(1+\delta)}] \leq C,\;\var^{\p_k}(X_k^i) \geq c \;\text{ for all }\; k\},
$$
where $\delta \in (0,1], c, C \in (0, \infty)$ are constants. The second set is equal to the universe $\mathcal{P}^{\mathrm{N}}$ in~\eqref{eq:normal_universe}.

\begin{proposition}
\label{proposition:simultaneous_ttest_variance_compound}
Suppose that $n=n(m) \geq 2$ and $K=K(m) \geq 1$ are such that $\max\{n, K\} \to \infty$ as $m \to \infty$. 
Then:
\begin{enumerate}[label=(\roman*)]
\item $(E_1^{(n)},\ldots, E_K^{(n)})_{m\in \N}$ is a sequence of asymptotic compound e-variables for $(\cP_1,\ldots,\cP_K)$ under $\cP^{\mathrm{B}}$. 
\item $(E_1^{(n)},\ldots, E_K^{(n)})_{m\in \N}$ is a sequence of  strongly asymptotic compound e-variables for $(\cP_1,\ldots,\cP_K)$ under $\cP^{\mathrm{N}}$. 
\end{enumerate}
\end{proposition}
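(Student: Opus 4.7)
Overall setup. The plan is to reduce both parts to a concentration property of the denominator $D:=\sum_{j\in\mathcal K}(\hat\sigma_j^{(n)})^2$. Writing $\mathcal N=\{k:\p\in\cP_k\}$, $\tau_0=\sum_{k\in\mathcal N}\sigma_k^2$, and $\tau=\sum_{j\in\mathcal K}\sigma_j^2=\E^\p[D]$, the basic oracle identity is $\E^\p[(S_k^{(n)})^2]=\sigma_k^2$ for $k\in\mathcal N$, so with $N:=\sum_{k\in\mathcal N}(S_k^{(n)})^2$ we have $\E^\p[N]=\tau_0\le\tau$, and $\sum_{k\in\mathcal N}E_k^{(n)}=KN/D$.

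Part (i). I would use an event-based argument with $A_m=\{D\ge(1-\eta_m)\tau\}$ for $\eta_m\downarrow 0$: on $A_m$,
\[\sum_{k\in\mathcal N}E_k^{(n)}\id_{A_m}\le\frac{KN}{(1-\eta_m)\tau},\qquad \sum_{k\in\mathcal N}\E^\p[E_k^{(n)}\id_{A_m}]\le\frac{K}{1-\eta_m},\]
which gives $\varepsilon_m=\eta_m/(1-\eta_m)$. To control $\delta_m=\p(A_m^c)$, write $D-\tau=\sum_j Y_j$ with $Y_j=(\hat\sigma_j^{(n)})^2-\sigma_j^2$ independent and mean-zero; under $\cP^{\mathrm{B}}$, the inequality $(\hat\sigma_j^{(n)})^2\le\tfrac{1}{n-1}\sum_i(X_j^i-\mu_j)^2$ together with Jensen and the moment bound $\E|X_j^i|^{2(1+\delta)}\le C$ yield $\sup_j\E|Y_j|^{1+\delta}<\infty$. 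The von Bahr--Esseen/Marcinkiewicz--Zygmund inequality then gives $\E|D-\tau|^{1+\delta}=O(K)$, and Markov yields $\p(|D-\tau|\ge\eta\tau)=O(K^{-\delta}\eta^{-(1+\delta)})$, vanishing whenever $K\to\infty$. In the complementary regime ($K$ bounded, $n\to\infty$), each $(\hat\sigma_j^{(n)})^2\to\sigma_j^2$ in probability by the weak law of large numbers, and $D/\tau\to 1$ by finite summation. Choosing $\eta_m\downarrow 0$ slowly enough makes $(\varepsilon_m,\delta_m)\to(0,0)$.

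Part (ii). The event-based approach cannot force $\delta_m$ to be exactly zero, so I would instead verify the equivalent characterization $\limsup_m\E^\p[N/D]\le 1$ (in the spirit of Proposition~\ref{prop:asym-e}, via Proposition~\ref{prop:equiv_approx_compound_e}). Under $\cP^{\mathrm{N}}$ with $\mu_k=0$ for $k\in\mathcal N$, the identity $(S_k^{(n)})^2=\tfrac{n-1}{n}(\hat\sigma_k^{(n)})^2+\bar X_k^2$ gives $N=\tfrac{n-1}{n}D_0+M$ with $D_0=\sum_{k\in\mathcal N}(\hat\sigma_k^{(n)})^2$ and $M=\sum_{k\in\mathcal N}\bar X_k^2$; Basu's theorem plus independence across groups makes $M\perp D$. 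Hence
\[\E^\p[N/D]=\tfrac{n-1}{n}\E^\p[D_0/D]+\E^\p[M]\cdot\E^\p[D^{-1}].\]
Under normality, $D=\sum_j\sigma_j^2 Z_j/(n-1)$ with $Z_j\simiid\chi^2_{n-1}$, so $\var(D)/\tau^2\to 0$ and $D/\tau\to 1$ in probability. The bound $D\ge\ubar\sigma^2 W/(n-1)$ with $W\sim\chi^2_{K(n-1)}$ together with $\E[W^{-p}]=O((K(n-1))^{-p})$ gives $\sup_m\E^\p[(\tau/D)^2]<\infty$, hence uniform integrability of $\tau/D$ and $\tau\E^\p[D^{-1}]\to 1$. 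Combined with $\E^\p[M]=\tau_0/n$ and a bounded-convergence argument for $\E^\p[D_0/D]\to\tau_0/\tau$ (using $D_0/D\in[0,1]$ and $D_0/D-\tau_0/\tau\to 0$ in probability, obtained from the same concentration), this yields $\limsup_m\E^\p[N/D]\le\tau_0/\tau\le 1$.

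The main technical obstacle is establishing uniform integrability of $\tau/D$ in part~(ii) uniformly across the combined regime $\max\{n,K\}\to\infty$: this is where normality is indispensable, since it furnishes closed-form chi-squared inverse moments. A secondary subtlety is that the null count $|\mathcal N|$ need not grow with $m$, so $D_0$ itself may fail to concentrate; this is absorbed by the inequality $D_0\le D$ combined with $D\to\infty$ in probability whenever $K\to\infty$, which still forces $D_0/D-\tau_0/\tau\to 0$.
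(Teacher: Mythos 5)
Your proof is correct but takes a genuinely different route from the paper's. For Part~(i), the paper bounds the null-only numerator $\sum_{k\in\mathcal N}(S_k^{(n)})^2$ by the all-indices centered sum $\sum_{k\in\mathcal K}(\tilde S_k^{(n)})^2$ with $\tilde S_k^2=\tfrac1n\sum_i(X_k^i-\mu_k)^2$, so that the ratio no longer depends on the unknown null set, and then applies Marcinkiewicz--Zygmund to the flattened $nK$-term sum $\sum_{k,i}(\tilde X_k^i)^2$; since the resulting concentration rate scales in $nK$, this handles $\max\{n,K\}\to\infty$ in one stroke. You instead keep the numerator restricted to nulls and concentrate only the denominator $D$ (a $K$-term sum), which forces the case split between $K\to\infty$ and $K$ bounded with $n\to\infty$; your split is valid but less unified, and you then conclude via the event-based characterization of approximate compound e-variables rather than via Proposition~\ref{prop:converge-dist}. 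For Part~(ii), the paper proves uniform integrability of the same centered ratio by bounding an $\eta$-th moment via Cauchy--Schwarz and exact $\chi^2$ moments, then invokes the compound analogue of Proposition~\ref{proposition:uniform_integrability_strong_e}. You instead compute $\E^\p[N/D]$ directly through the identity $(S_k^{(n)})^2=\tfrac{n-1}{n}(\hat\sigma_k^{(n)})^2+\bar X_k^2$ together with Basu's theorem (so $M\perp D$), reducing matters to $\E^\p[D_0/D]\le 1$ and $\tau\,\E^\p[D^{-1}]\to 1$ via uniform integrability of $\tau/D$ from inverse $\chi^2$ moments; incidentally, $\E^\p[D_0/D]\le 1$ alone already gives $\limsup\E^\p[N/D]\le 1$, so your dominated-convergence step for $\E^\p[D_0/D]\to\tau_0/\tau$ is dispensable. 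Both arrive at the characterization $\limsup_m\E^\p[N/D]\le 1$ of strongly asymptotic compound e-variables (Propositions~\ref{prop:asym-e} and~\ref{prop:equiv_approx_compound_e}), and both implicitly require $(n-1)K\ge 5$ for the inverse chi-squared moments, which is harmless since it excludes only finitely many $m$. Your Part~(ii) decomposition is in some ways more transparent about where normality (Basu, inverse $\chi^2$ moments) enters, while the paper's is shorter because it recycles the centered ratio bound from Part~(i).
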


\section{Simulation study}
\label{subsec:ttest_simulation}

The goal of this section is to empirically demonstrate that the approximate/asymptotic compound e-values of Section~\ref{subsec:ttest_normal} can lead to a substantial power increase for the e-BH procedure. We follow the notation and setting of Section~\ref{subsec:ttest_normal}.

We simulate from the simultaneous normal t-test model with $K=2000$ and let $n \in \{5,10\}$ be a simulation parameter. We consider two settings for \smash{$\sigma_k^2$}, either all \smash{$\sigma_k^2=1$} or \smash{$\sigma_k^2 \simiid \mathrm{Unif}[0.5,2]$}. Moreover we set $\lambda_k=0$ for $1800$ hypotheses (nulls), and $\lambda_k = \xi/\sqrt{n}$ for the other 200 hypotheses. We call $\xi$ the effect size and vary it in the interval $[2,6]$ in our simulations. 

We apply e-BH at level $\alpha=0.1$ and compare the following methods. We provide all methods with access to $\xi/\sqrt{n}$ and they all thus set $H = \delta_{\xi/\sqrt{n}}$. This is favorable for all methods but allows us to isolate the impact of the denominators in constructing (asymptotic, compound) e-values. 

\begin{enumerate}
\item z-Oracle: This is an idealized method that has access to the true $\sigma_k^2$ for all $k$ and sets $E_k^z = E(X_k; \delta_{\sigma_k^2}, H \otimes \delta_{\sigma_k^2})$.
\item EB-Oracle: This is another idealized method with access to the empirical distribution $G(\boldsigma^2)$. It is equal to the (oracle) optimal discovery compound e-values $E(X_k; G(\boldsigma^2), H \otimes G(\boldsigma^2))$.
\end{enumerate}
For the remaining methods, we use the universe in~\eqref {eq:normal_universe} with \smash{$\ubar{\sigma}^2 = 10^{-3}$} and \smash{$\bar{\sigma}^2= 10^{3}$}. 
The remaining methods are data-driven (i.e., are not given any extra information about the nuisance parameters beyond the wide bounds $\sigma_k^2 \in [\ubar{\sigma}^2, \bar{\sigma}^2]$). For the discretization of $G$ (described in Remarks~\ref{remark:cui_computation} and~\ref{remark:computation_npmle}), we take $B=600$ and consider a logarithmically equispaced grid from $\ubar{\sigma}^2$ to $\bar{\sigma}^2$. Below, \smash{$\widehat{G}$} refers to the NPMLE.
\begin{enumerate}[resume]
\item t-test: This is the scale invariant e-value studied in depth by~\citet{wang2025anytimevalid}. The $k$-th e-value is a measurable function of the t-statistic $T_k = \sqrt{n}\bar{X}_k / \hat{\sigma}_k$.
\item Empirical Bayes (EB): The plug-in compound e-values studied in Theorem~\ref{theorem:eb_asymptotic}, that is, $E_k^{\text{EB}} = E(X_k;\widehat{G},  H \otimes \widehat{G})$.
\item Universal inference (UI):  As in~\eqref{eq:UI} with numerator given by $\int p_{\lambda, \sigma^2}(x)  \dd (H\otimes \widehat{G})(\lambda, \sigma^2)$.
\item Compound universal inference (CUI): as in Section~\ref{subsec:compound_ttest} with the same numerator as for UI and the localization in~\eqref{eq:floc_ttest} with $\delta=0.01$. We apply e-BH at level $0.09$ rather than $0.1$.\footnote{Formally, UI and CUI are not fully justified as they estimate the numerator based on the data. However, we expect them to be more conservative than EB, and so include them as a comparison of different treatments for the denominator.}
\end{enumerate}
We evaluate results in terms of FDR and Power, which we define as the expected proportion of true discoveries among non-null hypotheses. These metrics are estimated by averaging the results over 200 Monte Carlo replications of each simulation.

\begin{figure}[t] 
\centering 
\includegraphics[width=\linewidth]{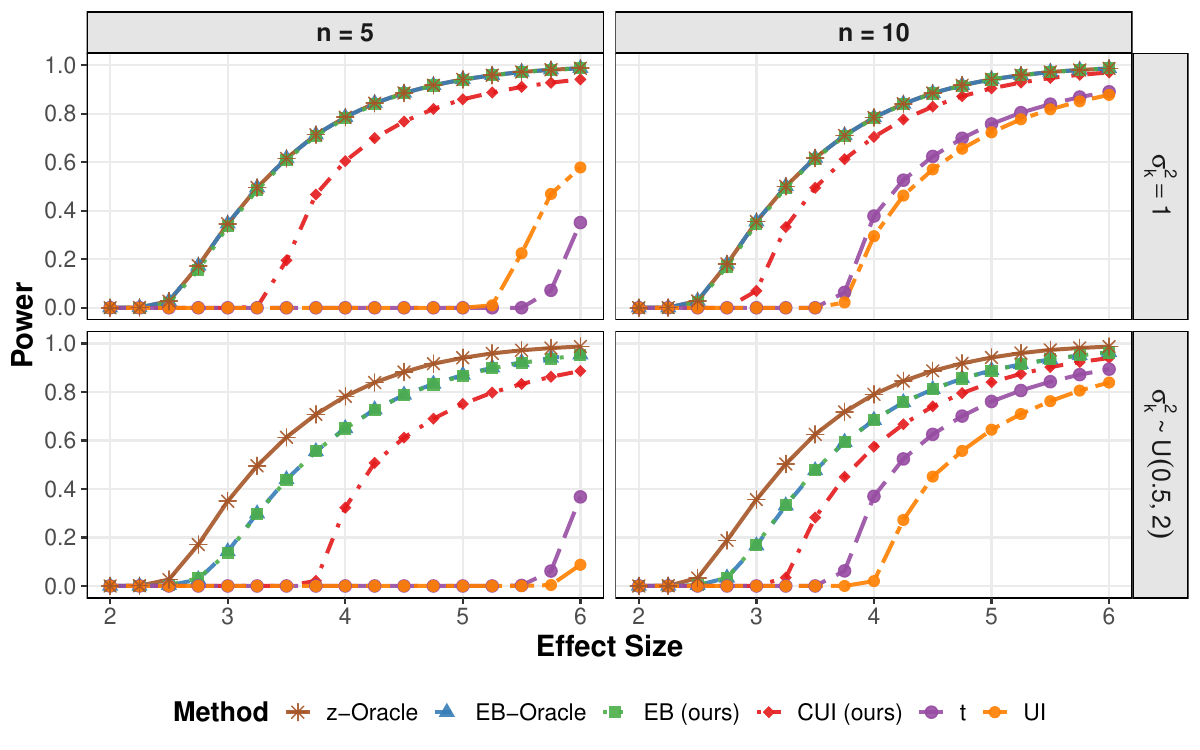} \caption{Power comparison of six multiple testing procedures under different sample sizes ($n \in \{5, 10\}$) and variance settings ($\sigma_k^2 = 1$ vs. $\sigma_k^2 \sim \mathrm{Unif}[0.5, 2]$). Methods include z-Oracle, EB-Oracle, EB (Empirical Bayes), CUI (Compound Universal Inference), t-test, and UI (Universal Inference). Among data-driven methods (i.e., methods that do not have oracle knowledge regarding the variances), our proposed EB method has by far the largest power. Moreover, EB is indistinguishable from its oracle counterpart (EB-Oracle). 
} \label{fig:simulation_results} 
\end{figure}

The results of the simulation are shown in Figure~\ref{fig:simulation_results}. All methods control FDR at the nominal $\alpha=0.1$ (not shown). In all cases, the most powerful methods is z-Oracle, which has the most information. When all $\sigma_k^2=1$, then z-Oracle and EB-Oracle are identical (since knowledge that $G(\boldsigma^2)=\delta_{1}$ implies that $\sigma_k^2=1$ for all $k$). Meanwhile, our data-driven version (EB), is effectively indistinguishable from EB-Oracle, showing that the NPMLE effectively learns $G(\boldsigma^2)$. CUI is less powerful than EB, but has more power than standard e-value methods (UI and t). The relative performance of UI and t depends on the simulation, see~\citet{wang2025anytimevalid} for further discussion. Overall, this simulation demonstrates the  power benefits by using asymptotic compound e-values instead of (bona fide) e-values.

\section{Summary}
We offer a formal treatment of asymptotic and compound e-values by giving them rigorous definitions. We provide several prototypical constructions, and explain why compound e-values are a fundamental concept in multiple testing. We demonstrate several ways of constructing bona fide, approximate, or asymptotic compound e-values, for example,
\begin{enumerate}
    \item by leveraging any FDR controlling procedure;
    \item through the optimal discovery compound e-values (which are ratios of mixture likelihoods) and empirical Bayes approximations thereof;
    \item through localized compound universal inference;
    \item or by calibration of (approximate) compound p-values.
\end{enumerate} 
Since notions related to compound e-values have already been extensively used in the recent multiple testing literature, we believe that our formalism will help streamline the development of new procedures, and provide a common language for future research in multiple testing that builds upon compound e-values.

\paragraph{Reproducibility.} We provide code to reproduce the numerical results of this paper on Github:\\
\url{https://github.com/nignatiadis/compound-evalues-paper}

\paragraph{Acknowledgments.} This work was completed in part with resources provided by the University of Chicago’s Research Computing Center.
RW is supported by the Natural Sciences and Engineering Research Council of Canada (CRC-2022-00141, RGPIN-2024-03728). NI gratefully acknowledges support from the U.S. National Science Foundation (NSF DMS-2443410).

\bibliographystyle{abbrvnat}
\bibliography{references}

\appendix 

\section{Compound p-value definition in~\citet{habiger2014compound}}
\label{sec:habiger_compound}
We briefly point out that the term compound p-values has previously been used in a way that is distinct from Definition~\ref{defi:compound_evalues}.~\citet{habiger2014compound} considered a setting akin to the sequence model in Section~\ref{sec:sequence_model} wherein $X=(X_k: k \in \mathcal{K})$ and in principle $X_k$ can be used to test $\cP_k$ for all $k \in \mathcal{K}$. Let $P_1,\dots,P_K$ be $[0,1]$-valued random variables. \citet{habiger2014compound} called these ``compound p-values'' if they are bona fide p-values (that is $\p(P_k \le t) \le t$ for all $t \in [0,1]$ and all $\p \in \mathcal{P}_k$) and depend also on data for other hypotheses $(X_{\ell}: \ell \neq k)$, i.e., $P_k = P_k(X)$ rather than $P_k = P_k(X_k)$. In other words, ``compound p-values'' are not separable as per Definition~\ref{defi:simple_separable}, and so could be called ``non-separable p-values'' instead. The definition treats separable p-values as the default data analysis choice (which is reasonable given how p-values are often used in practice), and one needs a qualifier to refer to non-separable p-values. By contrast, we prefer to think of (compound) p-values and e-values as being non-separable by default (given their flexibility and generality), and to use an explicit qualifier for the separable case. Moreover, we use the term compound in relation to the averaging (compounding) operation inherent in Definition~\ref{defi:compound_evalues}.

\section{More on *approximate compound p-values and e-values}
\label{sec:alt_defi}

Here we elaborate further on Definition~\ref{defi:additive_approx_compound}.
In analogy to Definition \ref{defi:asymp_compound_evariables_pvariables}, we define (uniformly) *asymptotic compound p-variables and e-variables, requiring, e.g., that $\lim_{m\to \infty}  \varepsilon_m(\p) = \lim_{m\to \infty} \delta_m(\p) = 0$ for each $\p \in \cP$. We do not define strongly *asymptotic compound p-variables (or e-variables) since the natural definition is equivalent (under the atomless property) to strongly asymptotic compound p-variables (or e-variables).

Proposition~\ref{proposition:approximate_evalues_to_approximate_compound} has the following analogous result (whose proof we omit).
\begin{proposition}
Suppose that for all $k \in \mathcal{K}$, $P_k$ (resp.~$E_k$) is an $(\varepsilon_k,\delta_k)$-approximate p-variable (resp.~e-variable) for $\cP_k$. Then, $P_1,\dots,P_K$ (resp. $E_1,\dots,E_K$) are $(\varepsilon,\delta)^*$-approximate compound p-variables (resp. e-variables) for the choice:
$$
\varepsilon(\p) = \frac{1}{K} \sum_{k: \p \in \cP_k} \varepsilon_k(\p),\qquad \delta(\p) = \frac{1}{K}\sum_{k: \p \in \cP_k} \delta_k(\p).
$$
\end{proposition}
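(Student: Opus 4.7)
The plan is a direct summation of the per-hypothesis bounds, with no event coupling required. This is precisely what makes Definition~\ref{defi:additive_approx_compound} easier to verify than Definition~\ref{defi:approx_compound}: since the constraints are purely additive inequalities on probabilities (or truncated expectations), there is no need to align the per-hypothesis good events onto a single common event $A$, as is required in the proof of Proposition~\ref{proposition:approximate_evalues_to_approximate_compound}. This is also why the $\delta$-aggregate appearing here is a factor-$K$ smaller than in Proposition~\ref{proposition:approximate_evalues_to_approximate_compound}.

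For the p-variable claim, I would fix $\p \in \cP$ and $t \in (0,1)$, restrict the sum to null indices $\mathcal N(\p) = \{k : \p \in \cP_k\}$, and apply the individual $(\varepsilon_k, \delta_k)$-approximate p-variable bound termwise:
\begin{equation*}
\sum_{k \in \mathcal N(\p)} \p(P_k \le t) \;\le\; \sum_{k \in \mathcal N(\p)} \bigl[(1+\varepsilon_k(\p))\, t + \delta_k(\p)\bigr].
\end{equation*}
Since $|\mathcal N(\p)| \le K$, the constant contributions are bounded by $Kt$, the $\varepsilon$-contributions equal $t \cdot K\varepsilon(\p)$ by the definition of $\varepsilon(\p)$, and the $\delta$-contributions equal $K\delta(\p)$. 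Collecting these terms yields $\sum_{k : \p \in \cP_k} \p(P_k \le t) \le K\bigl((1+\varepsilon(\p))t + \delta(\p)\bigr)$, which is exactly Definition~\ref{defi:additive_approx_compound}(i).

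The e-variable case is handled analogously, with $\E^\p[E_k \wedge t] \le 1 + \varepsilon_k(\p) + \delta_k(\p)\, t$ used in place of the tail bound; summing over $k \in \mathcal N(\p)$ gives at most $K + K\varepsilon(\p) + K\delta(\p)\, t = K(1 + \varepsilon(\p) + \delta(\p)\, t)$, matching Definition~\ref{defi:additive_approx_compound}(ii). There is no substantive obstacle here — the entire content of the proof is the linearity of summation together with the trivial bound $|\mathcal N(\p)| \le K$; the only conceptual point worth flagging is the contrast with Proposition~\ref{proposition:approximate_evalues_to_approximate_compound}, where the need to couple per-hypothesis events forces a union-bound argument on $\delta$.
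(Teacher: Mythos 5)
Your proof is correct, and it is evidently the straightforward termwise-summation argument the paper had in mind when it stated this proposition "whose proof we omit": sum the per-hypothesis bounds over $k \in \mathcal N(\p)$, use $|\mathcal N(\p)| \le K$ for the constant terms, and recognize the $\varepsilon$- and $\delta$-sums as $K\varepsilon(\p)$ and $K\delta(\p)$. Your side remark correctly identifies why the $\delta$-aggregate here gains a factor of $1/K$ relative to Proposition~\ref{proposition:approximate_evalues_to_approximate_compound}: the starred definition requires no common event $A$, so no union bound on the $\delta_k$ is needed.
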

We can calibrate $*$approximate compound p-variables to $*$approximate compound e-variables.
\begin{proposition}
\label{prop:calibration_additively_approximate_compound}
 Let $\varepsilon:\cP \to \R_+ $ and $\delta:\cP \to [0,1]$. 
    Calibrating $(\varepsilon, \delta)^*$-approximate  compound p-variables  yields  $(\varepsilon, \delta)^*$-approximate compound e-variables, and vice versa.
\end{proposition}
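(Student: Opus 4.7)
The plan is to reduce both directions of the calibration to the single-hypothesis arguments underlying Theorem~\ref{th:calibration}, with the sum over null indices commuted through the relevant Fubini / layer-cake identities. Throughout, fix $\p \in \cP$ and write $\mathcal{N}_\p = \{k : \p \in \cP_k\}$. Note that the $(\varepsilon,\delta)^*$-compound p-variable inequality extends trivially from $t \in (0,1)$ to all $t \in [0,1]$: at $t=0$ by right-continuity of probability, at $t=1$ by the crude bound $\sum_{k \in \mathcal{N}_\p}\p(P_k \le 1) \le K$.

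For the p-to-e direction, let $h$ be a p-to-e calibrator and set $E_k = h(P_k)$. Since $h(P_k) \wedge t > s$ iff $h(P_k) > s$ and $s < t$, the layer-cake identity gives
$$\E^\p[h(P_k) \wedge t] = \int_0^t \p(h(P_k) > s)\, \dd s.$$
Because $h$ is decreasing with $h(v) = 0$ for $v > 1$, the set $\{v \ge 0 : h(v) > s\}$ is contained in $[0, u(s)]$ where $u(s) = \sup\{v \ge 0 : h(v) > s\} \in [0,1]$, so $\p(h(P_k) > s) \le \p(P_k \le u(s))$. Summing over $k \in \mathcal{N}_\p$, interchanging sum and integral by Tonelli, and invoking Definition~\ref{defi:additive_approx_compound}(i) at $t = u(s)$ yields
$$\sum_{k \in \mathcal{N}_\p} \E^\p[h(P_k) \wedge t] \le K\int_0^t \bigl[(1+\varepsilon_\p)u(s) + \delta_\p\bigr]\dd s.$$
The calibrator identity $\int_0^\infty u(s)\, \dd s = \int_0^1 h(v)\, \dd v \le 1$ (another application of Tonelli) then bounds the right-hand side by $K(1 + \varepsilon_\p + \delta_\p t)$, which is exactly the $(\varepsilon,\delta)^*$-approximate compound e-variable condition.

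For the e-to-p direction, use the (essentially unique) e-to-p calibrator $P_k = (1/E_k) \wedge 1$. The pointwise bound $\id_{\{E_k \ge s\}} \le (E_k \wedge s)/s$, valid for all $s > 0$, gives, for each $t \in (0,1)$,
$$\sum_{k \in \mathcal{N}_\p} \p(P_k \le t) = \sum_{k \in \mathcal{N}_\p} \p(E_k \ge 1/t) \le t \sum_{k \in \mathcal{N}_\p} \E^\p[E_k \wedge (1/t)].$$
Applying Definition~\ref{defi:additive_approx_compound}(ii) at $t' = 1/t$ bounds this by $t \cdot K(1 + \varepsilon_\p + \delta_\p/t) = K((1+\varepsilon_\p)t + \delta_\p)$, which is exactly Definition~\ref{defi:additive_approx_compound}(i).

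The only nontrivial bookkeeping is verifying that the layer-cake / Tonelli interchange and the calibrator identity $\int_0^\infty u(s)\,\dd s \le 1$ survive unchanged once the sum over nulls is moved inside the integral; after that check, both directions are a direct compound lift of the single-hypothesis Theorem~\ref{th:calibration}, and I do not expect any genuine conceptual obstruction beyond this.
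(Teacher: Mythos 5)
Your proof is correct, but it takes a genuinely different route from the paper's. The paper reduces to the single-hypothesis Theorem~\ref{th:calibration} via the randomization device of Theorem~\ref{prop:compound-pe1}: it draws a uniformly random index $L$ over the null set, an independent thinning variable $U \sim \mathrm{Unif}[0,1]$, forms the randomized aggregate $\tilde P = P_L \id_{\{U \le K_0/K\}} + \id_{\{U > K_0/K\}}$, checks that $\tilde P$ is an $(\varepsilon,\delta)$-approximate p-variable for the singleton $\{\p\}$, applies Theorem~\ref{th:calibration} to $\tilde P$, and unwinds the expectation over $(L,U)$ to recover the compound bound. You instead give a direct, self-contained argument: layer-cake on $\E^\p[h(P_k)\wedge t]$, the inclusion $\{h(P_k)>s\}\subseteq\{P_k\le u(s)\}$ via the pseudo-inverse $u(s)=\sup\{v:h(v)>s\}$, swap the (finite) sum with the integral, apply the compound-p definition pointwise in $s$, and close with the calibrator identity $\int_0^\infty u(s)\,\dd s=\int_0^1 h\le 1$. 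Your handling of the endpoints $t\in\{0,1\}$ (right-continuity of probability at $0$, crude $\le K$ bound at $1$) is the right way to make the definition applicable at $u(s)$ without gaps, and your e-to-p direction via $\id_{\{E\ge s\}}\le (E\wedge s)/s$ is clean and in fact more explicit than the paper, which omits that direction. The trade-off: the paper's randomization trick buys modularity and emphasizes the conceptual reduction to the single-hypothesis calibration result, whereas your version is more elementary, avoids any auxiliary randomization and the associated atomlessness concerns, and treats both directions symmetrically.
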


\begin{proof}
We only prove the calibration result from *approximate compound p-variables to e-variables. To this end, let $P_1,\dots,P_K$ be $(\varepsilon, \delta)^*$- approximate compound p-variables for $\cP_1,\dots,\cP_K$ and let $h$ be an e-to-p calibrator. Fix $\p \in \cP$.
Define $P_L$ and $\tilde{P}$ as in the proof of Theorem~\ref{prop:compound-pe1} and $\widetilde{\cP}=\{\p\}$ as a singleton null hypothesis. Then,  $\tilde{P}$ is an $(\varepsilon, \delta)$-approximate p-variable for $\widetilde{\cP}$. Thus, by Theorem~\ref{th:calibration}, 
$h(\tilde{P})$ is an $(\varepsilon, \delta)$-approximate e-variable for $\widetilde{\cP}$. 

Hence, further arguing as in the proof of Theorem~\ref{prop:compound-pe1}, for any $t>0$,
$$ \frac{1}{K} \sum_{k: \p \in \mathcal{P}_k }\E^{\p}\left[   h( P_k) \wedge t \right] = \frac{K_0}{K} 
 \E^{\p}\left[   h( P_{L} \wedge t ) \right] 
 \le   \E^{\p}\left[ h(\tilde P)\wedge t \right] \leq 1+\varepsilon_\p  +\delta_\p t.$$
Thus $h(P_1),\dots,h(P_K)$ are $(\varepsilon,\delta)^*$-approximate e-variables for $\cP_1,\dots,\cP_K$.
\end{proof}

Theorem~\ref{theo:eBH_controls_the_FDR} demonstrated that approximate compound e-variables are naturally compatible with approximate FDR control when using the e-BH procedure. An analogous statement is not true for $^*$approximate compound e-variables without further conditions. However it is true in a regime where e-BH makes sufficiently many discoveries. The statement of the following proposition is inspired by~\citet[Theorem 2]{li2019multiple}, whose FDR bound also depends on the tail probability of the number of rejections.

\begin{proposition}
\label{prop:eBH_star_approximate}
Let $\cD$ be the e-BH procedure at level $\alpha \in (0,1)$ and 
suppose that $E_1,\ldots,E_K$ are $(\varepsilon,\delta)^*$-approximate e-variables for $(\cP_1,\ldots,\cP_K)$ under $\cP$.  Then
$$
\mathrm{FDR}_{\cD}^{\p} \leq \alpha(1+\varepsilon_{\p}) + \inf_{\kappa \in (0,1)}  \left \{\frac{\delta_{\p} }{\kappa}  + \p(R_{\cD} < K \kappa) \right\}
\leq \alpha(1+\varepsilon_{\p}) +     \sqrt{\delta_{\p}}    + \p(R_{\cD} < K \sqrt{\delta_{\p}})  .
$$
Under an asymptotic regime with $\delta_{\p}, \varepsilon_{\p} \to 0$ and $\p(R_{\cD} < K \kappa) \to 0$ for fixed $\kappa \in (0,1)$,  then e-BH asymptotically controls the $\mathrm{FDR}$ at level $\alpha$.
\end{proposition}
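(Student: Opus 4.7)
The strategy is a truncation argument that couples the $\wedge T$ bound available from Definition~\ref{defi:additive_approx_compound}(ii) to the event that e-BH makes sufficiently many rejections. The standard e-BH proof uses the inequality $\id\{k\in\cD\}/(R_\cD\vee 1) \le \alpha E_k/K$, which follows from $E_{[R_\cD]}\ge K/(\alpha R_\cD)$. This suffices when $\sum_{k:\p\in\cP_k}\E^\p[E_k]\le K(1+\varepsilon_\p)$, but for $(\varepsilon,\delta)^*$-approximate compound e-variables this expectation bound may fail, and naively letting $t\to\infty$ in the defining $\wedge t$ inequality blows up through the $\delta_\p t$ term. A truncation must be inserted.

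The key step will be the following claim: fix $\kappa\in(0,1)$ and set $T_\kappa:=1/(\alpha\kappa)$; then on the event $\{R_\cD\ge K\kappa\}$, for every $k\in\cK$,
$$\frac{\id\{k\in\cD\}}{R_\cD\vee 1}\;\le\;\frac{\alpha\,(E_k\wedge T_\kappa)}{K}.$$
This is trivial when $k\notin\cD$. When $k\in\cD$, split on whether $E_k\le T_\kappa$: if yes, $E_k\wedge T_\kappa=E_k$ and the usual e-BH inequality applies; if no, $E_k\wedge T_\kappa=T_\kappa$ and the desired inequality becomes $R_\cD\ge K\kappa$, which holds on the event. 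Identifying this coupling between the truncation level $T_\kappa$ and the rejection-count event $\{R_\cD\ge K\kappa\}$ is the only nonroutine part; the rest is bookkeeping.

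From here the proof proceeds in three short steps. First, decompose
$$\mathrm{FDR}_\cD^\p\;\le\;\E^\p\!\left[\frac{F_\cD}{R_\cD\vee 1}\id\{R_\cD\ge K\kappa\}\right]+\p(R_\cD<K\kappa),$$
using $F_\cD/(R_\cD\vee 1)\le 1$. Second, sum the displayed bound over $k$ with $\p\in\cP_k$ and take expectations to obtain
$$\E^\p\!\left[\frac{F_\cD}{R_\cD\vee 1}\id\{R_\cD\ge K\kappa\}\right]\;\le\;\frac{\alpha}{K}\sum_{k:\p\in\cP_k}\E^\p[E_k\wedge T_\kappa].$$
Third, invoke Definition~\ref{defi:additive_approx_compound}(ii) at $t=T_\kappa$ to get $\sum_{k:\p\in\cP_k}\E^\p[E_k\wedge T_\kappa]\le K\bigl(1+\varepsilon_\p+\delta_\p/(\alpha\kappa)\bigr)$, yielding
$$\mathrm{FDR}_\cD^\p\;\le\;\alpha(1+\varepsilon_\p)+\frac{\delta_\p}{\kappa}+\p(R_\cD<K\kappa).$$
Taking the infimum over $\kappa\in(0,1)$ gives the first inequality; specializing to $\kappa=\sqrt{\delta_\p}$ (with the natural convention when $\delta_\p=0$) gives the second.

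For the asymptotic conclusion, pick any fixed $\kappa\in(0,1)$ along which $\p(R_\cD<K\kappa)\to 0$. Since $\varepsilon_\p\to 0$ and $\delta_\p\to 0$, the first bound forces $\limsup \mathrm{FDR}_\cD^\p\le\alpha$. The only subtlety worth flagging is that the choice of $T_\kappa$ must be \emph{deterministic} (not a function of $R_\cD$) so that Definition~\ref{defi:additive_approx_compound}(ii) can be applied at a single value of $t$ after taking expectations; this is precisely what the above coupling with $\kappa$ arranges.
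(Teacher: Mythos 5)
Your proof is correct and follows essentially the same argument as the paper: truncate at $T_\kappa = 1/(\alpha\kappa)$, couple the truncation with the event $\{R_\cD \ge K\kappa\}$, apply the defining $\wedge t$ inequality at the deterministic value $t = T_\kappa$, and optimize over $\kappa$. The only cosmetic difference is that you establish the pointwise inequality $\id\{k\in\cD\}/(R_\cD\vee 1)\le \alpha(E_k\wedge T_\kappa)/K$ by a case split on $E_k\lessgtr T_\kappa$, whereas the paper passes through an indicator chain $\id_{\{E_k\ge K/(\alpha R_\cD)\}}\id_A \le \id_{\{E_k\wedge T_\kappa\ge K/(\alpha R_\cD)\}}$ before invoking Markov; these are two ways of writing the same step.
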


\begin{proof}
Consider the event $A = \{R_{\cD} \geq K \kappa\}$. On this event we have that $K/(\alpha R_{\cD}) \leq 1/(\alpha \kappa)$. 
$$
\begin{aligned}
\mathbb E^{\p}\left[\frac{V_{\cD}}{R_{\cD}} \id_A \right] &= \sum_{k: \p \in \cP_k } \mathbb E^{\p}\left[\frac{\id_{\{E_k \geq K/(\alpha R_{\cD})\}}}{R_{\cD}}\id_A \right] \\ 
&\leq \sum_{k: \p \in \cP_k } \mathbb E^{\p}\left[\frac{\id_{\{E_k \land (1/(\alpha\kappa)) \geq K/(\alpha R_{\cD})\}}}{R_{\cD}} \right] \\ 
&\leq \frac{\alpha}{K} \sum_{k: \p \in \cP_k } \mathbb E^{\p}\left[ E_k \land (1/(\alpha\kappa)) \right] \\ 
& \leq \alpha \left( 1+\varepsilon_{\p} + \delta_{\p}\frac{1}{\alpha \kappa}\right).
\end{aligned}
$$
The last inequality holds by definition of $(\varepsilon,\delta)^*$-approximate e-variables. The other two results follow immediately.
\end{proof}

\section{Compound e-values in multiple testing: addendum}
\label{sec:compound_mtp_addendum}

\subsection{Combination and derandomization}
\label{subsec:derandomization}
The connections between FDR controlling procedures, e-BH, and compound e-values laid out in Section~\ref{sec:multiple_testing} motivate the following general and practical mechanism for combining discoveries across multiple testing procedures. To be concrete, suppose we run $L$ different multiple testing procedures $\mathcal{D}_1,\dots,\mathcal{D}_L$. The $\ell$-th procedure is applied to a subset of hypotheses $\{\cP_k:\; k \in \mathcal \mathcal{S}_{\ell}\}$ where $\mathcal{S}_{\ell} \subseteq \mathcal{K}$ and controls the FDR at level $\alpha_{\ell}$.  Then we may proceed as follows:
\begin{enumerate}
\item For the $\ell$-th multiple testing procedure, form compound e-values $E_k^{(\ell)},\, k \in \mathcal{S}_{\ell}$, for $\mathcal{P}_k,\,k \in \mathcal{S}_{\ell}$ (which always exist by the proof of Theorem~\ref{theo:universality_eBH}). 
They could be---but need not necessarily be---the implied ones formed in~\eqref{eq:universal-e} . For $k \in \mathcal{K} \setminus \mathcal{S}_{\ell}$, set $E_{k}^{(\ell)}=1$.
\item Fix weights $w_1,\dots,w_L \geq 0$ with $\sum_{\ell=1}^L w_{\ell} = 1$. Then construct new compound e-values $E_1,\dots, E_K$ by convex combination as in Example~\ref{exam:convex_combi}, i.e.,  $E_{k} = \sum_{\ell=1}^L w_{\ell} E_k^{(\ell)}$ for all $k \in \mathcal{K}$.
We also allow $w_1,\dots,w_L\ge 0 $ to be random, as long as they are independent of all e-values used in the procedures, and in that case it suffice to require $\sum_{\ell=1}^L \E^{\p} [w_{\ell}] \le  1$ for all $\p \in \cP$ (this condition is similar to the condition for compound e-values). 
\item Apply the e-BH procedure to the new compound e-values $E_1,\dots, E_K$ at level $\alpha$.
\end{enumerate}
The above construction is guaranteed to control the FDR at level $\alpha$, as long as all the individual procedures $\mathcal{D}_{\ell}$ form valid compound e-values.
Note that the value of $\alpha_{\ell}$ does not matter, as it is only used in the construction of $E_1^{(\ell)},\dots,E_K^{(\ell)}$ , possibly implicitly (e.g., in the proof Theorem~\ref{theo:universality_eBH}).   

One important application of the above recipe is derandomization. Suppose that $\mathcal{D}_{\ell}$ is a randomized multiple testing procedure, that is, it is a function of both the data $X$ as well as a random variable $U_{\ell}$ generated during the analysis. Such randomness may not be desirable, since different random number generation seeds will lead to different sets of discoveries. In such cases, the above recipe can be used to construct a new derandomized procedure $\mathcal{D}$ that is less sensitive to $U_1,\dots,U_L$ (formally, full derandomization would occur if $U_{\ell}$ are iid and $L\to \infty$). In this way~\citet{ren2024derandomised} were able to derandomize the model-X knockoff filter~\citep{ candes2018panning}, a flexible set of methods for variable selection in regression with finite-sample FDR control
 that previously relied on additional randomness.

A further application includes the following: \citet{banerjee2023harnessing} used the above recipe for meta-analysis in which the $\ell$-th study only reports the set of tested hypotheses, the set of discoveries, as well as the targeted FDR level. 

We end this subsection by using our recipe to derandomize the randomized e-BH procedures (including $R_1$-e-BH, $R_2$-e-BH, $R$-e-BH, and $U$-e-BH) proposed by~\citet{xu2024more}. For instance, $U$-e-BH draws $U \sim \mathrm{Unif}[0,1]$ and then applies e-BH to $E_1/U,\ldots,E_K/U$. Denote any of these procedures by $\mathcal{D}$. 
All of these procedures
have in common the following properties: they operate on e-values $E_1,\ldots,E_K$ as well as an external source of randomization $U$, they control the FDR without requiring further additional assumptions over e-BH, and finally they make at least as many discoveries as e-BH (that is, $\mathcal{D} \supseteq \mathcal{D}_{\text{e-BH}}$ almost surely), often making strictly more discoveries. Both~\citet{xu2024more} and~\citet{lee2024boosting} have noted the challenge of derandomizing these procedures; naive attempts lead back to e-BH applied to the original e-values. Now consider applying the derandomization procedure of this section with $\mathcal{D}_{\ell}$ denoting an application of $\mathcal{D}$ with external randomization seed $U_\ell$, $S_{\ell} = \mathcal{K}$, and weights $w_{\ell}=1$. Then the resulting procedure controls the FDR and is different from e-BH.

\subsection{Compound e-values as weights in p-value based multiple testing}
\label{subsec:compound_evalues_as_weights}

\cite{ignatiadis2024evalues} explained that e-values can be used as weights in p-value based multiple testing.
We explain why \emph{compound} e-values are the natural notion of weights for multiple testing with the p-BH procedure. We first define the p-BH procedure of~\citet{benjamini1995controlling} for FDR control based on p-values, as well as weighted generalizations thereof.

\begin{definition}[p-BH, weighted p-BH, ep-BH]
\label{defi:weighted-pBH}
Let $P_1,\dots, P_K$ be p-variables for the hypotheses $\cP_1,\dots,\cP_K$ and let $W_1,\dots, W_K$ be $[0,\infty]$-valued. Define $Q_k = P_k / W_k$ with the convention $0/0=0$. 
For $k\in \mathcal K$, let $Q_{(k)}$ be the $k$-th order statistic of $Q_1,\ldots,Q_K$, from the smallest to the largest. Consider the procedure that rejects hypothesis $k$ if $Q_k$ is among the smallest $k_q^*$ values $Q_1,\dots,Q_K$, where 
\begin{equation*} 
k_q^*:=\max\left\{k\in \mathcal K: \frac{K Q_{(k)}}{k} \le \alpha \right\},
\end{equation*}     
with the convention $\max(\varnothing) = 0$. 
When $W_k=1$ for all $k$, then the above procedure is the p-BH procedure of~\citet{benjamini1995controlling}. When $W_k=w_k$ is deterministic and such that $\sum_{k \in \mathcal{K}} w_k = K$, then the above procedure is the weighted p-BH procedure of~\citet{genovese2006false}. When $W_k=E_k$ and $E_1,\dots,E_K$ are compound e-variables, then the above procedure is the ep-BH (e-weighted p-BH) procedure of~\citet{ignatiadis2024evalues}.
\end{definition} 
The generalization of p-BH to allow for deterministic weights by~\citet{genovese2006false} was motivated by differentially prioritizing hypotheses to improve the power of p-BH while maintaining frequentist FDR control. The requirement $\sum_{k \in \mathcal{K}} w_k = K$ may be interpreted as a fixed size budget to be split across hypotheses. 
When the p-values are positive regression dependent on a subset (PRDS, \citet{benjamini2001control}) under the distributions in $\cP$ and the weights are deterministic then existing arguments in the literature~\citep{ramdas2019unified} demonstrate that the FDR of the weighted p-BH procedure is controlled at $(\alpha/K)\sum_{k \in \mathcal{N}} w_k$ under this assumption, which is bounded by $\alpha$ under the budget constraint $\sum_{k \in \mathcal{K}} w_k = K$. If the weights $W_1,\dots,W_K$ are random and independent of $P_1,\dots,P_K$, then by applying the above argument conditional on the weights and by iterated expectation, it follows that the FDR is controlled at $\sup_{\p\in \cP} (\alpha/K)\mathbb \sum_{k: \p \in \cP_k} \mathbb E^{\p}[W_k]$. The latter is controlled at $\alpha$ precisely when $W_1,\dots,W_K$ are compound e-values. We note that~\citet{ignatiadis2024evalues} defined the ep-BH procedure  and proved FDR control (in their Theorem 4) when $E_1,\dots,E_K$ are e-values. However,~\citet[Section 4.5]{ignatiadis2024evalues} explained that the guarantees would continue to hold for compound e-values (although their definition is slightly different from ours).

\subsection{Compound e-values as weights in merging p-values}
\label{sec:compound_evalues_to_merge}

A vector of arbitrarily dependent p-variables can be merged into one p-variables through p-merging functions in the sense of \cite{vovk2020combining}. 
All admissible homogeneous p-merging functions can be written as  
\begin{align}
\label{eq:def-f}
F(\mathbf p)  =  \inf\left\{\alpha \in (0,1):   \sum_{k=1}^K \lambda_k f_k\left(\frac{p_k}{\alpha}\right) \ge 1 \right\}  \mbox{ ~~~for $\mathbf p=(p_1,\dots,p_K)\in [0,\infty)^K$,}
\end{align}
for some  $\lambda_1,\dots,\lambda_K$ that are nonnegative weights summing to $1$ and  admissible p-to-e calibrators $f_1,\dots,f_K$ (\citet[Theorem 5.1]{vovk2022admissible}). Here, we set $\inf \emptyset=1$.

 Consider the global null setting $\mathbb P \in \bigcap_{k\in \mathcal K} \cP_k$, and let $E_1,\dots,E_K$ be compound e-variables for $\cP_1,\dots,\cP_K$.  Write $\mathbf E=(E_1,\dots,E_K)$.

 We follow the setting of \cite{ignatiadis2024evalues}, where one has access to a vector of p-values for the hypotheses $H_1,\dots,H_K$ that is independent of $\mathbf E$. This may be relevant in the context of follow-up experiments. 
 Suppose that $\mathbf P$ is a vector of p-variables for $\mathbb P$ independent of $\mathbf E$.
 We can define the e-weighted version of \eqref{eq:def-f} as
 \begin{align}
\label{eq:def-f2}
F(\mathbf p,\mathbf E)  =  \inf\left\{\alpha \in (0,1): \frac 1K  \sum_{k=1}^K E_k f_k\left(\frac{p_k}{\alpha}\right) \ge 1 \right\}  \mbox{ ~~~for $\mathbf p=(p_1,\dots,p_K)\in [0,\infty)^K$.}
\end{align}
The formulation \eqref{eq:def-f2}  includes \eqref{eq:def-f} as a special case via $E_k=\lambda_k K$ for $k\in \mathcal K$.
Note that for any calibrators $f_1,\dots,f_K$, the function $F(\mathbf p, \mathbf e)$ is decreasing in $\mathbf e$. This is intuitive, as larger e-values carry evidence against the null hypothesis, in the same direction as smaller p-values. 

For a clear comparison with some classic p-merging functions, in what follows, we allow  $F$ in \eqref{eq:def-f2} to be defined with
 any decreasing functions $f_1,\dots,f_K$  
satisfying $\int_0^1 f_k(x) \dd x \le 1$,  $f_k\ge 0$ on $[0,1]$, and $f_k\le 0$ on $(1,\infty)$ for each $k\in \mathcal K$. These functions are more general than p-to-e calibrators, and they generate p-merging functions  dominated by those generated by  calibrators.

  \begin{proposition}
In the above setting, we have 
 $$
 \mathbb P (F(\mathbf P,\mathbf E) \le \alpha ) \le \alpha~~~\mbox{for all $\alpha \in (0,1)$}.
 $$
 \end{proposition}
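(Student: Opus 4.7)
My approach is to reduce the tail bound on $F(\mathbf P, \mathbf E)$ to a Markov-type estimate at a slightly inflated threshold and then let the threshold shrink back to $\alpha$. For fixed arguments, set $g_{\mathbf p, \mathbf E}(\beta) := \frac{1}{K}\sum_{k=1}^K E_k f_k(p_k/\beta)$, so that $F(\mathbf p, \mathbf E) = \inf\{\beta \in (0,1) : g_{\mathbf p, \mathbf E}(\beta) \ge 1\}$. Since each $f_k$ is decreasing and each $E_k \ge 0$, the map $\beta \mapsto g_{\mathbf p, \mathbf E}(\beta)$ is increasing in $\beta$. Hence for every $\beta \in (\alpha, 1)$ one has the pathwise inclusion
$$\{F(\mathbf P, \mathbf E) \le \alpha\} \subseteq \{g_{\mathbf P, \mathbf E}(\beta) \ge 1\},$$
because $F \le \alpha < \beta$ forces the defining set of $F$ to contain some $\beta' < \beta$, and monotonicity of $g$ then gives $g(\beta) \ge g(\beta') \ge 1$.

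Next I would bound $\p(g_{\mathbf P, \mathbf E}(\beta) \ge 1)$ by combining Markov ($\id_{\{x \ge 1\}} \le x^+$) with the elementary facts $(\sum_k a_k)^+ \le \sum_k a_k^+$ and $(E_k f_k)^+ = E_k f_k^+$ (since $E_k \ge 0$):
$$\p(g_{\mathbf P, \mathbf E}(\beta) \ge 1) \le \E[g_{\mathbf P, \mathbf E}(\beta)^+] \le \frac{1}{K}\sum_{k=1}^K \E\bigl[E_k\, f_k(P_k/\beta)^+\bigr].$$
The assumed independence of $\mathbf E$ and $\mathbf P$ factorizes each summand as $\E[E_k]\cdot \E[f_k(P_k/\beta)^+]$. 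Since $f_k$ is decreasing and nonpositive on $(1,\infty)$, the map $x \mapsto f_k(x/\beta)^+$ is a nonnegative decreasing function of $x$, and stochastic dominance of the p-variable $P_k$ over $U \sim \mathrm{Unif}[0,1]$ yields
$$\E[f_k(P_k/\beta)^+] \le \int_0^1 f_k(u/\beta)^+\dd u = \beta \int_0^{1/\beta} f_k(v)^+\dd v = \beta \int_0^1 f_k(v)\dd v \le \beta,$$
where the last equality uses that $f_k^+ = 0$ on $(1,\infty)$ and $f_k^+ = f_k$ on $[0,1]$, and the final inequality uses the calibrator-like normalization $\int_0^1 f_k \le 1$.

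Assembling these estimates and invoking the compound e-value property $\sum_{k=1}^K \E[E_k] \le K$ (which applies to all coordinates because $\p \in \bigcap_k \cP_k$), I obtain $\p(g_{\mathbf P, \mathbf E}(\beta) \ge 1) \le \beta$ for every $\beta \in (0,1)$. Combined with the pathwise inclusion above, this gives $\p(F(\mathbf P, \mathbf E) \le \alpha) \le \beta$ for every $\beta \in (\alpha, 1)$, and letting $\beta \downarrow \alpha$ yields the desired bound. The only delicate step is the monotone reduction to $\{g(\beta) \ge 1\}$, since the infimum defining $F$ need not be attained; working at a strict overshoot $\beta > \alpha$ and then taking the right limit cleanly sidesteps any boundary subtlety.
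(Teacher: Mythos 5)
Your proof is correct and follows essentially the same route as the paper's: both reduce to the event $\{g(\beta)\ge 1\}$ for $\beta>\alpha$ via monotonicity of $\beta\mapsto f_k(p_k/\beta)$, pass from $f_k$ to $f_k^+$, apply Markov together with the independence of $\mathbf E$ and $\mathbf P$ and the compound constraint $\sum_k\E[E_k]\le K$, and finally take $\beta\downarrow\alpha$. The only cosmetic difference is that you verify $\E[f_k^+(P_k/\beta)]\le\beta$ directly by stochastic dominance of $P_k$ over a uniform, whereas the paper invokes the external fact (Lemma 2.3 of Gasparin et al.) that $\beta^{-1}f(P/\beta)$ is an e-variable.
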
 

 \begin{proof}
 Let $\beta \in (0,1)$.
 Note that $f_k^+:=f_k \vee 0 $ is a calibrator.
 For any calibrator $f$ and p-variable $P$, 
      we can check that $  \beta ^{-1}   f({P }/{\beta})  $  is an e-variable for $\mathbb P$; see e.g., Lemma 2.3 of \cite{gasparin2024combining}.
      Therefore, 
$$ \E^{\mathbb P} \left[\frac{1}{ K}\sum_{k=1}^K E_k f_k^+\left(\frac{P_k}{\beta}\right)\right]
\le \frac{1}{   K} \sum_{k=1}^K \E^{\mathbb P} [E_k]  \beta  
\le \beta .
$$
Markov's inequality gives 
$$
\mathbb P\left (\frac{1}{ K}\sum_{k=1}^K E_k f_k\left(\frac{P_k}{\beta}\right)   \ge 1 \right)\le \mathbb P\left (\frac{1}{ K}\sum_{k=1}^K E_k f_k^+\left(\frac{P_k}{\beta}\right)   \ge 1 \right) \le \beta.
$$
Using the monotonicity of $\beta \mapsto f_k(p_k/\beta)$,
we have, for any $\alpha\in (0,1)$,
  $$ F(\mathbf P,\mathbf E) \le \alpha \iff  \frac 1{K}  \sum_{k=1}^K E_k f_k\left(\frac{p_k}{\beta }\right) \ge 1  \mbox{~for all $\beta>\alpha$}.$$
  Therefore,
\begin{align*}
  \mathbb  P( F(\mathbf P,\mathbf E)  \le \alpha ) 
 & = \mathbb P \left( \bigcap_{\beta > \alpha}\left\{ \frac 1K  \sum_{k=1}^K E_k f_k\left(\frac{p_k}{\beta }\right) \ge 1 \right\} \right)\\ & \le \inf _{\beta > \alpha} \mathbb P \left(   \frac 1K  \sum_{k=1}^K E_k f_k\left(\frac{p_k}{\beta }\right) \ge 1  \right)
\le  \inf _{\beta > \alpha}  \beta =\alpha,
  \end{align*}
  showing the desired result. 
  \end{proof}

Some examples are discussed below. We will see that negative values for $f$ on $(1,\infty)$ allow for recovering some classic p-merging functions.
\begin{enumerate}
    \item[(i)]  Take $f_1,\dots,f_K$ as $f(x)=2-2x$. 
    We have $$F(\mathbf p,\mathbf E)=\inf\left\{\alpha \in (0,1): \frac{2}{K}\sum_{k=1}^K E_k(1-P_k/\alpha) \ge 1 \right\} = \left(\frac{\sum_{k=1}^K E_k P_k }{(\sum_{k=1}^K  E_k - K/2)_+}\right)\wedge 1, $$
    where we can set $F(\mathbf p,\mathbf E)=1$ if the denominator is $0$.
This can be seen as a e-weighted version of ``twice the mean" (truncated at $1$), which is precisely the case if $E_1=\dots=E_K=1$.
    \item[(ii)] Take $f_1,\dots,f_K$ as $f(x)=-\log x$.
     By writing
    $E=\sum_{k=1}^K E_k/K$, we have
    $$F(\mathbf p,\mathbf E)=\inf\left\{\alpha \in (0,1): \frac{-1}{K}\sum_{k=1}^K E_k \log (P_k/\alpha) \ge 1 \right\} = \left(\exp(E)\left( \prod_{k=1}^K P_k^{E_k} \right)^{1/(KE)}\right)\wedge 1. $$
This can be seen as a e-weighted version of ``$\exp(1)$ times the geometric mean"(truncated at $1$), which is precisely the case if $E_1=\dots=E_K=1$. 
 \end{enumerate}
 There is also a similar e-weighted version of the harmonic mean, but its formula is  more involved, and we  omit it here.

\subsection{Compound p-values in multiple testing}
\label{subsec:compound_pvalues_multiple_testing}
So far in this section we have focused on the role of compound e-values in multiple testing. Here we briefly discuss some known results about compound p-values. \citet{armstrong2022false} provided a counterexample of jointly independent compound p-values $P_1,\dots,P_K$, such that the p-BH procedure (Definition~\ref{defi:weighted-pBH}) does not control the FDR.
On the other hand, we show the following result, slightly generalizing a claim by~\cite{armstrong2022false}.

\begin{proposition}
    If $P_1,\dots,P_K$ are (asymptotic) compound p-values, then the p-BY~\citep{benjamini2001control} procedure (that is, the p-BH procedure applied at level $\alpha'=\alpha /  \sum_{k \in \mathcal{K}}1/k$) provides (asymptotic) control of the FDR at $\alpha$ under arbitrary dependence among $P_1,\dots,P_K$.
\end{proposition}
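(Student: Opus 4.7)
The plan is to adapt the classical Benjamini--Yekutieli argument to the compound setting. Write $\alpha' = \alpha/\sum_{j=1}^K (1/j)$ for the threshold used by p-BY, and let $t_j = j\alpha'/K$ for $j=0,1,\dots,K$ (with $t_0=0$) denote the BH rejection thresholds. The first step is the pointwise inequality
$$
\frac{V_k}{R_\cD \lor 1} \leq \sum_{j=1}^K \frac{\id_{\{P_k \in (t_{j-1}, t_j]\}}}{j},
$$
valid for every realization and every $k \in \cK$: if $V_k=1$ then by the BH rule $P_k \leq t_{R_\cD}$, so the unique index $j^*$ with $P_k \in (t_{j^*-1}, t_{j^*}]$ satisfies $j^* \leq R_\cD$ and hence $1/(R_\cD \lor 1) \leq 1/j^*$; if $V_k=0$ the left-hand side vanishes.

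Next I would sum over null indices and apply Abel summation. Fix $\p \in \cP$ and set $b_{k,j} = \p(P_k \leq t_j)$, with $b_{k,0}=0$. Using $\tfrac{1}{j} - \tfrac{1}{j+1} = \tfrac{1}{j(j+1)}$,
$$
\mathrm{FDR}_\cD^\p \leq \sum_{k:\p\in\cP_k}\sum_{j=1}^K \frac{b_{k,j}-b_{k,j-1}}{j} = \sum_{k:\p\in\cP_k}\left(\frac{b_{k,K}}{K} + \sum_{j=1}^{K-1}\frac{b_{k,j}}{j(j+1)}\right).
$$
Swapping the two sums and applying the compound p-variable bound $\sum_{k:\p\in\cP_k} b_{k,j} \leq K t_j = j\alpha'$ term by term yields
$$
\mathrm{FDR}_\cD^\p \leq \alpha' + \alpha'\sum_{j=1}^{K-1}\frac{1}{j+1} = \alpha'\sum_{j=1}^{K}\frac{1}{j} = \alpha,
$$
which is the required control for bona fide compound p-variables.

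For the asymptotic version I would rerun the computation after replacing $\p$ by the subprobability $\p(\cdot \cap A)$, where $A$ is the event supplied by Definition~\ref{defi:approx_compound} for the underlying $(\varepsilon_m,\delta_m)$-approximate compound p-variables, so that $\sum_{k:\p\in\cP_k}\p(P_k\leq t, A) \leq K(1+\varepsilon_m)t$. The same Abel-summation estimate then gives $\E^\p[F_\cD \cdot \id_A /(R_\cD \lor 1)] \leq (1+\varepsilon_m)\alpha$ together with $\p(A)\geq 1-\delta_m$, which is precisely $(\varepsilon_m,\delta_m)$-approximate FDR control at level $\alpha$; letting $\varepsilon_m, \delta_m \to 0$ delivers asymptotic FDR control.

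The one subtlety is that the compound p-variable condition is one-sided (an upper bound on $\sum_{k:\p\in\cP_k} b_{k,j}$), so it cannot be applied directly to the differences $\sum_k(b_{k,j}-b_{k,j-1})$. Abel summation bypasses this by rewriting the telescoping sum as a nonnegatively weighted combination of the level sums $\sum_k b_{k,j}$ themselves, and the weights $1/K$ and $1/(j(j+1))$ accumulate to exactly the harmonic factor $\sum_{j=1}^K 1/j$ that motivates the BY correction.
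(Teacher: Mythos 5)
Your proof is correct and takes a genuinely different route from the paper. The paper's argument is a corollary of its own machinery: it invokes the observation of \citet{wang2022false} that p-BY is literally e-BH applied to e-values obtained by calibrating the $P_k$ with the specific calibrator $p \mapsto T(\alpha/(\ell_K p))/\alpha$, then cites Theorem~\ref{prop:compound-pe1} (calibrated compound p-variables are compound e-variables) and Theorem~\ref{theo:eBH_controls_the_FDR} (e-BH with compound e-variables controls FDR), so the asymptotic extension is automatic via the calibration results of Section~\ref{sec:app-asym-com}. You instead reprove FDR control from scratch by adapting the classical Benjamini--Yekutieli telescoping argument: the pointwise bound $V_k/(R_\cD\lor 1)\le \sum_j \id_{\{P_k\in (t_{j-1},t_j]\}}/j$, followed by summation by parts to rewrite the telescoping sum as a nonnegatively weighted combination of the level sets $\sum_k \p(P_k\le t_j)$, to which the one-sided compound constraint can be applied term by term. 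Your approach is self-contained and more elementary (it does not require knowing that p-BY is a special case of calibration plus e-BH), and the remark at the end correctly identifies the delicate point: the compound condition bounds level sums, not increments, so Abel summation is what makes the argument go through. The paper's approach buys conceptual economy and a trivial asymptotic extension; yours buys transparency and independence from the e-value formalism. One small technicality you should note explicitly: the pointwise inequality requires that $P_k>0$ on the event $\{V_k=1\}$ so that a unique $j^*$ exists; this is guaranteed because the compound p-variable condition forces $\p(P_k=0)=0$ for all null $k$ (let $t\downarrow 0$ in the defining inequality), and in the approximate case the same argument on the event $A$ gives $\p(\{P_k=0\}\cap A)=0$, which suffices.
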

\begin{proof}
The nonasymptotic result was proved by~\cite{armstrong2022false}, but here is an alternate proof that makes it transparent to extend to the asymptotic case.
\cite{wang2022false} proved that the p-BY procedure may be equivalently described by two steps. First, the p-values $P_k$ are calibrated to e-values $E_k$ via a specific choice of calibrator $f:p\mapsto T(\alpha /(\ell_K p))/\alpha$, where
$$
T(x)  = \frac{K}{\lceil  K/x\rceil}\id_{\{x\ge 1\}}\mbox{~~with $T(\infty)=K$}.
$$
Second, the e-BH procedure is applied to $E_1,\dots,E_K$.   
With this, the nonasymptotic claim then follows directly from Theorems~\ref{prop:compound-pe1} and~\ref{theo:eBH_controls_the_FDR}. The asymptotic claim follows because calibrating asymptotic p-values yields asymptotic e-values, which when used with the e-BH procedure provides asymptotic FDR control.
\end{proof}

Furthermore,~\cite{armstrong2022false} proved that the p-BH procedure with compound p-values controls the FDR under weak dependence asymptotics (as in~\citet{storey2004strong}); such a property of the p-BH procedure (with compound p-values) was also anticipated earlier by~\citet{efron2007size}.

\section{Optimal simple separable compound e-values for general utility functions}
\label{sec:optimal_compound_general_utilities}

Here we present a result akin to Theorem~\ref{th:compound-optimal} for more general utility functions $U(\cdot)$ going beyond $U(\cdot)=\log(s)$.  Our notation follows that of Section~\ref{subsec:best_simple_separable}. Fix a utility function $U: [0, \infty] \to [-\infty,\infty]$.  We seek to solve the following optimization problem.
\begin{equation}
\label{eq:optim_compound_general}
    \begin{aligned}
    & \underset{s(\cdot)}{\text{maximize}} 
    & & \frac{1}{K}\sum_{k \in \mathcal{K}} \E^{\mathbb Q_k^{\circ}}[U(s(X_k))] \\
    & \text{subject to}
    & & s: \mathcal{X} \to [0,\infty] \\
    & & & s(X_1),\dots,s(X_K) \text{ are } \text{compound e-values}.
    \end{aligned}
\end{equation}
Note that optimization problem~\eqref{eq:optim_compound} is a special case of optimization problem~\eqref{eq:optim_compound_general} for the choice $U(\cdot) = \log(\cdot)$. The proof of the next theorem is analogous to that of Theorem~\ref{th:compound-optimal} and thus omitted.
\begin{theorem}[Optimal simple separable compound e-values; general utility]
\label{th:compound-optimal-general}
Let $\overline{\p}^{\circ} = \sum_{k \in \mathcal{K}} \p_k^{\circ}/K$ and $\overline{\mathbb Q}^{\circ} = \sum_{k \in \mathcal{K}} \mathbb Q_k^{\circ}/K$.
Suppose that $s^*(\cdot)$ solves the following optimization problem:
\begin{equation}
    \begin{aligned}
    & \underset{s(\cdot)}{\textnormal{maximize}} 
    & & \E^{X' \sim \overline{\mathbb Q}^{\circ}}[\log(s(X'))] \\
    & \textnormal{subject to}
    & & s : \mathcal{X} \to [0,\infty] \\
    & & & \E^{X' \sim \overline{\p}^{\circ}}[s(X')] \leq 1.
\end{aligned}
\label{eq:simple_opti}
\end{equation}
Then $s^*(\cdot)$ also solves optimization problem~\eqref{eq:optim_compound_general} and thus yields the optimal simple separable compound e-values for the utility function $U(\cdot)$.
\end{theorem}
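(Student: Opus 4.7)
The plan is to mimic the proof of Theorem~\ref{th:compound-optimal}: replace the compound e-value constraint by its strongest single-distribution instance and rewrite both sides of~\eqref{eq:optim_compound_general} as expectations under the averaged measures, so the compound problem collapses to the single-distribution optimization~\eqref{eq:simple_opti}.

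First I would handle the constraint. By the setup of Section~\ref{subsec:best_simple_separable}, there exists $\p^{\circ}\in\cP$ whose $k$-th marginal equals $\p_k^{\circ}$ for every $k\in\cK$; under this $\p^{\circ}$ every hypothesis is null, so the compound e-value constraint
$$\sum_{k:\,\p\in\cP_k}\E^{\p}[s(X_k)]\le K$$
instantiated at $\p=\p^{\circ}$ reads $\sum_{k\in\cK}\E^{\p_k^{\circ}}[s(X_k)]\le K$. Since $s\ge 0$, this instance dominates the constraint at any other $\p\in\cP$ (which sums only over a subset of non-negative terms), so the compound constraint is equivalent to the single inequality $\E^{X'\sim\overline{\p}^{\circ}}[s(X')]\le 1$ by definition of $\overline{\p}^{\circ}=\sum_{k\in\cK}\p_k^{\circ}/K$.

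Next I would apply the fundamental theorem of compound decisions at the level of the utility. Linearity of expectation together with the definition of $\overline{\mathbb Q}^{\circ}$ yields, for any $s:\cX\to[0,\infty]$,
$$\frac{1}{K}\sum_{k\in\cK}\E^{\mathbb Q_k^{\circ}}[U(s(X_k))] = \E^{X'\sim\overline{\mathbb Q}^{\circ}}[U(s(X'))],$$
so the objective in~\eqref{eq:optim_compound_general} coincides with the objective used to define $s^*$ in~\eqref{eq:simple_opti} (with $U$ in the role previously played by $\log$, following the generalization recipe advertised in the appendix). Combining the two reductions, problems~\eqref{eq:optim_compound_general} and~\eqref{eq:simple_opti} share the same feasible set and objective when one restricts to simple separable candidates, so any maximizer $s^*$ of the single-distribution problem is automatically a maximizer of the compound problem.

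The only non-routine point is the constraint reduction in the first step: one must verify that the $\p^{\circ}$-instance of the compound inequality is indeed the binding one across all $\p\in\cP$, which relies on $\p^{\circ}\in\cP$ and on $s\ge 0$. Everything else is a rewriting of averaged sums as expectations under uniform mixtures, and existence of the single-variable maximizer $s^*$ is part of the hypothesis, so no further obstacle remains.
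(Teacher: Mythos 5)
Your argument is correct and follows the route the paper intends — it mimics the proof of Theorem~\ref{th:compound-optimal}, which the paper declares analogous and omits. Two observations. First, you correctly read the objective in~\eqref{eq:simple_opti} as involving $U$ rather than the literal $\log$; the $\log$ in the theorem statement is evidently a typo, since the appendix examples (where the optimizer depends on $h$ in $U_h$ and differs from the ODP statistic) would otherwise contradict the conclusion. Second, your constraint reduction is in fact a small sharpening of the argument the paper uses for Theorem~\ref{th:compound-optimal}: rather than solving a relaxed problem and then verifying feasibility of the solution afterward, you observe that for separable $s$ the $\p^{\circ}$-instance of the compound inequality already implies every other instance, so the feasible sets coincide and no posterior check is needed. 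One step you should make explicit: the domination you invoke relies on the fact that for $\p\in\cP_k$ one has $\E^{\p}[s(X_k)] = \E^{\p_k^{\circ}}[s(X_k)]$, because $\cP_k$ pins the $k$-th marginal to $\p_k^{\circ}$ and $s(X_k)$ is $X_k$-measurable. Your phrasing ``sums only over a subset of non-negative terms'' quietly uses that the retained terms are \emph{identical} to those in the $\p^{\circ}$-instance, not merely nonnegative; spelled out, this is exactly what makes the $\p^{\circ}$-instance binding.
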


By Theorem~\ref{th:compound-optimal-general} we can use existing results for problem~\eqref{eq:simple_opti} to solve problem~\eqref{eq:optim_compound_general}. Below we apply our theorem to solutions from \citet{koning2025continuous}, who provides optimal functions for problem~\eqref{eq:simple_opti} under several choices of $U(\cdot)$.
It will be convenient to introduce the notation,
$$
s^{\mathrm{ODP}}(x) = \frac{\sum_{j \in \mathcal{K}} q_j^{\circ}(x)}{\sum_{ j \in \mathcal{K}} p_j^{\circ}(x)}
$$
for the optimal discovery compound e-value function in~\eqref{eq:optimal_evalue}.

\begin{itemize}[leftmargin=*, wide, labelwidth=!, labelindent=0pt]
\item \emph{Generalized means.} Consider $U(s) = U_h(s) = (s^h - 1)/h$ for $h \in (0,1)$.\footnote{Note that as $h \to 0$, $U_h(s) \to \log(s)$, i.e., we recover the logarithmic utility function.} Moreover suppose that $ \E^{X' \sim \overline{\p}^{\circ}}[(s^{\mathrm{ODP}}(x))^{1/(1-h)}]  \in (0, \infty)$. Then a solution to~\eqref{eq:optim_compound_general} is given by:
$$
s^*(x) = s^*_h(x) = \frac{\left(s^{\mathrm{ODP}}(x)\right)^{1/(1-h)}}{\E^{X'\sim \overline{\p}^{\circ}}\left[\left(s^{\mathrm{ODP}}(x)\right)^{1/(1-h)}\right]}.
$$
\item \emph{Generalized means with clipping.} Fix $M \in (0,\infty)$ and $h \in (0,1)$ and define the utility $U(s)=U_{h,M}(s) = U_h(s \wedge M)$ with $U_h(\cdot)$ defined in the above bullet point. This utility function clips the argument at $M$, so that values exceeding $M$ provide no additional utility beyond what $M$ provides.  Then, provided that $\p_k( s(X_k) > 0) >0$ for at least one $k \in \mathcal{K}$,  a solution to~\eqref{eq:optim_compound_general} is given by:
$$
s^*(x) = s^*_{h,M}(x) = \frac{\left(s^{\mathrm{ODP}}(x)\right)^{1/(1-h)}\wedge M}{\E^{X'\sim \overline{\p}^{\circ}}\left[\left(s^{\mathrm{ODP}}(x)\right)^{1/(1-h)}\wedge M\right]}.
$$
\end{itemize}

\section{Proofs for Section~\ref{sec:further_defi}}

\subsection{Proof of Theorem~\ref{prop:compound-pe1}}

\begin{proof}
Fix an atomless $\p $ and it suffices to verify the statement for this $\p$. Let $K_0 = |\mathcal{N}(\p)|$ and take 
$L \sim \mathrm{Unif}(\mathcal{N}(\p))$ and 
$U\sim \mathrm{Unif}[0,1]$, 
mutually independent and independent of everything else. Then $$\tilde P:= P_{L} \id_{\{U\le K_0/K\}} +  \id_{\{ U>K_0/K\} }$$ is a bona fide p-value because for $t\in (0,1)$,
$$ \p( \tilde P  \leq t) =  \p(P_{L} \leq t) \p( U\leq K_0/K)  = \frac{K_0}{K} \frac{1}{K_0}\sum_{k: \p \in \mathcal{P}_k} \p(P_k \leq  t) \leq t.$$
Therefore, 
$$ \frac{1}{K} \sum_{k: \p \in \mathcal{P}_k }\E^{\p}\left[   h( P_k) \right] = \frac{K_0}{K} 
 \E^{\p}\left[   h( P_{L} ) \right] 
 \le   \E^{\p}\left[   h( P_L) \id_{\{U\le K_0/K\}}+ h(1)  \id_{\{ U>K_0/K\} }  \right]  =  \E^{\p}\left[ h(\tilde P) \right] \leq 1,$$
where the last inequality holds since $h$ is a p-to-e calibrator. 
\end{proof}

\section{Proofs for Section~\ref{sec:app-asym-com}}

\subsection{Proof of Proposition~\ref{prop:alter-p}}

\begin{proof}
When $\p$ is atomless, for any random variable $X$, there exists 
  an event $A$ with $\p(A)=1-\delta_\p$ such that $\{X < t \} \subseteq  A\subseteq  \{X\le  t \}$ for some $t \in \R $; see Lemma A.3 of \cite{wang2021axiomatic}.
  This fact will be used in the proof below. 
  If $\delta_\p=1$ there is nothing to prove, so we assume $\delta_\p<1$ below.

(P2) $\Rightarrow$ (P1): It suffices to notice  $$\p(P\le t) \le \p(P\le t,A)+\p(A^c) \le (1+\epsilon_\p) t +\delta_\p$$ for $\p\in \cP$ and $t\in [0,1]$. 

(P1) $\Rightarrow$ (P2) in case each $\p$ is atomless:  
Let $A$ be an event with $\p(A)=1-\delta_\p$ such that $\{P > s \} \subseteq  A\subseteq  \{P\ge  s \}$ for some $s \in [0,1]$.   
Note that $\p(P\le t,A^c)=\min\{\p(P\le t),\p(A^c)\}$.
Therefore, 
$$\p(P\le t, A) = 
( \p(P\le t) - \p(A^c))_+ \le ((1+\varepsilon_\p )t +\delta_\p-\delta_\p)_+ = (1+\varepsilon_\p)t$$
for all $t\in (0,1)$,
as desired.

(E2) $\Rightarrow$ (E1): It suffices to notice  $$\E^\p[E\wedge t] =
\E^\p[(E\wedge t) \id_A] + 
\E^\p[(E\wedge t) \id_{A^c}]
\le \E^\p[E \id_A] + 
\E^\p[t \id_{A^c}]
\le 
 (1+\varepsilon_\p ) +\delta_\p t $$ for $\p\in \cP$ and $t\in \R_+$. 

(E1) $\Rightarrow$ (E2) in case each $\p$ is atomless:  
Let $A$ be an event with $\p(A)=1-\delta_\p$ such that $\{E < t \} \subseteq  A\subseteq  \{E\le  t \}$ for some $t \in \R_+$. 
Note that $E\ge t$ on $A^c$.
We have, 
$$
\E^\p[E \id_A]   
= \E^\p[(E\wedge t)\id_A]
=  \E^\p[ E\wedge t ]
-\E^\p[ (E\wedge t) \id_{A^c}]
\leq 1+\epsilon_\p + t \delta_\p
-\E^\p[ t\id_{A^c}]
=  1+\epsilon_\p ,
$$ 
as desired. 
\end{proof}

\subsection{Proof of Theorem~\ref{th:calibration}}

\begin{proof}[Proof.]
Fix any $\p\in \cP$.   If $\delta_\p=1$ there is nothing to prove, so we assume $\delta_\p<1$ below. 
For the first statement, 
let $P$ be an $(\varepsilon, \delta)$-approximate  p-variable   for $\p$ and $f$ be a (p-to-e) calibrator and $b=(1+\epsilon_\p)/(1-\delta_\p) \ge 1$. 
Let $F$ be a cdf on $[0,1/b]$ given by
$F(s)=(1+\epsilon_\p)s + \delta_\p$ for $s\in [0,1/b]$.
We have $\p(P\le s   ) \le  F(s) $ for $s \in [0,1]$ and $F(0)=\delta_\p$.
Hence, noting that  the calibrator  $f$ is decreasing, we have, for all $t\in \R_+$, 
\begin{align*}
    \E^\p[f(P)\wedge t ] & 
    \le  \int_0^{1/b}( f(s)\wedge t) \d F(s)     
    \\ &\le  (1+\epsilon_\p) \int_0^{1/b}   f(s)\d s + t F(0) 
\\&\le  (1+\epsilon_\p) \int_0^1  f(s)\d s  +  \delta_\p t \le  1+\epsilon_\p + \delta_\p t. 
\end{align*}
For the second statement, define $P = (1/E)\wedge 1$. 
We have, for  $t\in (0,1)$,  
\begin{align*}
\p(P \leq t) 
&=\p(E\ge 1/t) 
\\ &=  \p(E\wedge(1/t) \ge 1/t) 
\\ &\le  t \E^\p[E\wedge(1/t) ] \leq t \left( 1+\epsilon_\p   +    \frac{\delta_\p}{t} \right)= (1+\epsilon_\p)  t +  \delta_\p,
\end{align*} 
as desired.
\end{proof}

\begin{remark}
Following a similar proof, the statements in Theorem~\ref{th:calibration} hold true also if we use the alternative formulations (P2) and (E2) in Proposition~\ref{prop:alter-p} for both approximate p-variables and approximate e-variables.
 \end{remark}

\subsection{Proof of Proposition~\ref{prop:varepsilon_to_delta}}

\begin{proof}
Let $P$ be an ($\varepsilon, \delta$)-approximate p-variable. Fix $\p \in \cP$. Then, for any $t \in (0, (1-\delta_{\p})/(1+\varepsilon_{\p})),$ we have that:

$$ \p(P \leq t) \leq (1+\varepsilon_{\p})t + \delta \leq t + \varepsilon_{\p}\frac{1-\delta_{\p}}{1+\varepsilon_{\p}} + \delta_{\p} = t + \delta'_{\p}.$$
Meanwhile the inequality $\p(P \leq t) \leq t + \delta'_{\p}$ is clearly true for $t \geq (1-\delta_{\p})/(1+\varepsilon_{\p})$ (since the right hand side is $\geq 1$). Thus $P$ is an $(0, \delta')$-approximate p-variable for $\cP$.

The argument for the approximate e-variable is analogous (noting that only $t\ge 1$ needs to be checked in the condition for approximate e-variables) and omitted. 
\end{proof}

\subsection{Proof of Proposition~\ref{prop:asym-e}}

\begin{proof}
We prove this result for the non-uniform case. The other argument is analogous. First suppose that $(E^{(n)})_{n \in \N}$ is a sequence of strongly   asymptotic e-variables for $\mathcal{P}$. Fix $\p \in \cP$. Then:
$$ \E^\p[E^{(n)}] \leq 1+ \varepsilon_n(\p) \to 1 \text{ as } n \to \infty,$$
where the first inequality holds for large enough $n$.
For the other direction, note that any nonnegative extended random variable $E^{(n)}$ is an $(\varepsilon_n,0)$-approximate e-variable for $\mathcal{P}$ with the choice $\varepsilon_n(\p) =  ( \E^\p[E^{(n)}]-1)_+$ as long as $\E^\p[E^{(n)}]$ is finite.
If  $\limsup_{n \to \infty} \E^\p[E^{(n)}]\le 1$, this implies that $\E^\p[E^{(n)}] < \infty$ for $n$ large enough and the above $\varepsilon_n(\p)$ satisfies $\lim_{n \to \infty} \varepsilon_n(\p) = 0$, that is, $(E^{(n)})_{n \in \N}$ is a sequence of strongly   asymptotic e-variables.
\end{proof}

\subsection{Proof of Proposition~\ref{proposition:uniform_integrability_strong_e}}

\begin{proof}
Fix $\p\in \cP$. Note that for $t\in \R_+$,
$$
\E^{\p}[E^{(n)}] \leq \E^{\p}[E^{(n)}\wedge t] + \E^{\p}[E^{(n)}\id_{\{E^{(n)} \geq t\}}].
$$
By uniform integrability, for any $\eta>0$ there exists $T>0$ such that 
$$\sup_{n \in \mathbb N}\E^{\p}\left[E^{(n)}\id_{\{E^{(n)} \geq T\}}\right]  \leq \eta.$$
Thus,
$$
\E^{\p}[E^{(n)}] \leq 1 + \varepsilon_n(\p) + \delta_n(\p) T + \eta,
$$
and hence,
$$ \limsup_{n \to \infty} \E^{\p}[E^{(n)}]  \leq 1+\eta.$$
Since $\eta>0$ is arbitrary, we conclude by Proposition~\ref{prop:asym-e}.
\end{proof}

\subsection{Proof of Proposition~\ref{prop:converge-dist}}

\begin{proof}
Fix $\p\in \cP$. 
Suppose that $(P^{(n)})_{n\in\N}$ converges in distribution to a p-variable $P$. Note that 
$ \limsup_{n\to\infty} \p(P_n\le t) \le \p(P\le t) \le t$ for each $t\in [0,1]$.
This implies 
$ \lim_{n\to\infty} \p(P_n\le t)\vee t =t$  for each $t\in [0,1]$. 
Since point-wise convergence of increasing functions to a continuous function on a compact interval implies uniform convergence,  we have 
$$\lim_{n\to\infty} \sup_{t\in [0,1]}(\p(P_n\le t)\vee t-t) = 0\implies \lim_{n\to\infty} \sup_{t\in [0,1]}(\p(P_n\le t)-t)_+ = 0. $$
This shows that $(P^{(n)})_{n\in\N}$ 
 is a sequence of asymptotic p-variables.

Next we show the statement on asymptotic e-variables. For $\delta \in [0,1]$,
and a nonnegative random variable $E$, 
consider the function
$$f_\delta(E)=\sup_{t\in \R_+} (\E^\p [E\wedge t]  -\delta  t) -1. $$ 
By definition, if $\epsilon :=
 f _\delta (E)
\vee 0 $ is finite, then $E$ is an $(\epsilon ,\delta )$-approximate e-variable for $\{\p\}$. 
Let $E$ be the limit of the nonnegative sequence $(E^{(n)})_{n\in\N}$ in distribution, and take $m\in \R_+$ be such that $\p(E\ge m)<\delta$. 
Since $E^{(n)}\to E$ in distribution, there exists $N\in \N$ such that $\p(E^{(n)}\ge m)<\delta$ for $n>N$ by the portmanteau theorem.
It follows that, for all $t\in \R_+$, 
$$\E^\p[E^{(n)}\wedge t] - \E^\p[E^{(n)}\wedge m] - \delta t \le  
\p(E^{(n)} \ge m) (t-m)_+ - \delta t 
\le \delta (t-m)_+-\delta t
\le 0,
$$ 
and hence  $f_\delta(E^{(n)})\le \E^\p[E^{(n)}\wedge m]-1\to \E^\p[E\wedge m]-1\le 0$ as $n\to\infty$. 

This implies, in particular, that for any $\delta>0$,  
there exists $N'\in \N$ such that for $n>N'$, $E^{(n)}$ is an $(\delta,\delta)$-approximate e-variable. 
Since $\p\in \cP$ is arbitrary, this shows that $(E^{(n)})_{n\in\N}$ is a sequence of
 asymptotic e-variables for $\cP$.  
\end{proof}

\subsection{Proof of Example~\ref{example:weak_convergence_to_asymptotic}}

\begin{proof}
By the continuous mapping theorem, $P^{(n)}$ converges weakly to the uniform distribution, and hence item (i) directly follows from Proposition~\ref{prop:converge-dist}. 
 
For (ii) we may argue as follows. 
If  $h$ would have been continuous, then the result would follow 
from the continuous mapping theorem and  Proposition~\ref{prop:converge-dist}.
For upper semi-continuity, 
note that $\limsup_{n\to\infty} \p(E^{(n)}\ge  x) \le \p(h(Z)\ge x)$ for all $x\in \R$ by convergence in distribution. This gives 
$\limsup_{n\to \infty} f_\delta(E^{(n)}) \le f_\delta(h(Z))\le 0 $ using the notation and argument in the proof of Proposition~\ref{prop:converge-dist}. 
The last statement follows from 
Proposition~\ref{proposition:uniform_integrability_strong_e}.
When $h$ is increasing or decreasing, replacing it with its upper semi-continuous version, which does not change $\E[h(Z)]$, yields the desired statements. 
\end{proof}

\subsection{Proof of Proposition~\ref{proposition:approximate_evalues_to_approximate_compound}}

\begin{proof}
We carry out the argument only for compound e-variables (the argument for p-variables is analogous).
Take any $\p \in \cP$ and let $A_k$ be the event provided by Proposition~\ref{prop:alter-p} for the $k$-th ($\varepsilon_k, \delta_k)$-approximate e-variable. Then, letting $A := \bigcap_{k: \p \in \cP_k} A_k$, it follows that,
$$\p(A^c) = \p\left(\bigcup_{k:\p \in \cP_k} A_k^c\right)\leq \sum_{k:\p \in \cP_k} \delta_k(\p) = \delta(\p).$$ 
We get:
$$ \sum_{k: \p \in \cP_k}  \E^{\p}[E_k\id_{A}] \leq \sum_{k: \p \in \cP_k}  \E^{\p}[E_k\id_{A_k}] \leq \sum_{k: \p \in \cP_k} (1+\varepsilon_k(\p)) \leq K(1 + \varepsilon(\p)).$$
This proves the desired statement for compound e-variables.
\end{proof}

\subsection{Proof of Proposition~\ref{prop:equiv_approx_compound_e}}
\begin{proof}
Let us explicitly define the two statements under consideration:
\begin{enumerate}
\item[(E1)]  It holds that
$$
\E^\p \left[\left(\sum_{k: \p \in \cP_k}E_k \right)\wedge t \right] \le  K(1+\varepsilon_\p)  +\delta_\p t  \qquad \mbox{for all $t\in \R_+$ and all $\p\in \mathcal P$};
$$
\item[(E2)]  $E_1,\dots,E_K$ are $(\varepsilon, \delta)$-approximate compound e-variables for $(\cP_1,\dots,\cP_K)$ under $\cP$.
\end{enumerate}
We seek to show the equivalence of (E1) and (E2).

We start with (E1). By direct manipulation we see that (E1) is equivalent to the following statement:
$$
\E^\p \left[\left(\frac{1}{K}\sum_{k: \p \in \cP_k}E_k \right)\wedge t \right] \le  1+\varepsilon_\p  +\delta_\p t  \qquad \mbox{for all $t\in \R_+$ and all $\p\in \mathcal P$}.
$$
Now fix $\p \in \cP$. According to the above statement,  
$$E:=\frac{1}{K}\sum_{k: \p \in \cP_k}E_k \qquad \mbox{is an $(\varepsilon,\delta)$-approximate e-variable for $\{\p\}$}.$$
By Proposition~\ref{prop:alter-p} there exists an event $A$ with $\p(A) \geq 1-\delta$ such that $\E^{\p}[E\id_{A}] \leq 1+\varepsilon_\p$, that is,
$$ \sum_{k: \p \in \cP_k}  \E^{\p}[E_k\id_{A}] \leq K(1+\varepsilon_\p).
$$
Since $\p \in \cP$ was arbitrary, the above establishes the implication (E1) $\Rightarrow$ (E2). 

For the other direction,
fix an arbitrary $\p\in \cP$. Suppose (E2) holds, that is, $\sum_{k: \p \in \cP_k}  \E^{\p}[E_k\id_{A}] \leq K(1+\varepsilon_\p)$ for some event $A$ with $\p(A) \geq 1-\delta$. Define $E := (1/K) \sum_{k: \p \in \cP_k}E_k$, and  
by applying the direction (E2)$\Rightarrow $(E1) in Proposition~\ref{prop:alter-p}  with the hypothesis $\{\p\}$, we obtain the condition in (E1), completing the proof. 
\end{proof}

\subsection{Proof of Proposition~\ref{th:calibration_approximate_compound}}

\begin{proof}
For the first statement, 
let $P_1,\dotsc,P_K$ be $(\varepsilon, \delta)$-approximate compound p-variables. Fix any $\p\in \cP$ and let $h$ be a (p-to-e) calibrator. Let $A$ be an event with $\p(A) \ge  1-\delta_\p$ such that
$\sum_{k: \p \in \cP_k} \p(P\le t,A )\le K(1+\epsilon_\p) t$ for $t \in (0,1)$, 
and let $b=(1+\epsilon_\p)/\p(A)-1\ge 0.$
Let $\p_A(\cdot) = \p(\cdot \mid A)$. We have
$$\frac{1}{K}\sum_{k: \p \in \cP_k} \p_A(P_k\le t) \le (1+b)t \mbox{~~for $t \in (0,1)$}.$$
Define $U$,$L$, $P_L$, and $\tilde{P}$ as in the proof of Theorem~\ref{prop:compound-pe1}. Then,  $\tilde{P}$ satisfies,
$$
\p_A( \tilde P  \leq t) =  \p_A(P_{L} \leq t) \p_A( U\leq K_0/K)  = \frac{K_0}{K} \frac{1}{K_0}\sum_{k: \p \in \mathcal{P}_k} \p_A(P_k \leq  t) \leq (1+b)t.
$$
This means that $\tilde P$ is a $(b, 0)$-approximate p-variable for $\{\p_A\}$. By Theorem~\ref{th:calibration}, $h(\tilde{P})$ is a $(b,0)$-approximate e-variable for $\{\p_A\}$. Thus,
$$\frac{1}{K} \sum_{k: \p \in \mathcal{P}_k }\E^{\p_A}\left[   h( P_k) \right] = \frac{K_0}{K} 
 \E^{\p_A}\left[   h( P_{L} ) \right] 
 \le   \E^{\p_A}\left[ h(\tilde P) \right] \leq 1 + b =\frac{1+\epsilon_\p}{\p(A)}.$$
It follows that $\sum_{k: \p \in \mathcal{P}_k }\E^{\p}[h( P_k) \id_A] \leq K(1+\epsilon_\p)$, and so, $h(P_1),\dots,h(P_K)$ are $(\varepsilon,\delta)$-approximate compound e-variables.

The second claim follows by Markov's inequality.
Fix $\p \in \cP$ and let $A$ be the event given by the definition of approximate compound p-variables. Let $P_k = (1/E_k)\wedge 1$. Then, for any $t\in (0,1)$, 
\begin{align*}
 \sum_{k: \p \in \cP_k} \p(P_k \leq t,A) 
 &=  \sum_{k: \p \in \cP_k}  \p(E_k\ge 1/t,A) 
 \\ &=  \sum_{k: \p \in \cP_k}  \p(E_k\id_{A}  \ge 1/t) 
 \\ &\le   \sum_{k: \p \in \cP_k}  \E^\p[E_k\id_{A}] t \leq  K(1+\epsilon_\p)  t,
\end{align*} 
as desired.
\end{proof}

\section{Proofs for Section~\ref{sec:multiple_testing}}
\label{sec:ebh_admissible_proof}
\subsection{Proof of Theorem~\ref{theo:universality_eBH}}

\begin{proof}[Proof (continued).]
To show the last statement, consider an FDR procedure $\mathcal D$ at level $\alpha$ and take the compound e-values $E_1,\dots,E_K$ from  \eqref{eq:universal-e}. Define 
    $$
K^*:= \sup_{\p \in \mathcal{P}} \sum_{k: \p \in \mathcal{P}_k} \mathbb E^{\p}[E_k].
$$
If $K^*$ is equal to $K$, then there is nothing to show as $E_1,\dots,E_K$ are tight compound e-values. Otherwise, $K^*<K$. The case $K^*=0$ means one never rejects any hypotheses, for which choosing $E_1'=\dots=E_K'=1$ would suffice as tight compound e-values that produce $\mathcal D$ via e-BH.
For $K^*>0$, we let 
$$
E_k'=\frac{K}{K^*}E_k \mbox{~~~for $k\in \mathcal K$},
$$
and apply the e-BH procedure to $E'_1,\dots,E'_K$ (which are tight compound e-values). 
This new procedure controls FDR at level $\alpha$ and  produces at least as many discoveries as $\mathcal D$.
Since $\mathcal D$ is admissible,
this procedure must coincide with $\mathcal D$, and hence $\mathcal D$ is  e-BH applied to the tight compound e-values  $E'_1,\dots,E'_K$.
\end{proof}

\section{Proofs for Section~\ref{sec:sequence_model}}

\subsection{Proof of Proposition~\ref{prop:compound_separable}}

\begin{proof}
First, by definition of simple and separable, there exists a function $s$ such that $E_k = s(X_k)$. The supremum in the definition of tightness must be attained when $\mathcal{N}(\p) = K$, and so:
$$\sum_{k \in \mathcal{K}} \mathbb E^{\p_k^\circ}[s(X_k)]=K.$$
Now define $\overline{p}^\circ$ as the $\dd \nu$ density of $\overline{\p}^\circ$, where the latter object is defined as in~Theorem~\ref{th:compound-optimal}. Then the above may be written as:
$$\int s(x) \overline{p}^\circ(x) \dd \nu(x) = 1.$$
Thus, if we define $q_k^\circ(x) = s(x)\overline{p}^\circ(x)$ for all $k \in \mathcal{K}$, we find that $q_k^\circ$ is indeed a $\dd \nu$-density and $s(x)$ may be represented as in~\eqref{eq:optimal_evalue}.
\end{proof}

\subsection{Proof of Proposition~\ref{prop:lui}}

\begin{proof}
It suffices to show that $E_k^{\mathrm{LUI}}$ satisfies property (E2) of Proposition~\ref{prop:alter-p}. Fix $\p \in \cP_k$ and the $\psi_k$ such that $\dd \p_k/ \dd \nu = p^\circ_{\psi_k}$.  Note that $\psi_k$ is a function of $p_k$, but we keep this implicit in our notation. Now let $A$ be the event that \smash{$\psi_k \in \widehat{\Psi}_k(\delta)$}. 
Since \smash{$\widehat{\Psi}_k(\delta)$} is a $(1-\delta)$-confidence set for $\psi_k$, we have that $\p[A] \geq 1-\delta$. Moreover:
$$ 
 \E^{\p}\left[ E_k^{\mathrm{LUI}}\id_{A}\right] \leq \E^{\p}\left[\frac{q^\circ(X_k)}{ p^\circ_{\psi_k}(X_k)}\id_{A} \right] 
 \leq \E^{\p}\left[\frac{q^\circ(X_k)}{ p^\circ_{\psi_k}(X_k)} \right] = 1.
$$
Thus, $E_k^{\mathrm{LUI}}$ is $(0,\delta)$-approximate e-variable for $\cP_k$.
\end{proof}

\subsection{Proof of Proposition~\ref{prop:cui}}

\begin{proof}
Fix $\p \in \cP$ and so also $G(\boldpsi)$. Let $A$ be the event that $G(\boldpsi) \in \widehat{\mathcal{G}}$ which has probability at least $1-\delta$. On the event $A$, it holds by construction for all $k \in \mathcal{K}$ that
$$
E_k^{\mathrm{CUI}} \leq \frac{q^\circ(X_k)}{  \int p^\circ_{\psi}(X_k) \dd G(\boldpsi)(\psi)} =  \frac{K q^{\circ}(X_k)}{\sum_{ j \in \mathcal{K}} p_{\psi_j}^{\circ}(X_k)}.
$$
The objects on the right hand side are the optimal simple separable compound e-variables of Theorem~\ref{th:compound-optimal}. We can then conclude the proof similarly to the proof of Proposition~\ref{prop:lui}.
\end{proof}

\section{Proofs for Section~\ref{sec:ttest}}

\subsection{Proof of Theorem~\ref{theorem:eb_asymptotic}}

\begin{proof}

Fix $\p \in \cP^{\mathrm{N}}$. Also take $B>0$. We will split our analysis according to whether $\{\Norm{X_k} \leq B\}$ or $\{\Norm{X_k} > B\}$. In particular, we define:
$$
\begin{aligned}
\mathrm{I}_K(x; B) &:=  |E(x; \widehat{G}, \widehat{H}\otimes \widehat{G}) -  E(x; G(\boldsigma^2), H^*\otimes G(\boldsigma^2))|\id_{\{\Norm{x} \leq B\}},\\ 
\mathrm{II}_K(x;B) &:= E(x; \widehat{G}, \widehat{H}\otimes \widehat{G})\id_{\{\Norm{x} > B\}},
\end{aligned}
$$
where the dependence on $K$ is via $\widehat{G}$, $\widehat{H}$ and $G(\boldsigma^2)$.
Then for $k \in \mathcal{K}$
\begin{equation}
\begin{aligned}
E_k = E(X_k; \widehat{G}, \widehat{H}\otimes \widehat{G})  
 \,\leq\,  \mathrm{I}_K(X_k; B)\, +\, \mathrm{II}_K(X_k; B) \,+\, E(X_k; G(\boldsigma^2), H^*\otimes G(\boldsigma^2)).
\end{aligned}
\label{eq:proof_decomposition}
\end{equation}
Our strategy is as follows. We first show that for any fixed $B$
\begin{equation}
\max_{k \in \mathcal{K}: \p \in \cP_k} \mathbb E^{\p}[\mathrm{I}_K(X_k; B)] = o(1) \; \text{ as }\; K \to \infty. 
\label{eq:proof_I}
\end{equation}
Second, we show that
\begin{equation}
\sup_{K \in \mathbb N}\max_{k \in \mathcal{K}: \p \in \cP_k} \mathbb E^{\p}[\mathrm{II}_K(X_k; B)] = o(1) \; \text{ as }\; B \to \infty.
\label{eq:proof_II}
\end{equation}
Third, as we already noted in the main text, 
$E(X_k; G(\boldsigma^2), H^*\otimes G(\boldsigma^2))$ for $k \in \mathcal{K}$ are compound e-variables, i.e.,
\begin{equation}
\frac{1}{K} \sum_{k: \p \in \cP_k} \mathbb E^{\p}[E(X_k; G(\boldsigma^2), H^*\otimes 
G(\boldsigma^2))] \leq 1. 
\label{eq:proof_compound}
\end{equation}
By~\eqref{eq:proof_decomposition},~\eqref{eq:proof_I},~\eqref{eq:proof_II} and~\eqref{eq:proof_compound} and by first taking $K \to \infty$ and then $B \to \infty$, we find that:
$$
\limsup_{K \to \infty} \frac{1}{K} \sum_{k: \p \in \cP_k} E^{\p}[E_k] \leq 1,
$$
that is, $E_1,\ldots,E_K$ are strongly asymptotic e-variables. To conclude our proof, we still have to show~\eqref{eq:proof_I} and~\eqref{eq:proof_II}.

\paragraph{Proof of~\eqref{eq:proof_II}.}

First let $\bar{v} > 0$ be such that $|\lambda|/\sigma \leq \bar{v}$ uniformly over all 
$\lambda \in [\ubar{\lambda}, \bar{\lambda}]$ and $\sigma \in [\ubar{\sigma},\bar{\sigma}]$. 
Also let $\mathbf{1} \in \R^n$ be the vector of ones. Then, for any $\lambda \in [\ubar{\lambda}, \bar{\lambda}]$ and $\sigma \in [\ubar{\sigma},\bar{\sigma}]$,
$$p_{\lambda, \sigma^2}(X_k) =  p_{\sigma^2}(X_k) \exp\left(-\frac{n\lambda^2}{2}\right)\exp\left( \frac{\lambda \mathbf{1}^\intercal X_k}{\sigma} \right) \leq 2 p_{0, \sigma^2}(X_k) \cosh\left( \bar{v}\mathbf{1}^\intercal X_k \right).
$$
Singe $\widehat{G}$ is supported on $[\ubar{\sigma}^2, \bar{\sigma}^2]$ and $\widehat{H}$ on $[\ubar{\lambda},\bar{\lambda}]$, 
the above implies that
$$
\begin{aligned}
E(X_k; \widehat{G}, \widehat{H}\otimes \widehat{G}) &= \frac{\int p_{\lambda, \sigma^2}(X_k)  \dd (\widehat{H}\otimes \widehat{G})(\lambda, \sigma^2)}{\int p_{\sigma^2}(X_k) \dd \widehat{G}(\sigma^2)} \\ 
&\leq 2\cosh\left( \bar{v}\mathbf{1}^\intercal X_k \right) \frac{\int p_{\sigma^2}(X_k)  \dd (\widehat{H}\otimes \widehat{G})(\lambda, \sigma^2)}{\int p_{\sigma^2}(X_k) \dd \widehat{G}(\sigma^2)}  \\ 
&=  2\cosh\left( \bar{v}\mathbf{1}^\intercal X_k \right).
\end{aligned}
$$
Thus,
$$
\begin{aligned}
\sup_{K \in \mathbb N}\max_{k \in \mathcal{K}: \p \in \cP_k} \mathbb E^{\p}[\mathrm{II}_K(X_k; B)] &\leq 2\sup_{k \geq 1: \p \in \cP_k}\mathbb E^{\p}\left[\cosh\left( \bar{v}\mathbf{1}^\intercal X_k \right)  \id_{\{\Norm{X_k} > B\}}\right] \\ 
& \leq  2\sup_{k \geq 1: \p \in \cP_k} \left\{\p( \id_{\{\Norm{X_k} > B\}})\right\}^{1/2}  \sup_{k \geq 1: \p \in \cP_k}  \left\{E^{\p}\left[\cosh^2\left( \bar{v}\mathbf{1}^\intercal X_k \right)\right]\right\}^{1/2}. 
\end{aligned}
$$
The first term converges to $0$ as $B \to \infty$ because the class of distributions $\{\mathrm{N}(0, \sigma^2 I)\,:\, \sigma^2 \in [\ubar{\sigma}^2, \bar{\sigma}^2]\}$ is tight. Meanwhile, the second term remains bounded by a direct calculation of the moment generating function of a Gaussian, and using again the fact that for all $k$, $\sigma_k^2 \in [\ubar{\sigma}^2, \bar{\sigma}^2]$.

\paragraph{Proof of~\eqref{eq:proof_I}.}
We start with an argument inspired by~\citet{greenshtein2022generalized}.
Write $G_K = G(\boldsigma^2) \in \mathcal{G}$, and note that the class of distributions $\mathcal{G}$ is tight. Take any subsequence $K_{\ell}$, then there exists a further subsequence $K_{\ell_{m}}$ and a distribution $G^*$ such that \smash{$G_{K_{\ell_m}} \cd G^*$}. By a standard subsequence argument, it suffices to show that  
$$\max_{k \in \{1,\ldots,K_{\ell_m}\}: \p \in \cP_k} \mathbb E^{\p}[\mathrm{I}_{K_{\ell_m}}(X_k; B)] = o(1) \; \text{ as }\; m \to \infty.$$ 
In what follows, to streamline notation, we assume (without loss of generality, via the subsequence argument above) that there exists $G^*$ such that
\begin{equation}
\label{eq:empirical_dbn_converges}
G_K \cd G^*\, \text{ as }\, K \to \infty.
\end{equation}
Then also $H^* \otimes G_K \cd H^* \otimes G^*$. We will first show that
\begin{equation}
\sup_{x:\Norm{x} \leq B} |E(x; G^*, H^* \otimes G^*) -  E(x; G(\boldsigma^2), H^*\otimes G(\boldsigma^2))| = o(1) \; \text{ as } \; K \to \infty.
\label{eq:evalue_g_star_g_sigma}
\end{equation}
To this end, introduce the notation
$$
N(x; Q) := \int p_{\lambda, \sigma^2}(x)  \dd Q(\lambda, \sigma^2),\;\;\; D(x;G):= \int p_{\sigma^2}(x) \dd G(\sigma^2),
$$
so that $E(x; G,Q) = N(x; Q)/D(x;G)$. Note that the collection of functions parameterized by $x$,
\begin{equation}
\left\{ [\ubar{\lambda},\bar{\lambda}]\times [\ubar{\sigma}^2, \bar{\sigma}^2] \to \mathbb R_+, \;\;\; \ (\lambda, \sigma^2) \mapsto p_{\lambda,\sigma^2}(x)\;\;\;:\;\;\; \Norm{x} \leq B \right\},
\label{eq:fct_class_1}
\end{equation}
is uniformly bounded and equicontinuous in $(\lambda, \sigma^2)$.  Fix $\varepsilon >0$. Then, by Arzelà–Ascoli, for any $\varepsilon >0$, there exists a finite set of continuous bounded functions $\mathcal{M}_N(\varepsilon)$ that provide a supremum-norm cover of the functions in~\eqref{eq:fct_class_1}. That is, for any $x$ such that $\Norm{x} \leq B$, there exists $m_N \in \mathcal{M}_N(\varepsilon)$ satisfying
$$
\sup_{ (\lambda,\sigma^2) \in [\ubar{\lambda},\bar{\lambda}]\times [\ubar{\sigma}^2, \bar{\sigma}^2]} |p_{\lambda, \sigma^2}(x) - m_N(\lambda,\sigma^2)| \leq \varepsilon.
$$
This also implies that for $Q,Q'$ supported on $[\ubar{\lambda},\bar{\lambda}]\times [\ubar{\sigma}^2, \bar{\sigma}^2]$, we have that,
$$
\left|  N(x; Q) - N(x;Q') \right| \;\leq \; 2\varepsilon \,+\, \left|  \int m_N(\lambda,\sigma^2)\dd Q(\lambda, \sigma^2) -  \int m_N(\lambda,\sigma^2)\dd Q'(\lambda, \sigma^2)    \right|.
$$
Repeating this argument for all $x$ in the above ball, we find that
$$
\sup_{ x: \Norm{x} \leq B}\left|  N(x; Q) - N(x;Q') \right|  \;\leq \; 2\varepsilon \,+\,  \max_{m_N \in \mathcal{M}_N(\varepsilon)} \left|  \int m_N(\lambda,\sigma^2)\dd Q(\lambda, \sigma^2) -  \int m_N(\lambda,\sigma^2)\dd Q'(\lambda, \sigma^2)    \right|.
$$
Arguing analogously for the collection of functions
\begin{equation}
\left\{  [\ubar{\sigma}^2, \bar{\sigma}^2] \to \mathbb R_+, \;\;\; \  \sigma^2 \mapsto p_{\sigma^2}(x)\;\;\;:\;\;\; \Norm{x} \leq B \right\},
\label{eq:fct_class_2}
\end{equation}
we find that for any $\varepsilon >0$, there exists a finite collection of functions $\mathcal{N}_D(\varepsilon)$ such that for any $G,G'$ supported on $[\ubar{\sigma}^2, \bar{\sigma}^2]$ it holds that,
$$
\sup_{ x: \Norm{x} \leq B}\left|  D(x; G) - D(x;G') \right|  \;\leq \; 2\varepsilon \,+\,  \max_{m_D \in \mathcal{M}_D(\varepsilon)} \left|  \int m_D(\sigma^2)\dd G(\sigma^2) -  \int m_D(\sigma^2)\dd G'(\sigma^2)    \right|.
$$
Also let us define,
$$D(B):= \inf \left\{  p_{\sigma^2}(x)\,:\, \Norm{x} \leq B,\, \sigma^2 \in [\ubar{\sigma}^2, \bar{\sigma}^2] \right\},$$
where we make only the dependence on $B$ explicit. By a continuity and compactness argument, $D(B) \in (0,\infty)$. Then, given $G,G',Q,Q'$ we have that:
$$
\begin{aligned}
&\sup_{ x: \Norm{x} \leq B}\left|E(x; G,Q) -  E(x; G',Q')  \right|   \\ 
&\;\;\; = \sup_{ x: \Norm{x} \leq B}\left| \frac{N(x;Q)}{D(x;G)} -  \frac{N(x;Q')}{D(x;G')}\right| \\  
&\;\;\; \leq \frac{1}{D(B)} \sup_{ x: \Norm{x} \leq B}| N(x;Q) - N(x;Q')| \,+\,  \frac{1}{D(B)^2} \sup_{ x: \Norm{x} \leq B}|D(x;G)-D(x;G')| \\ 
&\;\;\; \leq \frac{2\varepsilon}{D(B)} \,+\,  \frac{1}{D(B)}\max_{m_N \in \mathcal{M}_N(\varepsilon)} \left|  \int m_N(\lambda,\sigma^2)\dd Q(\lambda, \sigma^2) -  \int m_N(\lambda,\sigma^2)\dd Q'(\lambda, \sigma^2)\right| \\
&\;\;\;\;\; + \; \frac{2\varepsilon}{D(B)^2} \,+\,  \frac{1}{D(B)^2} \max_{m_D \in \mathcal{M}_D(\varepsilon)} \left|  \int m_D(\sigma^2)\dd G(\sigma^2) -  \int m_D(\sigma^2)\dd G'(\sigma^2)    \right|
\end{aligned}
$$
Fix any $m_D \in \mathcal{M}_D(\varepsilon)$. Recalling that $m_D$ is bounded and continuous (on its support), and that $G_K \cd G^*$, we find that
$$\int m_D(\sigma^2)\dd G_K(\sigma^2) \to  \int m_D(\sigma^2)\dd G^*(\sigma^2) \;\text{ as }\; K \to \infty.$$
We can argue analogously for any $m_N \in \mathcal{M}_N(\varepsilon)$ by recalling that 
$H^* \otimes G_K \cd H^* \otimes G^*$. Thus, by taking $K \to \infty$ and then $\varepsilon \to 0$, we have shown~\eqref{eq:evalue_g_star_g_sigma}.
In Lemma~\ref{lemma:npmle_cd} below, we will show that~\eqref{eq:empirical_dbn_converges} implies that
$$
\widehat{G} \cd G^* \text{ almost surely as } K \to \infty,
$$
and so also $\widehat{Q} = \widehat{H} \otimes \widehat{H} \cd H^* \otimes G^*$ almost surely. 
The same arguments as above then show that,
$$
\sup_{ x: \Norm{x} \leq B}\left|E(x;  \widehat{G}, \widehat{H}\otimes \widehat{G}) -  E(x; G^*,H^*\otimes G^*)  \right|  \to 0 \; \text{ almost surely as }\; K \to \infty,
$$
so that by the triangle inequality and~\eqref{eq:evalue_g_star_g_sigma},
$$
\sup_{ x: \Norm{x} \leq B}\left|E(x;  \widehat{G}, \widehat{H}\otimes \widehat{G}) -  E(x; G(\boldsigma^2),H^*\otimes G(\boldsigma^2))  \right|  \to 0 \; \text{ almost surely as }\; K \to \infty.
$$
We can check that the left-hand side is uniformly bounded. Thus, by dominated converge,
$$
\mathbb E^{\mathbb P}\left[ \sup_{ x: \Norm{x} \leq B}\left|E(x;  \widehat{G}, \widehat{H}\otimes \widehat{G}) -  E(x; G(\boldsigma^2),H^*\otimes G(\boldsigma^2))\right|\right] = o(1) \;\text{ as } \; K \to \infty.
$$
The expression on the left-hand side above provides an upper bound on $\mathrm{I}_K(x; B)$ for any $x$ with $\Norm{x} \leq B$, and thus we have shown~\eqref{eq:proof_I}.

\end{proof}

\begin{lemma}[NPMLE convergence]
\label{lemma:npmle_cd}In the setting of this section, suppose that~\eqref{eq:empirical_dbn_converges} holds. Then:
$$
\widehat{G} \cd G^* \text{ almost surely as } K \to \infty.
$$
\end{lemma}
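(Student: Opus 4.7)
The plan is to argue via a standard NPMLE consistency proof, using a subsequence extraction, identifiability of the scaled chi-square mixture, and a uniform strong law of large numbers for the log-likelihood. Since $\mathcal{G}$ consists of probability measures on the compact interval $[\ubar{\sigma}^2, \bar{\sigma}^2]$, it is sequentially compact under weak convergence. So, given any subsequence of $\{\widehat{G}\}_K$, I extract a further (possibly data-dependent) subsequence along which $\widehat{G} \cd \tilde{G}$ for some $\tilde{G} \in \mathcal{G}$, and reduce to showing $\tilde{G} = G^*$ almost surely.

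The argument uses three ingredients. First, \emph{identifiability}: the map $G \mapsto f_G$ from~\eqref{eq:marginal_density_defi} is injective on probability measures on $(0,\infty)$, which follows from analyticity of the Laplace transform $G \mapsto \int (\sigma^2)^{-\nu/2}\exp(-\nu s/(2\sigma^2))\dd G(\sigma^2)$. Second, \emph{regularity}: because $G$ is supported on $[\ubar{\sigma}^2, \bar{\sigma}^2]$ and $\nu = n-1 \geq 2$, the densities $f_G$ are uniformly bounded above and, on each compact subset of $(0,\infty)$, uniformly bounded below, with tails decaying at least as fast as $\exp(-\nu s/(2\bar{\sigma}^2))$. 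Hence $|\log f_G(s)|$ is dominated by a fixed envelope with finite mean under every $p(\cdot\mid \sigma^2)$, $\sigma^2 \in [\ubar{\sigma}^2, \bar{\sigma}^2]$, and $G' \mapsto \int \log f_G(s) f_{G'}(s)\dd s$ is weakly continuous uniformly in $G \in \mathcal{G}$. Third, \emph{uniform strong law}: since the class $\{\log f_G : G \in \mathcal{G}\}$ has finite bracketing entropy (compact parameter space, bounded envelope), and the $\hat{\sigma}_k^2$ are independent with $\hat{\sigma}_k^2 \sim p(\cdot\mid \sigma_k^2)$, a Glivenko--Cantelli-type argument for triangular arrays yields
\begin{equation*}
\sup_{G \in \mathcal{G}}\left|\frac{1}{K}\sum_{k=1}^K \log f_G(\hat{\sigma}_k^2) - \int \log f_G(s)\, f_{G_K}(s) \dd s\right| \to 0 \text{ almost surely.}
\end{equation*}

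Combining these, since $G_K \cd G^*$ and $\log f_G$ is uniformly envelope-dominated, the integral on the right converges uniformly in $G$ to $\int \log f_G(s) f_{G^*}(s)\dd s$. The NPMLE inequality $\sum_k \log f_{\widehat{G}}(\hat{\sigma}_k^2) \geq \sum_k \log f_{G^*}(\hat{\sigma}_k^2)$ (valid because $G^* \in \mathcal{G}$, which is weakly closed) then gives, after passing to the limit along the convergent subsequence,
\begin{equation*}
\int \log f_{\tilde{G}}(s) f_{G^*}(s) \dd s \geq \int \log f_{G^*}(s) f_{G^*}(s) \dd s.
\end{equation*}
By Gibbs' inequality, this forces $f_{\tilde{G}} = f_{G^*}$ almost everywhere, and identifiability yields $\tilde{G} = G^*$. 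The main obstacle is rigorously establishing the uniform strong law in the triangular-array regime where the $\hat{\sigma}_k^2$ are independent but not identically distributed; compactness of $[\ubar{\sigma}^2, \bar{\sigma}^2]$ is what makes both the envelope and the entropy bound finite, and also what makes the upper-semicontinuity of $G' \mapsto \int \log f_G(s) f_{G'}(s) \dd s$ in $G'$ behave uniformly over $G$.
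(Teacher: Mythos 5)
Your proof takes a genuinely different route from the paper's. The paper does not run a Wald-type consistency argument at all: it instead cites an existing Hellinger-rate result for the NPMLE in the compound setting (\citet[Theorem 9$^*$]{ignatiadis2024empirical}), upgrading it to an almost-sure statement $\Dhel^2(f_{G_K}, f_{\widehat{G}}) \to 0$ via Borel--Cantelli, then passes from Hellinger to total variation via $\TV \le \sqrt{2}\,\Dhel$, combines with Scheff\'e (pointwise convergence of the mixture densities implies TV convergence) to get $\TV(f_{\widetilde G}, f_{G^*})=0$ along a weakly convergent subsequence, and finishes with Teicher's identifiability theorem for scale mixtures (after a push-forward $\sigma^2 \mapsto 1/\sigma^2$). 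Your argument instead re-derives consistency from scratch through a uniform strong law for the log-likelihood, the NPMLE optimality inequality, and Gibbs' inequality to force $\DKL{f_{G^*}}{f_{\widetilde G}} \le 0$. The subsequence extraction, the identifiability ingredient, and the envelope/regularity observations all line up with the paper's. What the paper's route buys is that the hard analytic work is entirely absorbed by the cited Hellinger result; what your route buys is self-containment and no reliance on that prior theorem.

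The genuine gap is precisely the step you flag yourself: the almost-sure uniform law of large numbers
\begin{equation*}
\sup_{G \in \mathcal{G}}\left|\frac{1}{K}\sum_{k=1}^K \log f_G(\hat{\sigma}_k^2) - \int \log f_G(s)\, f_{G_K}(s)\,\dd s\right| \to 0 \quad \text{a.s.}
\end{equation*}
is asserted but not proved. The observations $\hat{\sigma}_k^2$ are independent but not identically distributed (each has $\sigma_k^2$ deterministic and changing with $k$), the envelope for $|\log f_G|$ is integrable but not bounded, and you need an almost-sure (not in-probability) statement. A bracketing-entropy argument plausibly closes this, using compactness of $[\ubar{\sigma}^2,\bar{\sigma}^2]$, the uniform lower bound of $f_G$ on compacts, the exponential tail bound, Bernstein- or Hoeffding-type concentration for the bracketed sums, and Borel--Cantelli, but none of this is spelled out, and the non-i.i.d.\ feature means one cannot simply quote the standard Glivenko--Cantelli theorem. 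Until that step is supplied, the chain from the NPMLE inequality to $\int \log f_{\widetilde G}\, f_{G^*} \ge \int \log f_{G^*}\, f_{G^*}$ is not established. (A secondary point worth making explicit: you also need $G \mapsto \int \log f_G(s)\, f_{G^*}(s)\,\dd s$ to be continuous along $\widehat{G}_{K_m} \cd \widetilde G$ to pass the limit on the left-hand side, which again rests on the envelope and weak continuity of $G \mapsto f_G(s)$; this is true here but should be stated.) The paper's proof sidesteps this entire issue because the Hellinger convergence is taken from a result that already handled the non-i.i.d.\ compound setting.
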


\begin{proof}
For two densities $f,h$ on $\mathbb R_+$, we define their squared Hellinger distance ($\Dhel^2(f,h)$) and total variation distance ($\TV(f,h)$) via,
$$
\Dhel^2(f,h) := \frac{1}{2}\int_0^{\infty} \left( \sqrt{f(u)}  - \sqrt{h(u)}\right)^2\,\dd u ,\;\;\quad \TV(f,h):=\frac{1}{2}\int_0^{\infty} \left| f(u)  - h(u)\right|\,\dd u.
$$
By~\eqref{eq:empirical_dbn_converges}, for any $u>0$:
$$ f_{G_K}(u) = \int p(u \mid \sigma^2) \dd G_K(\sigma^2) \to \int p(u \mid \sigma^2 ) \dd G^*(\sigma^2) =  f_{G^*}(u)\; \text{ as } k \to \infty.$$
Here we used the definition of weak convergence along with the fact that $\sigma^2 \mapsto p(u \mid \sigma^2)$ 
is continuous and bounded on $[\ubar{\sigma}^2, \bar{\sigma}^2]$ (this requires that $n \geq 3$). Hence by Scheffe's theorem~\citep[Theorem 16.12.]{billingsley1995probability}
applied to the Lebesgue measure on $(0, \infty)$, we get convergence in total variation distance ($\TV$),
$$\TV( f_{G_K},\, f_{G^*})  \to \, 0\; \text{ as }\; m \to \infty.$$
Meanwhile, \citet[Theorem 9$^*$]{ignatiadis2024empirical} along with the Borel-Cantelli lemma show that
\begin{equation}
\Dhel^2(f_{G_K}, f_{\widehat{G}}) \to 0\; \text{ almost surely as }\; K \to \infty\;  ,
\label{eq:hellinger_convergence}
\end{equation}
where we use the notation from the proof of Theorem~\ref{theorem:eb_asymptotic} and write $G_K = G(\boldsigma^2)$. In what follows we will assume that we are on the (almost sure) event wherein the convergence in~\eqref{eq:hellinger_convergence} holds (and all objects defined henceforth are to be interpreted as functions on the sample space of the underlying probability space).
We also make the dependence of $\widehat{G}$ on $K$ explicit by writing $\widehat{G}_K \equiv \widehat{G}$.

By construction it holds that $(\widehat{G}_K)_{K \geq 1}$ is tight. Next take any subsequence $(\widehat{G}_{K_{m}})_{m \geq 1}$ such that $\widehat{G}_{K_m} \cd \widetilde{G}$
as $m \to \infty$ for some probability measure $\widetilde{G}$. By~\citet[Chapter 25, corollary on page 337]{billingsley1995probability}, 
the weak convergence $\widehat{G}_K \cd G^*$ will follow if we can show that $\widetilde{G} = G^*$.

To see that this holds, first argue as above to show that for any $u>0$
$$ f_{\widehat{G}_{K_m}}(u) = \int p(u \mid \sigma^2) \dd \widehat{G}_{K_m}(\sigma^2) \to \int p(u \mid \sigma^2 ) \dd \widetilde{G}(\sigma^2) =  f_{\widetilde{G}}(u)\; \text{ as } k \to \infty,$$
and also that
$$\TV( f_{\widehat{G}_{K_m}},\, f_{\widetilde{G}})  \to \, 0\; \text{ as }\; m \to \infty.$$
On the other hand:
$$ \TV(f_{\widehat{G}_{K_m}}, \, f_{G_{K_m}} ) \stackrel{(*)}{\leq} \sqrt{2} \Dhel(f_{\widehat{G}_{K_m}}, \, f_{G_{K_m}})   \, \stackrel{(**)}{\to} \, 0\; \text{ as }\; m \to \infty.$$
$(**)$ follows because we are on the event wherein the convergence in~\eqref{eq:hellinger_convergence} holds, and $(*)$ is a standard fact that follows, 
e.g., by the Cauchy-Schwarz inequality. 

Using the triangle inequality, we thus have:
$$
\TV(f_{\widetilde{G}}, \, f_{G^*}) \leq  \TV(f_{\widehat{G}_{K_m}}, \, f_{G_{K_m}}) + \TV( f_{\widehat{G}_{K_m}},\, f_{\widetilde{G}}) + \TV( f_{G_{K_m}},\, f_{G^*}) \to 0\;\text{ as }\; m \to \infty,
$$
i.e., \smash{$\TV(f_{\widetilde{G}}, \; f_{G^*} ) =0$}, which implies 
that $f_{G^*}(u) = f_{\widetilde{G}}(u)$ for almost all $u>0$, and so by continuity, for all $u>0$.
Let $\Gamma^*$ and \smash{$\widetilde{\Gamma}$} be the push-forwards of $G^*$ and \smash{$\widetilde{G}$} under the map $\sigma^2 \mapsto 1/\sigma^2$.  
Then it follows by~\citet[Section 4, Example 1]{teicher1961identifiability} that \smash{$\Gamma^* = \widetilde{\Gamma}$}.
This in turn implies that \smash{$G^* = \widetilde{G}$}.

\end{proof}

\subsection{Proof of Theorem~\ref{prop:c9-asym}}

\begin{proof}[Proof.]
Take any $\p \in \cP$. By the central limit theorem, the law of large numbers and Slutsky's theorem, $\sqrt{n}\bar{X}^{(n)}/S^{(n)}$ converges in distribution to $\mathrm{N}(0,1)$ as $n \to \infty$. 
We may verify that
$$ \mathbb E^{\mathbb P} \left[\exp\left(\lambda Z - \frac{\lambda^2}{2}\right)\right] =1,$$
for $Z \sim \mathrm{N}(0,1)$. 
By Example~\ref{example:weak_convergence_to_asymptotic} it follows that $E^{(n)}$ are asymptotic e-variables. To upgrade this result to strongly asymptotic, we can
apply Theorem 2.5 of~\citet{gine1997when} to show that the sequence $(E^{(n)})_{n \in \N}$ is uniformly integrable and then conclude the proof by Proposition~\ref{proposition:uniform_integrability_strong_e}.

Next take an alternative $\q$ with $\E^{\mathbb Q}[X^i]> 0$, $\var^{\mathbb Q}(X^i) < \infty$, and suppose that $\lambda >0 $. Observe the following. By the strong law of large numbers
$$ 
\bar{X}^{(n)} \to \E^{\mathbb Q}[X^i]>0\;\; \text{ almost surely as } \;\; n \to \infty,
$$
and
$$
S^{(n)} \to \sqrt{  \var^{\mathbb Q}({X^i}) + \E^{\mathbb Q}[X^i]^2 }>0  \;\; \text{ almost surely as } \;\; n \to \infty.
$$
The above imply that
$$
\lambda \frac{\sqrt{n} \bar{X}^{(n)}}{S^{(n)}} \to \infty\;\; \text{ almost surely as } \;\; n \to \infty,
$$
which also means that $E^{(n)}$ goes to $\infty$ with probability 1.
The argument for $\tilde{E}^{(n)}$ is analogous and omitted. The argument for $\hat{\sigma}^{(n)}$ follows from Example~\ref{example:weak_convergence_to_asymptotic}.
\end{proof}

\subsection{Proof of Proposition~\ref{proposition:simultaneous_ttest_variance_compound}}

\begin{proof}
Fix an arbitrary distribution $\p \in (\bigcup_k \cP_k) \bigcap \cP^{\mathrm{B}}$ and write $\sigma_k^2 = \sigma_k^2(\p)$, omitting the dependence on $\p$.
For each $k$, we define
$$
\tilde{X}_{k}^i := X_{k}^i - \mu_{k}(\p),
$$
so that $\tilde{X}_{k}^i = X_k^i$ for all $k \in \mathcal{N}(\p)$. We also define $\tilde{S}_k^{(n)}$ analogously to $S_k^{(n)}$ in~\eqref{eq:one_sample_summary_mtp_rewrite} but with the $X_{k}^i$ replaced by $\tilde{X}_k^i$. Then,
note that 
\begin{equation}
\label{eq:asymp_comp_evalue_renormalized}
\frac{1}{K}\sum_{k: \p \in \cP_k} E_k^{(n)} =  \frac{\sum_{k:\p \in \cP_k } \big(S_k^{(n)}\big)^2}{\sum_{k \in \mathcal{K}}\big(\hat{\sigma}^{(n)}_{k}\big)^2}     \leq \frac{\sum_{k \in \mathcal{K} } \big(\tilde{S}_k^{(n)}\big)^2}{\sum_{k \in \mathcal{K}}\big(\hat{\sigma}^{(n)}_{k}\big)^2}.
\end{equation}
We will argue that the right hand side of the above expression converges in probability (and thus also in distribution) to the constant $1$. Arguing as in Proposition~\ref{prop:converge-dist} it will then follow that $E_1^{(n)},\ldots, E_K^{(n)}$ are asymptotic compound e-variables. We claim that
$$
\frac{n\sum_{k \in \mathcal{K} } \big(\tilde{S}_k^{(n)}\big)^2}{ n\sum_{k \in \mathcal{K} } \sigma_k^2} \stackrel{\p}{\to} 1 \;\;\text{ as }\;\; m \to \infty.
$$
Writing $U^{(m)} = n\sum_{k \in \mathcal{K} } \big(\tilde{S}_k^{(n)}\big)^2 = \sum_{k \in \mathcal{K}} \sum_{i=1}^n (\tilde{X}_{k}^i)^2$, we see that $\E^{\p}[U^{(m)}] = n\sum_{k \in \mathcal{K} } \sigma_k^2$. Moreover, by our assumptions and Marcinkiewicz–Zygmund, there exists another constant $C'$ that depends only on $\delta$ such that:
$$
\E^{\p}[U^{(m)}] \geq cnK,\quad \E^{\p}[|U^{(m)} - \E^{\p}[U^{(m)}]|^{1+\delta}] \leq C' C n K.
$$
Thus, since $n K \to \infty$ and using Markov's inequality, it follows that $U^{(m)}/ \E^{\p}[U^{(m)}]$ converges in probability to $1$ as desired. We can analogously argue that:
$$
\frac{n\sum_{k \in \mathcal{K} } \big(\hat{\sigma}^{(n)}_{k}\big)^2}{ n\sum_{k \in \mathcal{K} } \sigma_k^2} \stackrel{\p}{\to} 1 \;\;\text{ as }\;\; m \to \infty.
$$
Thus implies that $E_1^{(n)},\ldots, E_K^{(n)}$ are asymptotic compound e-variables.

Next fix $\p \in (\bigcup_k \cP_k) \bigcap \cP^{\mathrm{N}}$. It suffices to prove uniform integrability of the right hand side in~\eqref{eq:asymp_comp_evalue_renormalized} for all $m$ such that $(n-1)K \geq 5$  (using an argument analogous to the one of Proposition~\ref{proposition:uniform_integrability_strong_e}). We prove uniform integrability by controlling the following $\eta$-th moment for $\eta \in (1, 1.25)$:
$$
\E^{\p}\left[ \left( \frac{\sum_{k \in \mathcal{K} } \big(\tilde{S}_k^{(n)}\big)^2}{\sum_{k \in \mathcal{K}}\big(\hat{\sigma}^{(n)}_{k}\big)^2} \right)^{\eta} \right] \leq \E^{\p}\left[ \left(\sum_{k \in \mathcal{K} } \big(\tilde{S}_k^{(n)}\big)^2\right)^{2\eta}\right]^{1/2} \E^{\p}\left[ \left( \frac{1}{\sum_{k \in \mathcal{K}}\big(\hat{\sigma}^{(n)}_{k}\big)^2} \right)^{2\eta} \right]^{1/2}.
$$
Notice that by our assumptions, $\sigma_k^2=\sigma_k^2(\p)$ is both uniformly lower bounded and upper bounded for all $k$. Ignoring multiplicative constants, it suffices to bound:
$$
\E^{\p}\left[ \left(\sum_{k \in \mathcal{K} } n\big(\tilde{S}_k^{(n)}/\sigma_k\big)^2\right)^{2\eta}\right]^{1/2} \E^{\p}\left[ \left( \frac{1}{\sum_{k \in \mathcal{K}}(n-1)\big(\hat{\sigma}^{(n)}_{k}/\sigma_k \big)^2} \right)^{2\eta}\right]^{1/2}. 
$$
This can be accomplished by noting that,
$$\sum_{k \in \mathcal{K}} n\big(\tilde{S}_k^{(n)}/\sigma_k\big)^2 \sim \chi^2_{nK},\;\;\;\;\; \sum_{k \in \mathcal{K}}(n-1)\big(\hat{\sigma}^{(n)}_{k}/\sigma_k \big)^2 \sim \chi^2_{(n-1)K} ,$$
where $\chi^2_d$ denotes the chi-square distribution with $d$ degrees of freedom. The remainder of the calculation amounts to direct evaluation of the $2\eta$-th moment (resp.~negative $2\eta$-th moment) of chi squared random variables.
\end{proof}

\end{document}